\theoremstyle{thmstyleone}%
\newtheorem{theorem}{Theorem}[section]% meant for sectionwise numbers
\newtheorem{proposition}[theorem]{Proposition}%
\newtheorem{corollary}[theorem]{Corollary}%
\newtheorem{lemma}[theorem]{Lemma}%
\newtheorem{definition}[theorem]{Definition}%
\newtheorem{assumption}[theorem]{Assumption}%
\theoremstyle{thmstyletwo}%
\newtheorem{example}[theorem]{Example}%
\newtheorem{remark}[theorem]{Remark}%
\theoremstyle{thmstylethree}%
\newcommand{\narrowunderline}[1]{\mkern1mu\underline{\mkern-1mu#1\mkern-2mu}\mkern2mu }
\newcommand{\G}{\mathcal{G}}
\DeclareMathOperator{\Res}{Res}
\def\ps@titlepage{%
      \def\@oddhead{%
      \vbox to 0pt{\vspace*{-38pt}%
    }}%
     \let\@evenhead\@oddhead%
     \def\@oddfoot{\vbox to 18pt{\vfill\reset@font\rmfamily\hfil\thepage\hfil}}
     \def\@evenfoot{}}%
\def\ps@headings{%
    \def\@oddfoot{\hfill}%
    \let\@evenfoot\@oddfoot%
      \def\@evenhead{%
      \vbox to 0pt{\vspace*{-39pt}%
         \hbox to \hsize{\hfill \hfill}}\par%%
      \hspace*{-\textwidth}\hbox to \hsize{\headerfont\thepage\qquad\rightmark\hfill}}%
      \def\@oddhead{%
      \vbox to 0pt{\vspace*{-39pt}%
         \hbox to \hsize{\hfill \hfill}}\par%%
      \hspace*{-\textwidth}\hbox to \hsize{\headerfont\hfill\leftmark\qquad\thepage}}%
      \let\@mkboth\markboth%
      }%
\begin{document}

\title[A Laplacian to compute intersection numbers on 
$\overline{\mathcal{M}}_{g,n}$ and \dots]{A Laplacian to compute
  intersection numbers on 
  $\overline{\mathcal{M}}_{g,n}$ and correlation functions in NCQFT}

\author[1]{\fnm{Harald} \sur{Grosse}}\email{harald.grosse@univie.ac.at}
\author*[2]{\fnm{Alexander} \sur{Hock}\href{https://orcid.org/0000-0002-8404-4056}{\includegraphics[width=8pt]{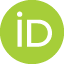}}}
\email{alexander.hock@maths.ox.ac.uk}
\author[3]{\fnm{Raimar} \sur{Wulkenhaar}\href{https://orcid.org/0000-0002-3371-9826}{\includegraphics[width=8pt]{ORCID-iD_icon-64x64.png}}}\email{raimar@math.uni-muenster.de}

\affil[1]{\orgdiv{Fakult\"at f\"ur Physik},
  \orgname{Universit\"at Wien},
  \orgaddress{\street{Boltzmanngasse 5},
  \postcode{1090} \city{Vienna},
  \country{Austria}}}

\affil[2]{\orgdiv{Mathematical Institute},
  \orgname{University of Oxford},
  \orgaddress{\street{Andrew Wiles Building, \\Woodstock Road},
    \postcode{OX2 6GG} \city{Oxford},
    \country{United Kingdom}}}

\affil[3]{\orgdiv{Mathematisches Institut},
  \orgname{Westf\"alische Wilhelms-Universit\"at},
  \orgaddress{\street{Einsteinstra\ss{}e 62},
  \postcode{48149} \city{M\"unster},
  \country{Germany}}}

\abstract{\unboldmath Let $F_g(t)$ be the generating function of
  intersection numbers of $\psi$-classes on the moduli spaces
  $\overline{\mathcal{M}}_{g,n}$ of stable complex curves of genus
  $g$. As by-product of a complete solution of all non-planar
  correlation functions of the renormalised $\Phi^3$-matrical QFT
  model, we explicitly construct a Laplacian $\Delta_t$ on a space of
  formal parameters $t_i$ which satisfies
  $\exp(\sum_{g\geq 2} N^{2-2g}F_g(t))=\exp((-\Delta_t+F_2(t))/N^2)1$
  as formal power series in $1/N^2$. The result is achieved via
  Dyson-Schwinger equations from noncommutative quantum field theory
  combined with residue techniques from topological recursion.  The
  genus-$g$ correlation functions of the $\Phi^3$-matricial QFT model
  are obtained by repeated application of another differential
  operator to $F_g(t)$ and taking for $t_i$ the renormalised moments
  of a measure constructed from the covariance of the model.}

\keywords{intersection numbers, matrix models, topological recursion, 
  Dyson-Schwinger equations, noncommutative geometry, quantum field theory}

\pacs[MSC Classification]{14C17, 32G15, 32G81, 81R60}

\maketitle
\markboth{\textit{A Laplacian to compute intersection numbers on 
    $\overline{\mathcal{M}}_{g,n}$ and \dots\hspace*{1.3cm}}}{%
  \textsc{H. Grosse, A. Hock \& R. Wulkenhaar}}

\section{Advertisement}

This paper completes the reverse engineering of a special quantum
field theory on noncommutative geometries. The final step could be of
interest in other areas of mathematics:
\begin{theorem}
\label{Thm:Zstable}
Let  
\begin{align*} 
F_g(t_0,t_2,t_3,\dots,t_{3g-2}) :=\sum_{(k)}
\frac{\langle\tau_2^{k_2}\tau_3^{k_3}\dots\tau_{3g-2}^{k_{3g-2}}
\rangle}{(1-t_0)^{2g-2+\sum_i k_i}}
\prod_{i=2}^{3g-2} \frac{t_i^{k_i}}{k_i!},\quad \sum_{i\geq 2} (i-1) k_i=3g-3,
\end{align*}
be the generating function of intersection numbers of
$\psi$-classes\footnote{Strictly speaking, $F_g$ generates
    intersection numbers of $\kappa$-classes on
    $\overline{\mathcal{M}}_{g,0}$, but these are in one-to-one
    correspondence with a subset of intersection numbers of
    $\psi$-classes. See Remark~\ref{rmk:intersection}.} on the
moduli spaces $\overline{\mathcal{M}}_{g,n}$ of stable complex curves
of genus $g$ \cite{Witten:1990hr, Kontsevich:1992ti}.  
The stable partition function satisfies, as a formal power
  series in $N^{-2}$,
\begin{align}
\fbox{$\displaystyle \exp\Big(\sum_{g=2}^\infty N^{2-2g} F_g(t)\Big)
=\exp\Big(-\frac{1}{N^2} \Delta_t +\frac{F_2(t)}{N^2}\Big) 1 $}
\end{align}
where $\displaystyle F_2(t)= 
\frac{7}{240} \cdot \frac{t_2^3}{3! T_0^5} + \frac{29}{5760} \frac{t_2
  t_3}{T_0^4} 
+ \frac{1}{1152} \frac{t_4}{T_0^3}$ 
generates the intersection numbers of genus 2, 
$T_0:=(1-t_0)$
and $\Delta_t=-\sum_{i,j}\hat{g}^{ij}\partial_i\partial_j-
\sum_{i}\hat{\Gamma}^{i}\partial_i$  is a Laplacian on 
the formal parameters $t_0,t_2,t_3,\dots$ given by
\begin{align*}
\Delta_t&:=-\Big(\frac{2t_2^3}{45 T_0^3} + \frac{37 t_2t_3}{1050 T_0^2} +
\frac{t_4}{210 T_0}\Big)\frac{\partial^2}{\partial t_0^2}
-\Big(
\frac{2t_2^3}{27 T_0^4} + \frac{1097 t_2t_3}{12600 T_0^3} 
+\frac{41 t_4}{2520 T_0^2}\Big)
\frac{\partial}{\partial t_0}
\nonumber
\\
&
-\sum_{k=2}^\infty \Big(
\Big(\frac{2t_2^2}{45 T_0^3} 
+ \frac{2 t_3}{105 T_0^2}\Big) t_{k+1}
+\frac{t_2 \mathcal{R}_{k+1}(t) }{2T_0}
+\frac{3 \mathcal{R}_{k+2}(t)}{2(3+2k)}
\Big)
\frac{\partial^2}{\partial t_k \partial t_0}
\nonumber
\\
&
-\sum_{k,l=2}^\infty \Big(
\frac{ t_2 t_{k+1}t_{l+1}}{90 T_0^2}
+ \frac{t_{k+1} \mathcal{R}_{l+1}(t)}{4T_0}
+ \frac{t_{l+1} \mathcal{R}_{k+1}(t)}{4T_0}
\nonumber
\\
&\hspace*{5cm}
+\frac{(1{+}2k)!!(1{+}2l)!!\mathcal{R}_{k+l+1}(t)}{4(1{+}2k{+}2l)!!}
\Big)
\frac{\partial^2}{\partial t_k\partial t_l}
\nonumber
\\
&
-\sum_{k=2}^\infty \Big(
\Big(\frac{19 t_2^2}{540 T_0^4} 
+ \frac{5 t_3}{252 T_0^3}\Big) t_{k+1}
+\frac{t_2 \mathcal{R}_{k+1}(t)}{48T_0^2}
+\frac{\mathcal{R}_{k+2}(t)}{16(3+2k) T_0}
+ \frac{t_2t_{k+2}}{90T_0^3}
\nonumber
\\
&\hspace*{5cm}
+\frac{\mathcal{R}_{k+2}(t)}{2T_0}
\Big)\frac{\partial}{\partial t_{k}}
\end{align*}
with  $\mathcal{R}_m(t):=  
\displaystyle \frac{2}{3} 
\sum_{k=1}^m \frac{(2m{-}1)!! \,kt_{k+1}}{(2k{+}3)!! T_0}  
\sum_{l=0}^{m-k} \frac{l!}{(m{-}k)!} B_{m-k,l}
\Big(\Big\{\frac{j!t_{j+1}}{(2j{+}1)!!T_0}\Big\}_{j=1}^{m-l+1}\Big)$.
\end{theorem}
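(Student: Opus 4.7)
The plan is to reduce the boxed identity to a recursion expressing $F_g$ in terms of $F_2,\ldots,F_{g-1}$ and the operator $\Delta_t$, and then to derive that recursion from the Dyson--Schwinger equations of the renormalised $\Phi^3$-matricial QFT announced in the abstract.

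First, I would expand both sides as formal power series in $N^{-2}$. With $L:=F_2-\Delta_t$, the right-hand side reads $\sum_{k\geq 0}L^k\cdot 1/(k!\,N^{2k})$, and $\Delta_t\cdot 1=0$ gives $L\cdot 1=F_2$, $L^2\cdot 1=F_2^2-\Delta_tF_2$, $L^3\cdot 1=F_2^3-F_2\Delta_tF_2-\Delta_tF_2^2+\Delta_t^2F_2$, and so on. Equating like powers of $N^{-2}$ yields at lowest order the trivial $F_2=F_2$, and at each subsequent order a universal recursion
\begin{equation*}
F_g=-\Delta_tF_{g-1}+P_g\bigl(F_2,\ldots,F_{g-2};\,\Delta_tF_2,\ldots,\Delta_t^{g-2}F_2\bigr),
\end{equation*}
with $P_g$ a fixed polynomial (for instance $F_3=-\tfrac{1}{2}\Delta_tF_2$). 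Proving the theorem therefore reduces to establishing this recursion for every $g\geq 3$.

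Second, I would derive this same recursion from the non-planar correlation functions of the $\Phi^3$-matricial model. These admit a $1/N^2$ expansion whose coefficients coincide, after the change of variables from matrix-model moments to the renormalised parameters $t_i$ of the covariance, with the $F_g$ in the stable range. Using the spectral curve of the model and the residue techniques of topological recursion, the Dyson--Schwinger loop equations determine each genus-$g$ free energy from genus-$h<g$ data and products of such. Integration of these loop equations against the appropriate one-form and repackaging of insertions of the $k$-th moment as $\partial/\partial t_k$ converts the loop equations into a differential identity for $F_g$ of exactly the recursive shape extracted above.

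Third, I would match the coefficients of the differential operator produced by this residue calculus with those displayed in $\Delta_t$. The double factorials $(2k+1)!!$ and powers of $T_0=(1-t_0)$ are produced by the local expansion of the spectral curve at its ramification point, while the Bell polynomials $B_{m-k,l}$ in $\mathcal{R}_m(t)$ encode the Fa\`a di Bruno formula expressing the original covariance moments as composite series in the renormalised $t_i$. Each block of $\Delta_t$ --- the $\partial_{t_0}^2$, $\partial_{t_k}\partial_{t_0}$, $\partial_{t_k}\partial_{t_l}$ pieces and the first-order $\partial_{t_0}$, $\partial_{t_k}$ pieces --- corresponds to a specific contribution of the topological recursion, respectively: two branches meeting at the ramification, one branch and one smooth marked point, two smooth marked points, and the Christoffel-type connection terms arising from the change of variables.

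The main obstacle is this last combinatorial matching. The Dyson--Schwinger equations a priori produce a long sum of rational functions of the original matrix-model moments, and one must show that after the non-polynomial change of variables to the renormalised $t_i$ everything telescopes into the compact closed form given by $\mathcal{R}_m(t)$ via Bell polynomials. Once that matching is carried out, an induction on $g$ closes the argument: the base case $g=2$ is the explicit $F_2(t)$ in the statement, and the inductive step is exactly the recursion established in the first step.
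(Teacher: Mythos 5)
Your overall architecture --- Dyson--Schwinger equations, residue inversion of the integral operator, repackaging moment insertions as $t$-derivatives --- is indeed the paper's architecture, and your reduction of the boxed identity to the order-by-order recursion $\mathcal{Z}_g=\frac{1}{(g-1)!}(F_2-\Delta_t)^{g-1}1$ is a legitimate first step. But the two ingredients that actually make that recursion come out of the loop equations are missing, and they are the substance of the proof. First, the second-order part of $\Delta_t$ originates entirely from the single term $\G_{g-1}(z|z)$ in the genus-$g$ equation \eqref{1punktneuV} together with the products $\G_h\G_{g-h}$; to turn these into a second-order differential operator acting on $\exp(\sum_g V^{2-2g}F_g)$ one needs the boundary creation operator of Definition~\ref{DefOp} and the two facts $\G_g(z)=(2\lambda)^4\hat{\mathrm{A}}^{\dag}_zF_g$ and $\G_{g-1}(z|z)=(2\lambda)^6(\hat{\mathrm{A}}^{\dag}_z+\tfrac{1}{\varrho_0z^4}\partial_z)\hat{\mathrm{A}}^{\dag}_zF_{g-1}$ (Theorem~\ref{finaltheorem}, Proposition~\ref{prop:F}). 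This is established by the long combinatorial computation in Appendix~\ref{appendix1} and is exactly the step you describe as ``repackaging''; asserting it does not supply it. Second, after inverting $\hat{K}_z$ and pairing with the annihilation operator, the left-hand side becomes $\hat{\mathrm{N}}\mathcal{Z}$ with $\hat{\mathrm{N}}=-\sum_l\varrho_l\partial_{\varrho_l}$, and only the homogeneity $\hat{\mathrm{N}}F_g=(2g-2)F_g$ (a consequence of the $(3g-3)$-weighted structure in Propositions~\ref{structure-Gz} and \ref{prop:F}) converts this into $2V^{-2}\frac{d}{dV^{-2}}\mathcal{Z}$, i.e.\ into the parabolic equation whose solution is the exponential with an $N$-independent operator. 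Without identifying this Euler-characteristic grading you do not obtain the factor $(g-1)$ on the left of your recursion, and the argument does not close. You also omit the conjugation by $\exp(F_1)=\varrho_0^{-1/24}$ needed to pass from $\mathcal{Z}^{np}_V$ to the stable partition function, which is responsible for a large fraction of the specific rational coefficients in $\Delta_t$.

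Your third step also contains a concrete misattribution. The blocks $\partial_{t_0}^2$, $\partial_{t_k}\partial_{t_0}$, $\partial_{t_k}\partial_{t_l}$ of $\Delta_t$ are not separate topological-recursion contributions (two branches at the ramification, branch plus marked point, two marked points); they are merely the separation of the index $l=0$ from $l\geq 1$ in the single operator $(\hat{\mathrm{A}}^{\dag}_z)^2$ after taking residues. Likewise the first-order terms are not Christoffel symbols of the change of variables --- the substitution $\varrho_0=1-t_0$, $\varrho_l=-t_{l+1}/(2l+1)!!$ is affine and diagonal, so it produces no connection terms at all; they arise from the $\varrho$-dependence of the coefficients of $\hat{\mathrm{A}}^{\dag}_z$ (the operator does not commute with itself term by term) and from the $F_1$-conjugation. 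The Bell polynomials in $\mathcal{R}_m(t)$ do come from Fa\`a di Bruno applied to the reciprocal of $\frac12(\G_0(z)-\G_0(-z))=\sum_l\varrho_lz^{2l+1}$, as you guessed, but that is the only part of your coefficient-matching sketch that aligns with the actual computation.
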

\noindent
The $F_g(t)$ are recursively extracted from 
$\mathcal{Z}_g(t):=\frac{1}{(g-1)!} (-\Delta_t+F_2(t))^{g-1} 1$ via
\begin{align*}
F_g(t)&=\textstyle \mathcal{Z}_g(t)
- \frac{1}{(g-1)!}\sum_{k=2}^{g-1} B_{g-1,k}\big(
\big\{h! F_{h+1}(t)\}_{h=1}^{g-k}\big)
\\*
&= \textstyle \mathcal{Z}_g(t)- \frac{1}{(g-1)!}
\sum_{k=2}^{g-1} (-1)^{k-1}(k-1)!B_{g-1,k}\big(
\big\{h! \mathcal{Z}_{h+1}(t)\}_{h=1}^{g-k}\big).
\end{align*}
Here and in Theorem \ref{Thm:Zstable}, $B_{m,k}(\{x\})$ are the 
Bell polynomials (see Definition~\ref{def:Bell}). These equations are
easily implemented in any computer algebra system.

Theorem~\ref{Thm:Zstable} seems to be closely related with 
$\exp(\sum_{g\geq 0} F_g)= \exp(\hat{W})1$ proved by Alexandrov
\cite{Alexandrov:2010bn}\footnote{We thank Ga\"etan Borot for
bringing this reference to our attention.}, where
$\hat{W}:=\frac{2}{3}\sum_{k=1}^\infty (k+\frac{1}{2}) t_k
\hat{L}_{k-1}$ involves the generators $\hat{L}_n$ of the Virasoro
algebra. Including  $N$ and moving
$\exp(N^2 F_0+F_1)$ to the other side, our $\Delta_t$ is in principle
obtained via Baker-Campbell-Hausdorff formula from Alexandrov's
equation. Of course, evaluating the necessary commutators is not viable.

Theorem~\ref{Thm:Zstable} suggests several fascinating questions which we
haven't studied yet:
\begin{itemize}\itemsep 0pt
\item Is $\hat{\Gamma}^i$ a Levi-Civita connection for $\hat{g}^{ij}$, i.e.
$\hat{\Gamma}^i= \sum_j \hat{g}^{ij}\sqrt{\det  \hat{g}^{-1}}
\partial_j (\frac{1}{\sqrt{\det  \hat{g}^{-1}}})$? Here
$\det \hat{g}^{-1}$ would be the determinant of $( \hat{g}^{ij})$, whatever
this means. 

\item Is there a reasonable definition of a volume $\int dt 
  \frac{1}{\sqrt{\det  \hat{g}^{-1}(t)}}$?

\item Is it possible to prove that $\sum_{g=2}^\infty N^{2-2g} F_g(t)$
  is Borel summable for $t_l<0$?

\end{itemize}

Theorem~\ref{Thm:Zstable} is a by-product of our effort to 
construct the non-planar sector of
the renormalised $\Phi^3_D$-matricial quantum field theory in any dimension 
$D\in \{0,2,4,6\}$. These models are closely related to the Kontsevich model 
\cite{Kontsevich:1992ti} so that a link to intersection numbers 
is not surprising.

\section{Introduction}

Matrix models have a huge scope of applications
\cite{Brezin:1977sv,DiFrancesco:2004qj}, ranging from combinatorics
over 2D quantum gravity \cite{Gross:1989vs,DiFrancesco:1993cyw} up to
quantum field theory on noncommutative spaces
\cite{Langmann:2002cc,Langmann:2003if,Grosse:2004yu,
  Grosse:2005ig,Grosse:2006qv,Grosse:2006tc,Disertori:2006nq,
  Grosse:2012uv}. Of particular interest is the Kontsevich model
\cite{Kontsevich:1992ti}, which was designed to prove \mbox{Witten's}
conjecture \cite{Witten:1990hr} that the generating function of
intersection numbers of $\psi$-classes on the moduli space
$\overline{\mathcal{M}}_{g,n}$ of stable complex curves satisfies the
KdV equations. See also \cite{Witten:1991mn,
  Lando:2004??,Eynard:2016yaa}.

More recent investigations of matrix models led to the discovery of a 
universal structure called \textit{topological recursion}
\cite{Chekhov:2006vd,Eynard:2007kz}. Topological recursion was 
subsequently identified in 
many different areas of mathematics and theoretical physics
\cite{Eynard:2014zxa,Eynard:2016yaa}. The
Kontsevich model is, next to the Hermitian 1-matrix model
(to which it is related via  
Miwa transformation; see e.g.\ \cite{Ambjorn:1992gw}), the most
basic example for topological recursion.

On the other hand, renormalisation of quantum field theories on
noncommutative geometries generically leads to matrix models similar
to the Kontsevich model. The crucial difference is that convergence of
the (usually) formal series in the coupling constant is addressed, and
achieved by renormalisation
\cite{Grosse:2005ig,Grosse:2006qv,Grosse:2006tc}.  Renormalisation is
sensitive to the dimension encoded in the covariance of the matrix
model. For historical reasons, namely the perturbative renormalisation
\cite{Grosse:2004yu} of the $\Phi^4$-model on Moyal space and its
vanishing $\beta$-function \cite{Disertori:2006nq}, also the quartic
analogue of the Kontsevich model was intensely studied.  In
\cite{Grosse:2012uv} the simplest topological sector was reduced to a
closed equation for the 2-point function. This equation was recently
solved in \cite{Grosse:2019jnv} after understanding the pattern behind
the solution for the 2D-Moyal space \cite{Panzer:2018tvy}. On the
4D-Moyal space, the solution for the 2-point function was derived in
\cite{Grosse:2019qps}, which has resolved the triviality problem for
this specific noncommutative $\Phi^4_4$-model.  All correlation
functions with simplest topology (genus $g=0$ and one boundary) can be
explicitly described by a nested Catalan problem
\cite{deJong:2019oez}.

In \cite{Grosse:2016pob,Grosse:2016qmk} these methods developed for
the $\Phi^4$-model were reapplied to the cubic (Kontsevich-type)
model.  The new tools, together with the Makeenko-Semenoff solution
\cite{Makeenko:1991ec} of a non-linear integral equation, permitted an
exact solution of all planar (i.e.\ genus-$0$) renormalised
correlation functions in dimension $D\in\{2,4,6\}$. In particular,
exact (and surprisingly compact) formulae for planar correlation
functions with $B\geq 2$ boundary components were obtained. The
simplicity of the formulae \cite{Grosse:2016pob} for $B\geq 2$
suggests an underlying pattern.  It is traced back to the universality
phenomena captured by topological recursion\footnote{We thank Roland
  Speicher for the hint that there might be a relation between our
  work and topological recursion.}.  We refer to the book
\cite{Eynard:2016yaa}.

In this article we give the complete description of the non-planar 
sector of the renormalised $\Phi_D^3$-model. The notation 
defined in \cite{Grosse:2016qmk} will
be used and recalled in section~\ref{sec:setup}. We borrow from 
topological recursion the notational simplification to complex variables 
$z$ for the previous $\sqrt{X+c}$ and the vision that the 
correlation functions are holomorphic 
in $z \in \mathbb{C}\setminus \{0\}$
(see section \ref{sec:changeofvariables}). Knowing this, we proceed 
however in a different way. In section~\ref{sec:loop} we introduce
our main tool, a boundary creation 
operator $\hat{\mathrm{A}}^{\dag g}_{z_1,\dots,z_B}$ which, when applied to 
a genus-$g$ correlation function $\G_g(z_1|\dots|z_{B-1})$ 
with $B-1$ boundaries 
labelled $z_1,\dots,z_{B-1}$, creates a $B^{\text{th}}$ boundary 
labelled $z_B$. The existence of such an operator is suggested by the 
`loop insertion operator' in topological recursion \cite{Eynard:2016yaa}.

We rely on the sequence $\{\varrho_l\}_{l\in \mathbb{N}}$ of moments 
of a measure arising from the renormalised planar $1$-point function 
\cite{Grosse:2016pob,Grosse:2016qmk}. This sequence 
is uniquely defined by the renormalised covariance of the model, 
the renormalised coupling constant and the dimension $D\in \{0,2,4,6\}$. 
The boundary creation operator acts on Laurent polynomials in the $z_i$ 
with coefficients in rational functions of the $\varrho_l$. The heart 
of this paper is a combinatorial proof, independent of topological 
recursion,  that the boundary creation operator does what it should
(Theorem~\ref{finaltheorem}, portioned into Lemmas proved 
in an appendix). It is then (to our taste) considerably easier compared with 
topological recursion to derive in section~\ref{sec:non-planar-1pt}
structural results about the $\G_g(z_1|\dots|z_{B})$ such as 
the degree of the Laurent polynomials, the maximal number of occurring 
$\{\varrho_l\}$ and the weight of the rational function.

The main part of section \ref{sec:non-planar-1pt} is devoted to
the solution of $\G_g(z)$ for $g\geq 1$.
Starting point a Dyson-Schwinger type equation (\ref{1punktneuV}),
\begin{align}
    \hat{K}_z\G_g(z)+ \lambda\sum_{h=1}^{g-1} \G_{h}(z)\G_{g-h}(z)
+8\lambda^4 \hat{\mathrm{A}}^{\dag \,g{-}1}_{z,z}
\G_{g-1}(z)=0,
\label{DSE-G}
\end{align}
where $\hat{K}_z$ an integral operator. Thus, all $\G_g(z)$ can be
recursively evaluated if $\hat{K}_z$ can be inverted. Topological
recursion tells us that the inverse is a residue combined with a
special kernel operator. We give a direct combinatorial proof that the 
same method works in our case. 

We show in section~\ref{sec:freeenergy} that the $\G_g(z)$
arise for $g\geq 1$ by application of the boundary creation operator
to a uniquely defined `free energy' $F_g(\varrho)$. These
$F_g(\varrho)$ are characterised by 'only' $p(3g-3)$ rational numbers,
where $p(n)$ is the number of partitions of $n$. We show in
  section~\ref{sec:Laplacian} that \eqref{DSE-G} 
can be written as a second-order differential operator acting on
$\exp(\sum_{g\geq 1} N^{2-2g} F_g)$ in which it is convenient to
eliminate $F_1$. The result is Theorem~\ref{Thm:Zstable} expressed in
terms of $\varrho_0=1-t_0$ and $\varrho_l
=-\frac{t_l+1}{(2l+1)!!}$. In other words, to construct the non-planar
sector of the $\Phi^3_D$-matricial QFT model one has to replace the
formal parameters $t_l$ in the generating function $F_g(t)$ of
intersection numbers by precisely determined moments $\{\varrho_l\}$
resulting from the renormalisation of the planar sector of the model.
We finish with a discussion of the (deformed) Virasoro algebra in
section~\ref{sec:Virasoro} and a short summary (section~\ref{sec:summary}).

In the meantime an analogue of the boundary creation operator
  $\hat{\mathrm{A}}^{\dag g}_{z_J,z}$ was found for the matricial $\Phi^4$-model
  \cite{Hock:2020rje}. In this setting, two of us with J. Branahl
  introduced generalised correlation functions which satisfy an
  interwoven system of Dyson-Schwinger equations
  \cite{Branahl:2020yru}.  The solution for low $|\chi|$ provided
  strong evidence for the conjecture that the matricial $\Phi^4$-model
  (in that paper called quartic analogue of the Kontsevich model)
  satisfies blobbed topological recursion \cite{Borot:2015hna}, a
  generalisation of topological recursion. The proof that the genus
  $g=0$ sector satisfies blobbed topologcal recusion was achieved by
  two of us in \cite{Hock:2021tbl}, based on a functional relation
  related to the $x$-$y$-symmetry in topological recursion
  \cite{Eynard:2013csa,Hock:2022wer}. Some geometrical aspects of
  these generalised correlation functions were analysed in
  \cite{Branahl:2020uxs}.  Furthermore, the free energies $F_0$ and
  $F_1$ are computed in \cite{Branahl:2021uea}. All these results are
  reviewed in \cite{Branahl:2021slr}. It is also worth to mention that
  the complex $\Phi^4$-model with complex instead of Hermitian matrix
  (also known as the LSZ-model \cite{Langmann:2003if} from
  noncommutative geometry
  perspective) was proved by one of us with J. Branahl to be governed
  by topological recusion \cite{Branahl:2022uge}.

\section{Summary of previous results}

\label{sec:setup}
\subsection{Setup}

We follow \cite{Grosse:2016qmk} to introduce tuples
$\narrowunderline{n}=(n_1,n_2,..,n_{\frac{D}{2}})$
of non-negative integers. 
The number of tuples $\narrowunderline{n}$ of given
$|\narrowunderline{n}|=n_1+n_2+..+n_{\frac{D}{2}}$ 
is $\binom{|\narrowunderline{n}|+\frac{D}{2}-1}{\frac{D}{2}-1}$.
Let  $\mathbb{N}_\mathcal{N}^{D/2}:=\{
\narrowunderline{m}\in \mathbb{N}^{D/2}: |\narrowunderline{m}|\leq
\mathcal{N}\}$. The action of the $\Phi^3_D$-model is then defined by
\begin{align}
 S[\Phi]&=V\bigg(\sum_{\narrowunderline{n},\narrowunderline{m}
\in \mathbb{N}_\mathcal{N}^{D/2}} \!\!\!\!\!\!
Z\frac{H_{\narrowunderline{n}\narrowunderline{m}}}{2}
\Phi_{\narrowunderline{n}\narrowunderline{m}}\Phi_{\narrowunderline{m}\narrowunderline{n}}
 + \!\!\!\! \sum_{\narrowunderline{n}\in \mathbb{N}_\mathcal{N}^{D/2}}
\!\!\!\!
(\kappa
{+}\nu E_{\narrowunderline{n}}{+}\zeta E_{\narrowunderline{n}}^2) 
\Phi_{\narrowunderline{n}\narrowunderline{n}}
\nonumber
\\
&\qquad 
+\frac{\lambda_{bare}Z^{3/2}}{3} \!\!\!\!\!\!\!
\sum_{\narrowunderline{n},\narrowunderline{m},\narrowunderline{k}
\in \mathbb{N}_\mathcal{N}^{D/2}}
\!\!\!\!\!\!\!\!
\Phi_{\narrowunderline{n}\narrowunderline{m}}
 \Phi_{\narrowunderline{m}\narrowunderline{k}}
\Phi_{\narrowunderline{k}\narrowunderline{n}}\bigg),
\label{action3}
\raisetag{2mm}
\end{align}
where $H_{\narrowunderline{n}\narrowunderline{m}}:=
E_{\narrowunderline{n}}+E_{\narrowunderline{m}}$.  The constant $V$ is
first of all a formal parameter; for a noncommutative quantum field
theory model, $V=(\frac{\theta}{4})^{D/2}$ will be related to the
deformation parameter of the Moyal plane.  The parameters
$\lambda_{bare},\kappa,\nu,\zeta,Z$ and soon $\mu_{bare}$ are
$\mathcal{N}$-dependent renormalisation parameters. They will be
determined by normalisation conditions parametrised by physical mass
$\mu$ and coupling constant $\lambda$. The matrices
$(\Phi_{\narrowunderline{n}\narrowunderline{m}})$ are multi-indexed
Hermitian matrices, $\Phi_{\narrowunderline{n}\narrowunderline{m}}
=\overline{\Phi_{\narrowunderline{m}\narrowunderline{n}}}$. The
external matrix
$E=(E_{\narrowunderline{m}}\delta_{\narrowunderline{n},\narrowunderline{m}})$
can be assumed to be diagonal and has the eigenvalues
$E_{\narrowunderline{n}}=\frac{\mu_{bare}^2}{2}
+e\big(\frac{|\narrowunderline{n}|}{\mu^2 V^{2/D}}\big)$, where $e(x)$
is a monotonously increasing differentiable function with $e(0)=0$ (on
the noncommutative Moyal plane, $e(x)=x$).

The next step is to define (and rearrange) the partition function 
\begin{align}\label{Zustandssumme}
  \mathcal{Z}[J]&=\int \mathcal{D} \Phi
  \exp\left(-S[\Phi]+V\mathrm{Tr}(J\Phi) \right)
  \\
 &=
 \int \mathcal{D}\Phi\exp\Big(-VZ
 \sum_{\narrowunderline{n},\narrowunderline{m}\in \mathbb{N}_\mathcal{N}^{D/2}}
 \frac{H_{\narrowunderline{n}\narrowunderline{m}}}{2}
 \Phi_{\narrowunderline{n}\narrowunderline{m}}
 \Phi_{\narrowunderline{m}\narrowunderline{n}}\Big)
\nonumber
\\
&
\times \exp\Big(-\frac{\lambda_{bare}Z^{3/2} }{3V^2}
\sum_{\narrowunderline{n},\narrowunderline{m},\narrowunderline{k}
  \in \mathbb{N}_\mathcal{N}^{D/2}}
\frac{\partial^3}{\partial J_{\narrowunderline{n}\narrowunderline{m}}
  \partial J_{\narrowunderline{m}\narrowunderline{k}}
  \partial J_{\narrowunderline{k}\narrowunderline{n}}}
 \Big)\bigg[
 \nonumber
 \\
&\quad \exp\bigg(V\hspace*{-4mm}
\sum_{\narrowunderline{n},\narrowunderline{m}\in \mathbb{N}_\mathcal{N}^{D/2}}
\hspace*{-5mm}
\frac{(J_{\narrowunderline{n}\narrowunderline{m}}
  {-}(\kappa{+}\nu E_{\narrowunderline{n}}{+}\zeta E_{\narrowunderline{n}}^2)
  \delta_{\narrowunderline{m},\narrowunderline{n}})
  (J_{\narrowunderline{m}\narrowunderline{n}}{-}(\kappa{+}\nu E_{\narrowunderline{n}}
  {+}\zeta E_{\narrowunderline{n}}^2)
  \delta_{\narrowunderline{m},\narrowunderline{n}})}{
  2ZH_{\narrowunderline{n}\narrowunderline{m}}}\bigg)\bigg],
\nonumber
\end{align}
where source $(J_{\narrowunderline{n}\narrowunderline{m}})$ is 
a multi-indexed Hermitian matrix of rapidly decaying entries.

The correlation functions are defined as moments of the 
partition function. It turns out by earlier work \cite{Grosse:2012uv} 
that the correlation
functions expand into multi-cyclic contributions. It is therefore 
convenient to work with 
$\mathbb{J}_{\narrowunderline{p}_1...\narrowunderline{p}_{N_\beta}}
:=\prod_{j=1}^{N_\beta} J_{\narrowunderline{p}_j\narrowunderline{p}_{j+1}}$ 
with $\narrowunderline{p}_{N_\beta+1}\equiv \narrowunderline{p}_{1}$. 
Taking into account that genus-$g$ correlation functions 
scale with 
$V^{-2g}$ \cite{Brezin:1977sv,Grosse:2012uv}, the following expansion of 
the partition function is obtained:
\begin{align}\label{Entwicklungskoeffizienten}
  &\log\frac{\mathcal{Z}[J]}{\mathcal{Z}[0]}
  \\*[-2ex]
  &=:\sum_{B=1}^\infty\sum_{1\leq N_1 \leq ...\leq N_B}^\infty
\sum_{\narrowunderline{p}_i^\beta \in \mathbb{N}_\mathcal{N}^{D/2}}
\sum_{g=0}^{\infty}V^{2-B-2g}\frac{G^{(g)}_{|\narrowunderline{p}_1^1...\narrowunderline{p}^1_{N_1}|...|\narrowunderline{p}_1^B...\narrowunderline{p}^B_{N_B}|}}{
\prod_{i=1}^{N_B} \nu_i!}
\prod_{\beta=1}^B\frac{\mathbb{J}_{\narrowunderline{p}^\beta_1...\narrowunderline{p}^\beta_{N_\beta}}}{N_\beta}, \nonumber
\end{align}
where $\nu_i=\# \{\beta\in \{1,...,B\}\;:~N_\beta=i\}$.
We call the cumulant $G^{(g)}_{|\narrowunderline{p}_1^1...
\narrowunderline{p}^1_{N_1}|...|\narrowunderline{p}_1^B...
\narrowunderline{p}^B_{N_B}|}$ an $(N_1+...+N_B)$-point function 
of genus $g$; when the $N_\beta$ do not matter, a correlation function 
of genus $g$ with $B$ boundary components. hese
$(N_1+...+N_B)$-point functions have a perturbative expansion
into ribbon graphs drawn on genus-$g$ Riemann surfaces with $B$
boundary components. These ribbon graphs are dual to maps and as
such can also be studied from a point of view of
enumerative geometry.

Finally, we recall from \cite{Grosse:2016qmk} the Ward-Takahashi 
identity for $|\narrowunderline{q}|\neq |\narrowunderline{p}|$
\begin{align}
  \sum_{\narrowunderline{m}\in \mathbb{N}_\mathcal{N}^{D/2}}
  \frac{\partial^2}{\partial J_{\narrowunderline{q}\narrowunderline{m}}
    \partial J_{\narrowunderline{m}\narrowunderline{p}}}\mathcal{Z}[J]
  &=
  \sum_{\narrowunderline{m}\in \mathbb{N}_\mathcal{N}^{D/2}}
  \frac{V}{(E_{\narrowunderline{q}}-E_{\narrowunderline{p}})Z}
  \Big(J_{\narrowunderline{m}\narrowunderline{q}}
  \frac{\partial}{\partial J_{\narrowunderline{m}\narrowunderline{p}}}
  -J_{\narrowunderline{p}\narrowunderline{m}}
  \frac{\partial}{\partial J_{\narrowunderline{q}\narrowunderline{m}}}\Big)
 \mathcal{Z}[J]
 \nonumber
\\ 
&-\frac{V}{Z}(\nu+\zeta H_{\narrowunderline{p}\narrowunderline{q}})
\frac{\partial\mathcal{Z}[J]}{\partial J_{\narrowunderline{q}\narrowunderline{p}}}.
\label{Wardaa}
\end{align}
It arises from invariance of the partition function
under unitary transformation $\Phi \mapsto U^\dagger \Phi U$ 
of the integration variable \cite{Disertori:2006nq}, or 
directly from the structure of $\mathcal{Z}[J]$ \cite{Hock:2018wup}.

\subsection{Dyson-Schwinger equation for $B=1$}

The Dyson-Schwinger equations are determined in \cite{Grosse:2016pob,
  Grosse:2016qmk}
for $g=0$ and solved for all planar correlation functions.
Inserting there the formal genus expansion
$  G_{|\narrowunderline{p}_1^1...\narrowunderline{p}^1_{N_1}|...
  |\narrowunderline{p}_1^B...\narrowunderline{p}^B_{N_B}|}
:=\sum_{g=0}^\infty V^{-2g}
  G^{(g)}_{|\narrowunderline{p}_1^1...\narrowunderline{p}^1_{N_1}|...|
    \narrowunderline{p}_1^B...\narrowunderline{p}^B_{N_B}|}$
it is straightforward to extract from \cite{Grosse:2016pob,
  Grosse:2016qmk} Dyson-Schwinger equations for the 
cumulants
$G^{(g)}_{|\narrowunderline{p}_1^1...\narrowunderline{p}^1_{N_1}|...|
    \narrowunderline{p}_1^B...\narrowunderline{p}^B_{N_B}|}$.
It turns out that the planar 1-point function plays a special r\^ole; we
shift it to
\begin{align}
 \frac{W^{(0)}_{|\narrowunderline{p}|}}{2\lambda}
&:=G^{(0)}_{|\narrowunderline{p}|}
+\frac{F_{\narrowunderline{p}}}{\lambda},
% W_{|\narrowunderline{p}|}:=\sum_{g=0}^\infty
% V^{-2g}W^{(g)}_{|\narrowunderline{p}|},
&
\label{EF}
F_{\narrowunderline{p}}&:=E_{\narrowunderline{p}}-\frac{\lambda \nu}{2}
=\frac{\mu^2}{2}+e\Big(\frac{|\narrowunderline{p}|}{\mu^2 V^{2/D}}\Big).
\end{align}%
Three relations between renormalisation parameters are 
immediate \cite{Grosse:2016qmk}:
$\lambda=Z^{1/2}\lambda_{bare}$, $\frac{\lambda\zeta }{Z}=1-\frac{1}{Z}$
and $\mu_{bare}^2=\mu^2 + \lambda \nu$. 
Now we can use all the Dyson-Schwinger equations evaluated in
\cite{Grosse:2016qmk}. The planar 1-point function satisfies
\cite[eq.~(3.12)]{Grosse:2016qmk}
\begin{align}\label{1PunktFkt}
  (W^{(0)}_{|\narrowunderline{p}|})^2+2\lambda \nu W^{(0)}_{|\narrowunderline{p}|}
  +\frac{2\lambda^2}{V}\sum_{\narrowunderline{n}\in \mathbb{N}_\mathcal{N}^{D/2}}
  \frac{W^{(0)}_{|\narrowunderline{p}|}-W^{(0)}_{|\narrowunderline{n}|}}{
    F^2_{\narrowunderline{p}}-F^2_{\narrowunderline{n}}}
&=\frac{4F_{\narrowunderline{p}}^2}{Z}+C,
\end{align}%
where $C:=-\frac{\lambda^2\nu^2(1+Z)+4 \kappa \lambda}{Z}$,
whereas for $g\geq 1$ we get
\begin{align}\label{g11}
  W^{(0)}_{|\narrowunderline{p}|} G^{(g)}_{|\narrowunderline{p}|}
  +\lambda\nu G^{(g)}_{|\narrowunderline{p}|}
+\lambda  \sum_{h=1}^{g-1}
G^{(h)}_{|\narrowunderline{p}|} G^{(g-h)}_{|\narrowunderline{p}|}
+\frac{\lambda^2}{V} \hspace*{-3mm}
\sum_{\narrowunderline{n}\in \mathbb{N}_\mathcal{N}^{D/2}} \hspace*{-2mm}
\frac{G^{(g)}_{|\narrowunderline{p}|}-G^{(g)}_{|\narrowunderline{n}|}}{
  F^2_{\narrowunderline{p}}-F^2_{\narrowunderline{n}}}
+\lambda G^{(g-1)}_{|\narrowunderline{p}|\narrowunderline{p}|}=0.
\raisetag{1ex}
\end{align}

\subsection{Integral equations}

Introducing the measure
\begin{align}
\varrho(X):= \frac{2(2\lambda)^2}{V}
\sum_{\narrowunderline{n}\in \mathbb{N}_\mathcal{N}^{D/2}}
\delta(X-4F_{\narrowunderline{n}}^2),
\label{measure0}
\end{align}
we can rewrite \eqref{g11} as an integral equation.
The measure has
support in $[4 F_{\narrowunderline{0}}^2,\Lambda_{\mathcal{N}}^2]$
where $\Lambda_{\mathcal{N}}^2=\max(4F_{\narrowunderline{n}}^2\;:~
|\narrowunderline{n}|=\mathcal{N})$.  For quantum field theory it is
necessary to take a large-$\mathcal{N}$ limit. In general this
produces divergences which need renormalisation. Optionally the
large-$\mathcal{N}$ limit can be entangled with a limit $V\to \infty$
which, supposing the $F_{\narrowunderline{n}}$ scale down with $V$ (as
e.g.\ in \eqref{EF}), can be designed to let $\varrho(X)$ converge to
a continuous function. We also pass
to mass-dimensionless quantities via multiplication by specified 
powers of $\mu$ \cite{Grosse:2016qmk}. This amounts to choose the mass scale 
as $\mu=1$. 

To keep maximal flexibility we consider a measure $\varrho$ with
support in $[1,\Lambda^2]$ of which a limit $\Lambda\to \infty$ has to
be taken for quantum field theory.  As already observed in
\cite{Makeenko:1991ec}, equation \eqref{1PunktFkt} \emph{extends to
  the closed equation}
\begin{align}\label{1punktcomplex}
(W_0(X))^2 +2\lambda\nu W_0(X)
+\int_1^{\Lambda^2} dY\,\varrho(Y)\frac{W_0(X)-W_0(Y)}{X-Y}
&=\frac{X}{Z}+C
\end{align}
\emph{for  a sectionally holomorphic function $W_0(X)$} from
which one extracts 
$W^{(0)}_{|\narrowunderline{p}|}=W_0(4F_{\narrowunderline{p}}^2)$. 
The corresponding relation
\begin{align}
G^{(g)}_{|\narrowunderline{p}_1^1...\narrowunderline{p}^1_{N_1}|...|
\narrowunderline{p}_1^B...\narrowunderline{p}^B_{N_B}|} 
&= G_g\big(4F_{\narrowunderline{p}_1^1}^2,...,4F_{\narrowunderline{p}^1_{N_1}}^2|...|
4F_{\narrowunderline{p}_1^B}^2,...,4F_{\narrowunderline{p}^B_{N_B}}^2\big),
\label{npunktcomplex}
\end{align}
for $2g+B>1$ extends \eqref{g11} to
\begin{align}
0&=W_0(X) G_g(X)
  +\lambda\nu G_g(X)
+\frac{1}{2}
\int_1^{\Lambda^2} dY\,\varrho(Y)\frac{G_g(X)-G_g(Y)}{X-Y}
\nonumber
\\
&+\lambda  \sum_{h=1}^{g-1}
G_h(X) G_{g-h}(X)
+\lambda G_{g-1}(X|X).\label{gpunkt}
\end{align}

Using techniques for boundary values of sectionally holomorphic 
functions \cite{Makeenko:1991ec}, easily adapted to 
include $Z-1,\nu,C\neq 0$ \cite{Grosse:2016qmk}, 
one obtains the following solution of \eqref{1punktcomplex}:
\begin{align}\label{1g0}
 W_0(X)=&\frac{\sqrt{X+c}}{\sqrt{Z}}-\lambda\nu 
+\frac{1}{2}\int_1^{\Lambda^2} dY\frac{\varrho(Y)}{
(\sqrt{X+c}+\sqrt{Y+c})\sqrt{Y+c}}.
\end{align}
Here, the finite parameter $c$ and the (for $\Lambda^2\to \infty$) possibly
divergent $Z,\nu$ are determined by 
renormalisation conditions depending on the dimension:
\begin{align*}
\underbrace{ W_0(1)=1}_{D\geq 2},\qquad 
\underbrace{W_0'(1)=\frac{1}{2}}_{D\geq 4},\qquad
\underbrace{W_0''(1)=-\frac{1}{4}}_{D=6},
\end{align*}
together with the convention $Z=1$ for $D\in \{2,4\}$ and $\nu=0$ for
$D=2$. For given coupling constant $\lambda$ as the only remaining
parameter, these equations can be solved for $c$:
\begin{align}\label{implicit}
  (1-\sqrt{1{+}c})
\Big(\frac{1+\sqrt{1{+}c}}{2}\Big)^{\!\!\delta_{D,0}{+}\delta_{D,6}}
=\frac{1}{2}\int_{1}^{\Lambda^2}dY\frac{\varrho(Y)}{
(\sqrt{1{+}c}+\sqrt{Y{+}c})^{D/2}\sqrt{Y{+}c}}.
\end{align} 
By the implicit function theorem, \eqref{implicit} has a smooth
solution in an inverval $-\lambda_c<\lambda<\lambda_c$, 
in any dimension $D\in \{0,2,4,6\}$. 
The Lagrange inversion theorem gives the 
expansion of $c$ in $\lambda^2$: 
\begin{align*}
c=\sum_{n=1}^\infty \frac{1}{n!}\frac{d^{n-1}}{dw^{n-1}}
\Bigg\vert_{w=0}\bigg(\frac{\frac{w}{2}\int_{1}^{\Lambda^2}dY
\frac{\varrho(Y)}{(\sqrt{1+w}+\sqrt{Y+w})^{D/2}\sqrt{Y+w}}}{
(1-\sqrt{1+w})\left(\frac{1+\sqrt{1+w}}{2}\right)^{\delta_{D,0}+\delta_{D,6}}}\Bigg)^n.
\end{align*}
After that renormalisation procedure the limit $\Lambda^2\to\infty$ is
safe in all correlation function and any dimension $D\in \{2,4,6\}$.

\subsection{Dyson-Schwinger equation for $B>1$}

An $(N_1+...+N_B)$-point function of genus $g$ can be obtained
  from the $(1+1+...+1)$-point function of genus $g$ with $B$
boundary components through the explicit formula
\cite[Prop.~4.1]{Grosse:2016qmk}
\begin{align}
  G_g&(X_1^1,...,X_{N_1}^1|...|X_1^B,...,X_{N_B}^B)
  \label{recursiveGBg}
  \\
  &=\lambda^{N_1+...+N_B-B}\sum_{k_1=1}^{N_1}...\sum_{k_B=1}^{N_B}
  G_g(X_{k_1}^1|...|X_{k_B}^{N_B})
  \prod_{\beta=1}^B\prod_{\substack{l_\beta=1\\
      l_\beta\neq k_\beta}}^{N_\beta}
  \frac{4}{X^\beta_{k_\beta}-X^\beta_{l_\beta}}.\nonumber
\end{align}
For $g=0$ one has to write $\frac{W_0(X_{k_1}^1)}{2\lambda}$ instead of
$G_0(X_{k_1}^1)$. Furthermore, a $(1+1+\dots+1)$-point function
$G_{g}(X_1|X\triangleleft_J)$ with $B>1$ boundary components and genus
$g$ fulfils the linear integral
equation \cite[eq.\ (4.5)]{Grosse:2016qmk}
\begin{align}\nonumber\label{gBpunkt}
 0&=\big(W_0(X_1)+\lambda\nu\big) G_{g}(X_1|X\triangleleft_J)
+\frac{1}{2}\int_{1}^{\Lambda^2}dY\varrho(Y)\frac{G_g(X_1|X\triangleleft_{J})-
  G_g(Y|X\triangleleft_{J})}{X_1-Y}
\nonumber
\\
&
+\lambda \hspace*{-0.7cm}
\sum_{\substack{h+h'=g \\ I\uplus I'= J\\ (h,I),(h',I) \neq (0,\emptyset)}}
\hspace*{-0.7cm}
G_h(X_1|X\triangleleft_{I})G_{h'} (X_1|X\triangleleft_{I'})
+\lambda\sum_{\beta\in J}G_g(X_1,X_\beta,X_\beta|X
\triangleleft_{J\backslash\{\beta\}})
\nonumber
\\
&
+  \lambda G_{g-1}(X_1|X_1|X\triangleleft_J),
\end{align}
where $J=\{2,3,..,B\}$. Here and throughout the paper
  (for $z$ instead of $X$) we abbreviate
$G_g(X_0|X\triangleleft_I):=G_g(X_0|X_{i_1}|X_{i_2}|...|X_{i_p})$
and 
$G_g(X\triangleleft_I):=G_g(X_{i_1}|X_{i_2}|...|X_{i_p})$
if $I=\{i_1,...,i_p\}$.
We let $G_g(X_0|X\triangleleft_\emptyset)=G_g(X_0)$.
In the sum, $I\uplus I'=J$ means summation over
all possibly empty subsets $I\subset J$, with $I':=J\setminus I$.
The difference to the planar sector $(g=0)$ is the last term indexed 
$g-1$ which only contributes if $g\geq 1$. Furthermore, the entire sector 
of genus $h<g$ contributes to the genus-$g$ sector.

The equations \eqref{gBpunkt} for $g=0$ have been solved in
\cite{Grosse:2016pob}: 
\begin{align}
 G_0(X|Y)=&\frac{4\lambda^2}{\sqrt{X+c}\sqrt{Y+c}
(\sqrt{X+c}+\sqrt{Y+c})^2},%\label{11g0},
\label{G1B-thm}
\\
G_0(X^1|\dots|X^B)&=
\frac{d^{B-3}}{dt^{B-3}}
\Big(\frac{(-2\lambda)^{3B-4}}{(R(t))^{B-2} \sqrt{X^1{+}c{-}2t}^3
\cdots \sqrt{X^B{+}c {-}2t}^3}\Bigg)\Bigg|_{t=0}
\nonumber
\end{align}
for $B\geq 3$, where
\[
R(t) 
:= \lim_{\Lambda^2\to \infty}\Big(\frac{1}{\sqrt{Z}}-
\int_1^{\Lambda^2} \frac{dT
  \varrho(T)}{\sqrt{T{+}c}} \frac{1}{(\sqrt{T{+}c}+\sqrt{T{+}c{-}2t})
\sqrt{T{+}c{-}2t}}\Big).
\]
Note that multiple $t$-derivatives of $R(t)$ at $t=0$ produce 
renormalised moments of the measure \eqref{measure0}:
\begin{align}
\varrho_l := \lim_{\Lambda^2\to \infty}
\Big( \frac{\delta_{l,0}} {\sqrt{Z}}-
\frac{1}{2}
\int_1^{\Lambda^2} \frac{dT\,  \varrho(T)}{(\sqrt{T{+}c})^{3+2l}}\Big).
\label{moments}
\end{align}
In fact the proof of \eqref{G1B-thm} consists in a resummation of an ansatz 
which involves Bell polynomials (see 
Definition~\ref{def:Bell}) in the $\{\varrho_l\}$. 

The next goal is to find solutions for \eqref{gpunkt} 
and \eqref{gBpunkt} at any genus by employing techniques 
of complex analysis. The moments \eqref{moments} will be of 
paramount importance for that. We will find that all solutions 
are universal in terms of $\{\varrho_l\}$. The concrete model characterised 
by the sequence $E_{\narrowunderline{n}}$, coupling constant $\lambda$ 
and the dimension $D$ only affects the values of $\{\varrho_l\}$ 
via the measure \eqref{measure0} and the $D$-dependent solution $c$ of 
(\ref{implicit}).

\section{Solution of the non-planar sector}

\subsection{Change of variables}
\label{sec:changeofvariables}

As already mentioned, the equation for $W_0$ and its solution
holomorphically extend to (certain parts of) the complex plane. The
corresponding techniques have been brought to perfection by Eynard. We
draw a lot of inspiration from the exposition given in \cite{Eynard:2016yaa}. 
Starting point is another change of variables:
\begin{align}
    & z:=\sqrt{X+c},\qquad \mathcal{W}_0(z(X)):=W_0(X)\qquad \text{and}
    \label{changeofvariables}
    \\
  &\mathcal{G}_g(z^1_1(X^1_1),...,z_{N_1}^1(X^1_{N_1})|...|
  z_1^B(X_1^B),...,z_{N_B}^B(X^B_{N_B}))
  \nonumber
  \\
  &\hspace*{5cm}:=G_g(X^1_1,...,X^1_{N_1}|...|X^B_1,...,X^B_{N_B})
  \nonumber
\end{align}
for $2g+B>1$.  In the beginning, $z$ is defined to be positive;
nevertheless all correlation functions have an analytic
continuation. We define them by the complexification of the equations
\eqref{gpunkt} and \eqref{gBpunkt}, where we assume that the complex
variables fulfil the equations if they lie on the interval
$[\sqrt{1+c},\sqrt{1+\Lambda^2}]$. By recursion hypothesis each
correlation function is analytic for non-vanishing imaginary part of
the complex variables $z_i$, possibly with the exception of diagonals
$z_i=\pm z_j$.

We rephrase some of the earlier results in this setup. 
The solutions \eqref{gpunkt}, \eqref{G1B-thm} and the formula 
for the ($1+1+1$)-point function 
given in \cite{Grosse:2016pob} are easily translated into 
\begin{align}
\mathcal{W}_0(z)&=\frac{z}{\sqrt{Z}}-\lambda\nu
+\frac{1}{2}\int_{\sqrt{1+c}}^{\sqrt{\Lambda^2+c}} dy 
\frac{\tilde{\varrho}(y)}{(z+y)y},\qquad
\tilde{\varrho}(y):=2y\varrho(\sqrt{y^2-c}),
\nonumber
\\
\G_0(z_1|z_2)&= \frac{4\lambda^2}{z_1z_2(z_1+z_2)^2}, \qquad\qquad
\G_0(z_1|z_2|z_3)=-\frac{32 \lambda^5}{\rho_0z_1^3z_2^3z_3^3}.
\label{G0z}
\end{align}
Note that $\tilde{\varrho}(y)$ has support in 
$[\sqrt{1+c},\sqrt{\Lambda^2+c}]\subset \mathbb{R}_+$ because of $c>-1$
\cite{Grosse:2016qmk}. Furthermore, $\mathcal{W}_0(z)$ extends to a
sectionally holomorphic function with 
branch cut along $[-\sqrt{1+\Lambda^2}, 
-\sqrt{1+c}]$, the $(1+1)$-point function of genus zero is holomorphic
outside $z_i=0$ and the diagonals $z_1=-z_2$, whereas  
the $(1+1+1)$-point function (and all higher-$B$ functions) at genus 
0 are meromorphic with only pole at $z_i=0$. 

\begin{definition}\label{defint}
Let $\hat{K}_z$ be the integral operator of the 
linear integral equation,
\begin{align*}
\hat{K}_z f(z):=\mathcal{W}_0(z) f(z)
+\lambda\nu f(z)
+\frac{1}{2}\int_{\sqrt{1+c}}^{\sqrt{\Lambda^2+c}} dy\,
\tilde{\varrho}(y)\frac{f(z)-f(y)}{z^2-y^2},
\end{align*}
where $\mathcal{W}_0(z)$ is given by \eqref{G0z}.
\end{definition}
\noindent In this notation, \eqref{gpunkt} takes the form 
\begin{align}\label{1punktneuV}
0&=\hat{K}_z \G_g(z) 
+\lambda  \sum_{h=1}^{g-1}\G_h(z) 
\G_{g-h}(z) +\lambda \G_{g-1}(z|z).
\end{align}
We will heavily rely on:
\begin{lemma}\label{lemmaopK}
The operator $\hat{K}_z$ defined in Definition~\ref{defint} satisfies
\begin{align*}
\hat{K}_z\Big(\frac{1}{z}\Big) &=\frac{1}{\sqrt{Z}} , & 
\hat{K}_z\Big(\frac{1}{z^{3+2n}}\Big) &= 
\sum_{k=0}^{n}\frac{\varrho_k}{z^{2n+2-2k}}.
\end{align*}
\end{lemma}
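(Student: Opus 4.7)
}
Both identities are verified by direct substitution of the definition of $\hat{K}_z$ together with the explicit form \eqref{G0z} of $\mathcal{G}_0(z)$, so the whole argument is algebraic. First I would record the useful rearrangement
\[
\hat{K}_z f(z) = \frac{z}{\sqrt{Z}}\,f(z) + \frac{1}{2}\int \tilde\varrho(y)\,\frac{f(z)}{(z+y)y}\,dy + \frac{1}{2}\int \tilde\varrho(y)\,\frac{f(z)-f(y)}{z^2-y^2}\,dy,
\]
in which the $\lambda\nu$ terms have already cancelled. For $f(z)=1/z$ the first summand gives $1/\sqrt{Z}$. The remaining two integrals combine (after pulling out $1/(2z)$) to $\int\tilde\varrho(y)\bigl[\tfrac{1}{(z+y)y}+\tfrac{1/z-1/y}{z^2-y^2}\bigr]dy$. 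Writing the second kernel as $-\tfrac{1}{zy(z+y)}$ shows that the integrand vanishes pointwise, which gives $\hat{K}_z(1/z)=1/\sqrt{Z}$ as claimed.

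For $f(z)=1/z^{m}$ with $m=3+2n$ the same rearrangement yields three pieces: the piece $1/(\sqrt{Z}\,z^{m-1})=1/(\sqrt{Z}\,z^{2n+2})$ from $zf(z)/\sqrt{Z}$; the integral $\tfrac{1}{2z^{m}}\int \tfrac{\tilde\varrho(y)}{(z+y)y}dy$; and the integral arising from the difference quotient $\tfrac{1/z^{m}-1/y^{m}}{z^2-y^2}$. The latter, after the standard telescoping $\tfrac{1}{z^m}-\tfrac{1}{y^m}=-\tfrac{z-y}{z^m y^m}\sum_{j=0}^{m-1}z^jy^{m-1-j}$ and dividing by $z^2-y^2$, becomes $-\tfrac{1}{z+y}\sum_{j=0}^{m-1}\tfrac{1}{z^{m-j}y^{j+1}}$. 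The $j=0$ term of this sum is $\tfrac{1}{z^m y(z+y)}$, which exactly cancels the second piece above. What remains is
\[
\hat{K}_z\frac{1}{z^{m}} = \frac{1}{\sqrt{Z}\,z^{2n+2}} - \frac{1}{2}\int \tilde\varrho(y)\,\frac{1}{z+y}\sum_{j=1}^{m-1}\frac{1}{z^{m-j}y^{j+1}}\,dy.
\]

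The key step is now the algebraic identity
\[
\frac{1}{z+y}\sum_{j=1}^{2n+2}\frac{1}{z^{2n+3-j}\,y^{j+1}} = \sum_{k=0}^{n}\frac{1}{z^{2k+2}\,y^{2n+3-2k}},
\]
which I would prove by summing the finite geometric series on the left, obtaining $\dfrac{z^{2n+2}-y^{2n+2}}{z^{2n+2}y^{2n+3}(z-y)(z+y)}$, and then using the factorisation $z^{2n+2}-y^{2n+2}=(z^2-y^2)\sum_{k=0}^{n}z^{2n-2k}y^{2k}$ valid because the exponent is even. Substituting this identity back and integrating against $\tilde\varrho(y)/2$ term by term produces $\sum_{k=0}^n z^{-2k-2}\cdot\bigl(-\tfrac12\int \tilde\varrho(y)y^{-3-2(n-k)}dy\bigr)$. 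After relabelling $k\mapsto n-k$ and recognising the resulting integrals as $\varrho_k-\delta_{k,0}/\sqrt{Z}$ via \eqref{moments}, the spurious $1/(\sqrt{Z}z^{2n+2})$ cancels and the claimed formula $\sum_{k=0}^{n}\varrho_k/z^{2n+2-2k}$ emerges. The only nontrivial obstacle is spotting the even-exponent factorisation that removes the $1/(z+y)$ denominator; everything else is bookkeeping.
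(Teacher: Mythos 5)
Your computation is correct, but note that the paper does not actually prove this lemma by calculation at all: its entire proof is the one-line remark that the statement ``is a reformulation of [Lemma 5.5]'' of the earlier paper \cite{Grosse:2016pob}, i.e.\ the identity is imported from previous work after the change of variables $z=\sqrt{X+c}$. Your argument is therefore a genuinely self-contained replacement. The route you take is the natural one and every step checks out: the $\lambda\nu$ terms cancel between $\G_0(z)f(z)$ and $+\lambda\nu f(z)$; for $f=1/z$ the two integral kernels are pointwise negatives of each other; for $f=1/z^{3+2n}$ the $j=0$ term of the telescoped difference quotient cancels the integral coming from $\G_0$, and your even-exponent factorisation $z^{2n+2}-y^{2n+2}=(z^2-y^2)\sum_{k=0}^{n}z^{2k}y^{2n-2k}$ correctly removes the $1/(z+y)$ denominator, leaving $\sum_{k=0}^{n}z^{-2k-2}\bigl(-\tfrac12\int\tilde\varrho(y)\,y^{-3-2(n-k)}dy\bigr)$; identifying these integrals via \eqref{moments} (using $\int dT\,\varrho(T)(\sqrt{T+c})^{-3-2l}=\int dy\,\tilde\varrho(y)\,y^{-3-2l}$) indeed makes the $\delta_{l,0}/\sqrt{Z}$ contribution cancel the leading $1/(\sqrt{Z}\,z^{2n+2})$ and yields exactly $\sum_{k=0}^{n}\varrho_k/z^{2n+2-2k}$. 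The only cosmetic slip is a dropped minus sign when you quote the $j=0$ term (it is $-\tfrac{1}{z^m y(z+y)}$, which is precisely why it cancels the second piece rather than doubling it), but your displayed formulas carry the correct sign. What your approach buys is transparency and independence from \cite{Grosse:2016pob}; what the paper's citation buys is brevity, at the cost of forcing the reader to translate the earlier lemma through the substitution $X\mapsto z^2-c$.
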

\begin{proof}
This is a reformulation of \cite[Lemma 5.5]{Grosse:2016pob}.
\end{proof}

The first step beyond \cite{Grosse:2016pob} is to determine the 
1-point function at genus 1:
\begin{proposition}\label{proposition1}
 The solution of \eqref{1punktneuV} for $g=1$ is 
\begin{align*}
\G_1(z)=\frac{\lambda^3\varrho_1}{\varrho_0^2z^3} 
-\frac{\lambda^3}{\varrho_0z^5},
\end{align*}
where the $\varrho_l$ are given in \eqref{moments}.
\end{proposition}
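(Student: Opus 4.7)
The plan is to specialize the recursion \eqref{1punktneuV} to $g=1$. Since the sum $\sum_{h=1}^{g-1}\G_h(z)\G_{g-h}(z)$ is empty for $g=1$, the equation reduces to
\[
\hat{K}_z \G_1(z) = -2\lambda^2 \G_0(z|z).
\]
Using the explicit expression $\G_0(z_1|z_2)=\frac{4\lambda^2}{z_1 z_2 (z_1+z_2)^2}$ recalled just before Definition~\ref{defint} and specialising $z_1=z_2=z$ gives $\G_0(z|z)=\frac{\lambda^2}{z^4}$, so the task becomes to solve
\[
\hat{K}_z \G_1(z) = -\frac{2\lambda^4}{z^4}.
\]

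The idea for the second step is to construct a particular solution out of the basis functions $\{1/z^{3+2n}\}$ provided by Lemma~\ref{lemmaopK}. Since $\hat{K}_z(1/z^{3+2n})$ has top pole $\varrho_0/z^{2n+2}$, matching the pole $1/z^4$ on the right-hand side suggests the ansatz $\G_1(z)=A/z^5+B/z^3$. Applying Lemma~\ref{lemmaopK} term by term gives
\[
\hat{K}_z\G_1(z)=\frac{A\varrho_0}{z^4}+\frac{A\varrho_1+B\varrho_0}{z^2}.
\]
Matching the coefficient of $1/z^4$ fixes $A=-2\lambda^4/\varrho_0$, and the vanishing of the $1/z^2$ coefficient then fixes $B=2\lambda^4\varrho_1/\varrho_0^2$, reproducing the claimed formula.

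The main subtlety is to justify that this particular solution is the correct one, i.e.\ that no extra terms appear. I would appeal to the expected analytic properties of $\G_1(z)$: by analogy with the planar sector it should be a Laurent polynomial in $1/z$ with poles only at $z=0$, odd under $z\mapsto -z$ (inherited from the parity of $\G_0(z|z)$ together with the structure of $\hat{K}_z$), and with leading pole order not exceeding that dictated by the inhomogeneity. Under these restrictions, Lemma~\ref{lemmaopK} makes $\hat{K}_z$ triangular in the natural basis $\{1/z,1/z^3,1/z^5\}$, so the coefficients are uniquely determined by matching poles from the top down. A term proportional to $1/z$ is excluded because $\hat{K}_z(1/z)=1/\sqrt{Z}$ would contribute a constant, which is absent from the right-hand side. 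Once this structural input is granted, the coefficient matching above completes the proof; establishing the analogous structure at higher genus will presumably be the content of the later sections, but for $g=1$ it can be verified directly from the pole structure of $\G_0(z|z)$ and the explicit action of $\hat{K}_z$.
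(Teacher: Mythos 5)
Your proof is correct and follows essentially the same route as the paper: compute $-2\lambda^2\G_0(z|z)=-2\lambda^4/z^4$, make the ansatz $\G_1(z)=\beta/z^3+\gamma/z^5$ suggested by Lemma~\ref{lemmaopK}, and compare coefficients. Your additional remarks on uniqueness are sound and go slightly beyond the paper's terse proof, which defers that issue to the uniqueness of the perturbative expansion and to the later residue formula of Theorem~\ref{thm:G-residue}.
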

\begin{proof}
From \eqref{G0z} we have
$\lambda \G_0(z|z)=\frac{\lambda^3}{z^4}$. Lemma~\ref{lemmaopK}
suggests the ansatz 
$\G_1(z)=\frac{\beta}{z^3}+\frac{\gamma}{z^5}$ with 
$\hat{K}_z \G_1(z)=\frac{\beta\varrho_0 }{z^2}
+\frac{\gamma \varrho_0}{z^4}+\frac{\gamma \varrho_1}{z^2}$. 
Comparison of coefficients yields the assertion.
 \end{proof}

\noindent
From \eqref{recursiveGBg} we get the $N$-point function of genus 1 which
in complex variables reads
\begin{align*}
\G_1(z_1,...,z_N)=\sum_{k=1}^N \G_1(z_k)
\prod_{l=1,l\neq k}^N\frac{4\lambda}{z_k^2-z_l^2}.
\end{align*}

Next we express equation \eqref{gBpunkt} in the new variables. To find
more convenient results we use \eqref{recursiveGBg} to write with
$J=\{2,..,B\}$
\begin{align}
&\G_g(z_1,z_\beta,z_\beta|z
\triangleleft_{J\backslash\{\beta\}})=\lim_{z_\alpha\to z_\beta}\G_g(z_1,z_\alpha,z_\beta|z
\triangleleft_{J\backslash\{\beta\}})
\nonumber
\\
=&16\lambda^2\left[\frac{\G_g(z_1|z
	\triangleleft_{J\backslash\{\beta\}})}{(z_1^2-z_\beta^2)^2}-\lim_{z_\alpha\to z_\beta} \frac{
	\frac{\G_g(z_\alpha|z
		\triangleleft_{J\backslash\{\beta\}})}{z_1^2-z_\alpha^2}-\frac{\G_g(z_\beta|z
		\triangleleft_{J\backslash\{\beta\}})}{z_1^2-z_\beta^2}}{(z_\alpha+z_\beta)
	(z_\alpha-z_\beta)}\right]
\nonumber
\\
=&16\lambda^2\frac{\partial}{2z_\beta \partial z_\beta}\frac{\G_g(z_1|z
	\triangleleft_{J\backslash\{\beta\}})-\G_g(z_\beta|z
	\triangleleft_{J\backslash\{\beta\}})}{z_1^2-z_\beta^2}
\label{G1bb}
\end{align}
and $\G_0(z_1,z_2,z_2)=8\lambda\frac{\partial}{2z_2\partial z_2
}\frac{\mathcal{W}_0(z_1)-\mathcal{W}_0(z_2)}{z_1^2-z_2^2}$.

Inserting \eqref{G1bb} into \eqref{gBpunkt} gives with 
Definition~\ref{defint} the following formula for 
$\G_g(z_1|z\triangleleft_J)$, for
$2g+|J| \geq 2$:
\begin{align}\nonumber
 0&=\hat{K}_{z_1} \G_g(z_1|z\triangleleft_J)
 + \lambda\G_{g-1}(z_1|z_1|z\triangleleft_{J})
 +\lambda \hspace*{-8mm}
 \sum_{\substack{h+h'=g \\ I\uplus I'= J\\
     (h,I),(h',I')\neq (0,\emptyset)}}\hspace*{-5mm}
 \G_h(z_1|z\triangleleft_{I})\G_{g-h}
 (z_1|z\triangleleft_{J\backslash I})
\\
\label{B>2}
&%+\sum_{h=1}^g \G_h(z_1)\G_{g-h}(z_1|z\triangleleft_{J})
+(2\lambda)^3 \sum_{\beta\in J}
 \frac{\partial}{z_\beta \partial z_\beta }\frac{\G_g(z_1|z
 	\triangleleft_{J\backslash\{\beta\}})-\G_g(z_\beta|z
 	\triangleleft_{J\backslash\{\beta\}})}{z_1^2-z_\beta^2}.
\end{align}
Note that this covers also (\ref{1punktneuV}) as $J=\emptyset$.

\subsection{Boundary creation operator}\label{sec:loop}

We are going to construct an operator which plays the r\^ole of the  
formal $T_{\narrowunderline{n}}:=
\frac{1}{E_{\narrowunderline{n}}}
\frac{\partial}{\partial E_{\narrowunderline{n}}}$ applied 
to the logarithm of the partition function $\mathcal{Z}[0]$ 
given in \eqref{Zustandssumme}. In dimension $D=0$ where 
$Z-1=\kappa=\nu=\zeta=0$ and $\mu_{bare}=\mu$, 
$\lambda_{bare}=\lambda$ we formally have 
\begin{align}
&T_{\narrowunderline{n}} \log \Big(\int d\Phi \;
e^{-\mathrm{tr}(E\Phi^2+\frac{\lambda}{3}\Phi^3)}\Big)
\nonumber
\\
&=-\frac{1}{\mathcal{Z}[0]}
\int d\Phi \;
\sum_{\narrowunderline{m}} 
\frac{\Phi_{\narrowunderline{m}\narrowunderline{n}}
\Phi_{\narrowunderline{n}\narrowunderline{m}}}{E_{\narrowunderline{n}}}
e^{-\mathrm{tr}(E\Phi^2+\frac{\lambda}{3}\Phi^3)}
\nonumber
\\
&=\frac{1}{\lambda\mathcal{Z}[0]}
\int d\Phi \;
\frac{1}{E_{\narrowunderline{n}}}
\Big(\frac{\partial}{\partial 
\Phi_{\narrowunderline{n}\narrowunderline{n}}}
+ 2 E_{\narrowunderline{n}}
\Phi_{\narrowunderline{n}\narrowunderline{n}}
\Big)
e^{-\mathrm{tr}(E\Phi^2+\frac{\lambda}{3}\Phi^3)}
\nonumber
\\
&=\frac{2}{\lambda\mathcal{Z}[0]}
\int d\Phi \;
\Phi_{\narrowunderline{n}\narrowunderline{n}}
e^{-\mathrm{tr}(E\Phi^2+\frac{\lambda}{3}\Phi^3)}
=\frac{2}{\lambda} G_{|\narrowunderline{n}|}.
\end{align}
By repeated application of $T_{\narrowunderline{n}_i}$ we formally
produce an $(1+\dots+1)$-point function. Of course, these operations 
are not legitimate:  in dimensions $D\in \{2,4,6\}$ we have to include for
renormalisation the $\Phi$-linear terms in \eqref{action3}, and the partition
function has no chance to exist for real $\lambda$. 

Nevertheless, we are able to show that 
$T_{\narrowunderline{n}_i}$ admits a rigorous replacement 
which we call the \emph{boundary creation operator}. It will be our
main device:
\begin{definition}\label{DefOp}
For $J=\{1,\dots,p\}$ let $|J|:=p$ and $z_J:=(z_1,\dots,z_{p})$. 
Then
\begin{align}
\hat{\mathrm{A}}^{\dag g}_{z_J,z}
:=
\sum_{l= 0}^{3g-3+|J|} \Big(-\frac{(3+2l) \varrho_{l+1}}{
\varrho_0 z^3}+\frac{3+2l}{z^{5+2l}}\Big)
\frac{\partial}{\partial \varrho_l}
+ \sum_{i\in J}\frac{1}{\varrho_0 z^3 z_i}
\frac{\partial}{\partial z_i}.
	\end{align}
\end{definition}
\noindent Note that the last variable $z$ in $\hat{\mathrm{A}}^{\dag
  g}_{z_J,z}$ plays a very different r\^ole than the $z_J$!

\begin{lemma}
  The differential operators $\hat{\mathrm{A}}^{\dag g}_{z_J,z}$
  commute,
\begin{align*}
\hat{\mathrm{A}}^{\dag g}_{z_J,z_p,z_q}
\hat{\mathrm{A}}^{\dag g}_{z_J,z_p}
=\hat{\mathrm{A}}^{\dag g}_{z_J,z_q,z_p}
\hat{\mathrm{A}}^{\dag g}_{z_J,z_q}.
\end{align*}
\end{lemma}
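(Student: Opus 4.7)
The plan is to verify the identity $\hat{B}\hat{A} = \hat{D}\hat{C}$ directly, where I abbreviate $\hat{A} := \hat{\mathrm{A}}^{\dag g}_{z_J,z_p}$, $\hat{B} := \hat{\mathrm{A}}^{\dag g}_{z_J,z_p,z_q}$, $\hat{C} := \hat{\mathrm{A}}^{\dag g}_{z_J,z_q}$, and $\hat{D} := \hat{\mathrm{A}}^{\dag g}_{z_J,z_q,z_p}$, acting on an arbitrary smooth function $f(\varrho, z_J)$ depending on the $\varrho_l$ and $z_i$ with $i\in J$ only. Since $f$ is independent of the newly created variables $z_p, z_q$, the "new-boundary" pieces $\frac{1}{\varrho_0 z_q^3 z_p}\partial_{z_p}$ in $\hat{B}$ and $\frac{1}{\varrho_0 z_p^3 z_q}\partial_{z_q}$ in $\hat{D}$ will contribute only through differentiating the coefficients of $\hat{A}$ and $\hat{C}$ produced by the first application. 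I would set $a_l(z,\varrho) := -\frac{(3+2l)\varrho_{l+1}}{\varrho_0 z^3}+\frac{3+2l}{z^{5+2l}}$ and decompose each operator into its $\varrho$-sum and its $z_i$-sum.

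Next I would compute the second-order coefficients in $\hat{B}\hat{A}f$. The coefficients of $\partial_{\varrho_l}\partial_{\varrho_m}f$, $\partial_{\varrho_l}\partial_{z_i}f$ ($i\in J$), and $\partial_{z_i}\partial_{z_j}f$ ($i,j\in J$) come out to $a_l(z_q)a_m(z_p)$, $\;a_l(z_q)\tfrac{1}{\varrho_0 z_p^3 z_i}+\tfrac{1}{\varrho_0 z_q^3 z_i}a_l(z_p)$, and $\tfrac{1}{\varrho_0^2 z_p^3 z_q^3 z_iz_j}$ respectively (after the appropriate symmetrisation in $l,m$ or $i,j$). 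Each is manifestly invariant under $p\leftrightarrow q$, hence matches the corresponding coefficient in $\hat{D}\hat{C}f$. Note that the "mixed" second derivatives $\partial_{z_p}\partial_\bullet f$ do not appear because $f$ has no $z_p$-dependence.

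The bulk of the work is the first-order part. The coefficient of $\partial_{\varrho_m}f$ in $\hat{B}\hat{A}f$ is $\sum_l a_l(z_q)\,\partial_{\varrho_l}a_m(z_p)+\frac{\partial_{z_p}a_m(z_p)}{\varrho_0 z_q^3 z_p}$. Since $\partial_{\varrho_l}a_m$ vanishes except for $l=0$ (giving $\frac{(3+2m)\varrho_{m+1}}{\varrho_0^2 z_p^3}$) and $l=m+1$ (giving $-\frac{3+2m}{\varrho_0 z_p^3}$), expansion yields six monomials in the variables $\varrho_0,\varrho_1,\varrho_{m+1},\varrho_{m+2},z_p,z_q$. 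I would check that these split into three pairs that exchange under $p\leftrightarrow q$, for instance $\frac{\varrho_{m+1}}{\varrho_0^2 z_p^3 z_q^5}$ paired with $\frac{\varrho_{m+1}}{\varrho_0^2 z_p^5 z_q^3}$, and similarly for the $\varrho_1\varrho_{m+1}$ and $\varrho_{m+2}$ and the purely $z$-dependent terms. The same bookkeeping, applied to the coefficient of $\partial_{z_i}f$ ($i\in J$) which receives three contributions from $\partial_{\varrho_0}$, $\partial_{z_i}$ and $\partial_{z_p}$ acting on $\frac{1}{\varrho_0 z_p^3 z_i}$, produces four monomials that again split into symmetric and swapping pairs.

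The main obstacle is precisely this bookkeeping in the first-order part: the symmetry under $p\leftrightarrow q$ is \emph{not} term-by-term but only visible after pairing the correct monomials. I expect no genuine analytic difficulty—everything reduces to rational-function identities—but the size of the expressions makes it easy to misplace a factor. Once the pairing is identified, the identity $\hat{B}\hat{A}f = \hat{D}\hat{C}f$ holds for every $f(\varrho,z_J)$, proving the lemma.
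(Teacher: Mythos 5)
Your proposal is correct and takes essentially the same route as the paper: after observing that the second-order coefficients of $\hat{\mathrm{A}}^{\dag g}_{z_J,z_p,z_q}\hat{\mathrm{A}}^{\dag g}_{z_J,z_p}$ are manifestly symmetric under $p\leftrightarrow q$, everything reduces to evaluating both compositions on the generators $\varrho_m$ and $z_i$, $i\in J$ (the paper compresses this reduction into the phrase ``being a derivative''), and the six resp.\ four monomials you describe are exactly those displayed in the paper's proof. The only cosmetic inaccuracy is that the six monomials do not form three swapping pairs but rather two swapping pairs plus two terms (the $\varrho_1\varrho_{m+1}/\varrho_0^3$ and $\varrho_{m+2}/\varrho_0^2$ ones) that are individually $p\leftrightarrow q$-invariant, which only makes the verification easier.
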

\begin{proof} Being a derivative, it is enough to verify 
$\hat{\mathrm{A}}^{\dag g}_{z_J,z_p,z_q}
\hat{\mathrm{A}}^{\dag g}_{z_J,z_p}(\varrho_k)
=\hat{\mathrm{A}}^{\dag g}_{z_J,z_q,z_p}
\hat{\mathrm{A}}^{\dag g}_{z_J,z_q}(\varrho_k)$ for any $k$ and 
$\hat{\mathrm{A}}^{\dag g}_{z_J,z_p,z_q}
\hat{\mathrm{A}}^{\dag g}_{z_J,z_p}(z_i)
=\hat{\mathrm{A}}^{\dag g}_{z_J,z_q,z_p}
\hat{\mathrm{A}}^{\dag g}_{z_J,z_q}(z_i)$ for any $i\in J$. 
This is guaranteed by 
\begin{align*}
\hat{\mathrm{A}}^{\dag g}_{z_J,z_p,z_q}
\hat{\mathrm{A}}^{\dag g}_{z_J,z_p}(\varrho_k)
&= 
\frac{(3+2k)(5+2k) \varrho_{k+2}}{
\varrho_0^2 z_q^3z_p^3}
-\frac{(3+2k)(5+2k)}{
\varrho_0 z_q^{7+2k} z_p^3}
- \frac{3(3+2k)\varrho_{k+1}\varrho_1}{\varrho_0^3 z_q^3z_p^3}
\\
&
+ \frac{3(3+2k)\varrho_{k+1}}{\varrho_0^2 z_q^5z_p^3}
+ \frac{3(3+2k) \varrho_{k+1}}{\varrho_0^2 z_q^3z_p^5}
-\frac{(3+2k)(5+2k)}{\varrho_0 z_q^3z_p^{7+2k}},
\\
\hat{\mathrm{A}}^{\dag g}_{z_J,z_p,z_q}
\hat{\mathrm{A}}^{\dag g}_{z_J,z_p}(z_i)
&= \frac{3\varrho_1}{\varrho_0^3 z_q^3 z_p^3 z_i}
-\frac{3}{\varrho_0^2 z_q^5 z_p^3 z_i}
-\frac{3}{\varrho_0^2 z_q^3z_p^5 z_i}
- \frac{1}{\varrho_0^2 z_q^3 z_p^3 z_i^3}.
\end{align*}
\end{proof}
\noindent
This shows that boundary components labelled by $z_i$ behave 
like bosonic particles at position $z_i$. The creation operator 
$(2\lambda)^3 \hat{\mathrm{A}}^{\dag g}_{z_J,z}$ adds to a $|J|$-particle state 
another particle at position $z$. The $|J|$-particle state is 
precisely given by $\G_g(z\triangleleft_J)$: 
\begin{theorem}\label{finaltheorem}
Assume that 
$\G_g(z)$ is, for $g\geq 1$, an odd function of $z\neq 0$ and a
rational function of 
$\varrho_0,\dots,\varrho_{3g-2}$ (true for $g=1$).
Then the  $(1+1+...+1)$-point function of genus $g\geq 1$ and $B$ 
boundary components of the renormalised $\Phi^3_D$-matricial QFT 
model in dimension $D\in\{0,2,4,6\}$ has the solution
\begin{align}
\G_g(z_1|...|z_B)=(2\lambda)^{3B-3}\hat{\mathrm{A}}^{\dag g}_{z_1,\dots,z_B}
\big(\hat{\mathrm{A}}^{\dag g}_{z_1,\dots,z_{B-1}}
\big(\cdots \hat{\mathrm{A}}^{\dag g}_{z_1,z_2}
\G_g(z_1)...\big)\big),\qquad
z_i\neq 0,
\label{eqfinalthm}
\end{align}
where $\G_g(z_1)$ is the 1-point function of genus $g\geq 1$ and 
the boundary creation operator $\hat{\mathrm{A}}^{\dag g}_{z_J}$ is defined in Definition \ref{DefOp}.
For $g=0$ the boundary creation operators act on the $(1+1)$-point function 
\begin{align*}
\G_0(z_1|...|z_B)=(2\lambda)^{3B-6}  \hat{\mathrm{A}}^{\dag 0}_{z_1,\dots,z_B}
\big( \hat{\mathrm{A}}^{\dag 0}_{z_1,\dots,z_{B-1}}
\big(\cdots  \hat{\mathrm{A}}^{\dag 0}_{z_1,z_2,z_3} 
\G_0(z_1|z_2)...\big)\big).
\end{align*} 
\end{theorem}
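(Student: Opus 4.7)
The plan is a double induction, primarily on the number of boundary components $B$ and secondarily on the genus $g$, driven by the Dyson--Schwinger equation \eqref{B>2} together with the inversion formula of Lemma~\ref{lemmaopK}. For $g=0$ the base case is $B=2$, with $\G_0(z_1|z_2)$ supplied by \eqref{G0z}; for $g\geq 1$ the base case is $B=1$, furnished by Proposition~\ref{proposition1} at $g=1$ and, for higher $g$, by solving \eqref{1punktneuV} via Lemma~\ref{lemmaopK} once $\G_h$ for $h<g$ and $\G_{g-1}(z|z)$ are known by the outer induction.

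In the inductive step, fix $g$ and $B$ and assume the theorem holds for all strictly smaller $B'$ (and all $g'\leq g$). Substituting the inductively known correlation functions into the right-hand side of \eqref{B>2} yields an explicit rational expression in $\varrho_0,\varrho_1,\dots$ and $z_2,\dots,z_B$ whose $z_1$-dependence is a finite Laurent polynomial in the odd negative powers $z_1^{-(2m+1)}$. Lemma~\ref{lemmaopK} then inverts $\hat{K}_{z_1}$ monomial by monomial and produces $\G_g(z_1|z_2|\dots|z_B)$ in closed form. The oddness and finiteness of the Laurent series---part of the inductive hypothesis, propagated because each term on the right-hand side of \eqref{B>2} manifestly has those properties---are precisely what make the termwise inversion well-defined and pin down the maximal index $3g-3+B$ of the moments $\varrho_l$ that can appear.

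What remains, and what is the combinatorial core of the proof, is to identify this closed-form expression with $(2\lambda)^{3B-4}\hat{\mathrm{A}}^{\dag g}_{z_1,\dots,z_B}\G_g(z_1|\dots|z_{B-1})$. I would isolate this identification into a chain of auxiliary lemmata, to be proved in an appendix: (i) a direct computation showing that $\hat{\mathrm{A}}^{\dag g}_{z_J,z_B}$ applied to an elementary Laurent monomial $\varrho_{l_1}\cdots\varrho_{l_k}/(z_1^{2n+1}\prod z_i^{m_i})$ reproduces, upon comparison with Lemma~\ref{lemmaopK}, the corresponding inversion of $\hat{K}_{z_1}$ applied to a single creation at $z_B$; (ii) a Leibniz-type identity for $\hat{\mathrm{A}}^{\dag g}_{z_J,z_B}$ on products $\G_h(z_1|z\triangleleft_I)\G_{g-h}(z_1|z\triangleleft_{J\setminus I})$ that matches the sum-over-splittings in \eqref{B>2}; (iii) a check that the $\partial/\partial z_i$ piece of $\hat{\mathrm{A}}^{\dag g}$ accounts precisely for the ``double-point'' terms $\G_g(z_1,z_\beta,z_\beta|\dots)$, rewritten via \eqref{G1bb}; (iv) the treatment of the genus-drop term $\G_{g-1}(z_1|z_1|z\triangleleft_J)$, where the secondary induction on $g$ and the commutativity of the $\hat{\mathrm{A}}^{\dag g}$'s established just before the theorem intervene to unwind $\G_{g-1}(z_1|z_1|\cdots)$ back to a suitable $\hat{\mathrm{A}}^{\dag g-1}$-image.

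The main obstacle is step (ii). The operator $\hat{\mathrm{A}}^{\dag g}_{z_J,z}$ mixes two qualitatively different pieces---moment derivatives $\partial/\partial\varrho_l$ that act ``internally'' on each factor and position derivatives $\partial/\partial z_i$ that act on existing boundary labels---and one must show that, after applying $\hat{K}_{z_1}^{-1}$ termwise, the cross terms produced by the product structure on the right-hand side of \eqref{B>2} reassemble into a single creation at $z_B$. This is an identity among rational expressions in the $\varrho_l$, involving the same Bell-polynomial bookkeeping that already governs \eqref{G1B-thm}; carrying it out uniformly in $g$ and $B$, while tracking the cancellations between the $\varrho_0 z^{-3}$-prefactors in Definition~\ref{DefOp} and the extra poles generated by inversion of $\hat{K}_{z_1}$, is where the real work lies.
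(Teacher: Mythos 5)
Your overall strategy --- feed the inductively known lower correlation functions into the Dyson--Schwinger equation \eqref{B>2}, invert $\hat{K}_{z_1}$ on the resulting Laurent polynomial, and identify the outcome with the creation-operator formula --- is a legitimate route in principle, and you have correctly located where the difficulty sits. But the proposal stops exactly at that difficulty: the identification of the inverted right-hand side with $(2\lambda)^{3B-4}\hat{\mathrm{A}}^{\dag g}_{z_1,\dots,z_B}\G_g(z_1|\dots|z_{B-1})$ is announced as a ``chain of auxiliary lemmata'' whose central member (your item (ii)) you yourself call the main obstacle and do not prove; that identity \emph{is} the theorem, not a routine check. The paper supplies the organizing idea your plan lacks. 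It never inverts $\hat{K}_{z_1}$ here at all: it takes \eqref{eqfinalthm} as a \emph{definition} and verifies \eqref{B>2} directly. The decisive step is to absorb all quadratic splittings $\sum_{h,I}\G_h(z_1|z\triangleleft_I)\G_{g-h}(z_1|z\triangleleft_{J\setminus I})$ \emph{and} the genus-lowering term $\G_{g-1}(z_1|z_1|z\triangleleft_J)$ in one stroke, by applying the full string $\hat{\mathrm{A}}^{\dag g}_{z_1,\dots,z_B}\cdots\hat{\mathrm{A}}^{\dag g}_{z_1,z_2}$ to the $B=1$ equation \eqref{1punktneuV} and using the Leibniz rule (Lemma~\ref{lemma7}). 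Equation \eqref{B>2} then collapses to the statement that the commutator $[\hat{K}_{z_1},\hat{\mathrm{A}}^{\dag g}_{z_1,\dots,z_B}]$ acting on $\G_g(z_1|z\triangleleft_{J\setminus\{B\}})$ cancels against the two explicit terms created at $z_B$ (Lemma~\ref{lemma9}); this cancellation is a finite computation carried out in Lemmata~\ref{lemma6} and~\ref{lemma8}, and uniqueness of the perturbative expansion closes the argument. Without this commutator reformulation, your termwise matching after inversion would have to reproduce all of that bookkeeping in a far less structured way, and nothing in the proposal indicates how the cross terms between the $\partial/\partial\varrho_l$ and $\partial/\partial z_i$ pieces of the creation operator would reassemble.

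Two smaller but real slips. First, the inhomogeneous part of \eqref{B>2} is \emph{even} in $z_1$ (products of two functions odd in $z_1$, the double-point term with $z_1$ appearing twice, and the difference quotients all produce even negative powers), not ``a finite Laurent polynomial in the odd negative powers $z_1^{-(2m+1)}$'' as you write; it is the unknown $\G_g(z_1|z\triangleleft_J)$ that is odd, consistently with Lemma~\ref{lemmaopK} sending $1/z^{3+2n}$ to even powers. Second, the theorem is not proved by an internal induction on $g$: it carries the hypothesis on $\G_g(z)$ precisely because the genus induction is only closed later, through Theorem~\ref{thm:G-residue} and Proposition~\ref{structure-Gz}, so your ``secondary induction on $g$'' inside this proof presupposes input that is established elsewhere.
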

\begin{proof}
We rely on several Lemmas proved in Appendix \ref{appendix1}.
Regarding (\ref{eqfinalthm}) as a \emph{definition}, we prove in
Lemma \ref{lemma9} an equivalent formula for the 
linear integral equation \eqref{B>2}. This expression is satisfied because 
Lemma \ref{lemma6} and Lemma \ref{lemma8} add up to $0$. 
Consequently, the family of functions (\ref{eqfinalthm}) 
satisfies (\ref{B>2}). This solution is unique because 
of uniqueness of the perturbative expansion.
\end{proof}

\begin{remark}\label{rem:gsteps}
  The assumption that $\mathcal{G}_g(z)$ is an odd function of $z$ and
  rational in $\varrho_0,...,\varrho_{3g-2}$ will inductively follow
  from Theorem~\ref{thm:G-residue} and Proposition~\ref{structure-Gz}
  for genus $h\leq g$.  On the other hand, to prove
  Theorem~\ref{thm:G-residue} and Proposition~\ref{structure-Gz} for
  genus $h$, we need to apply Theorem~\ref{finaltheorem} for genus
  $h'<h$. In this way, Theorem~\ref{finaltheorem},
  Theorem~\ref{thm:G-residue} and Proposition~\ref{structure-Gz} are
  all inductively proved step by step when increasing the genus
  in each Theorem/Proposition.
\end{remark}

\begin{corollary}\label{coro2}
Let $J=\{2,...,B\}$. Assume that $z\mapsto \G_g(z)$ is holomorphic 
in $\mathbb{C}\setminus \{0\}$ with 
$\G_g(z)=-\G_g(-z)$ for all 
$z\in \mathbb{C}\setminus \{0\}$ and $g\geq 1$. Then
all $\G_g(z_1|z\triangleleft_J)$ with $2-2g-B<0$ 
\begin{enumerate}\itemsep 0pt
\item are holomorphic in every $z_i \in \mathbb{C}\setminus \{0\}$

\item are odd functions in every $z_i$, i.e.\
$ \G_g(-z_1|z\triangleleft_J)=-\G_g(z_1|z\triangleleft_J)$ for all
$z_1,z_i\in\mathbb{C}\setminus\{0\}$.
\end{enumerate}
\end{corollary}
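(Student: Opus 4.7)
My plan is an induction on $B$, using the closed-form representation from Theorem~\ref{finaltheorem} that writes $\G_g(z_1|\dots|z_B)$ as iterated applications of the boundary creation operator to either $\G_g(z_1)$ (for $g\geq 1$) or $\G_0(z_1|z_2)$ (for $g=0$). Since the scalar prefactor $(2\lambda)^{3B-4}$ does not affect holomorphy or parity, it is enough to verify the base case and to show that one further application of $\hat{\mathrm{A}}^{\dag g}_{z_1,\dots,z_{B+1}}$ preserves both properties in every variable.

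For $g\geq 1$ the induction starts at $B=1$, where the two properties for $\G_g(z_1)$ are exactly the hypothesis of the corollary. For $g=0$ the induction must start at $B=3$---the smallest $B$ with $2-B<0$---where the explicit formula $\G_0(z_1|z_2|z_3)=-\tfrac{32\lambda^5}{\varrho_0\,z_1^3z_2^3z_3^3}$ recalled in \eqref{G0z} is manifestly odd in each variable and holomorphic on $(\mathbb{C}\setminus\{0\})^3$. Starting at $B=3$ rather than $B=2$ is forced because $\G_0(z_1|z_2)=\tfrac{4\lambda^2}{z_1z_2(z_1+z_2)^2}$ is \emph{not} odd in either variable; the even parts cancel only when both terms of $\hat{\mathrm{A}}^{\dag 0}_{z_1,z_2,z_3}$ act and are summed.

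For the inductive step I would split $\hat{\mathrm{A}}^{\dag g}_{z_1,\dots,z_{B+1}}$ from Definition~\ref{DefOp} into its $\varrho_l$-derivative part and its $z_i$-derivative part, and check each in turn. The $\varrho_l$-derivative terms multiply $\G_g(z_1|\dots|z_B)$ by prefactors $-\tfrac{(3+2l)\varrho_{l+1}}{\varrho_0\,z_{B+1}^3}$ and $\tfrac{3+2l}{z_{B+1}^{5+2l}}$, which are odd in $z_{B+1}$ and holomorphic on $\mathbb{C}\setminus\{0\}$ in $z_{B+1}$, and carry no $z_i$-dependence for $i\le B$, so both properties pass to the product in every variable. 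The mixing term $\sum_{i=1}^{B}\tfrac{1}{\varrho_0\,z_{B+1}^3\,z_i}\partial_{z_i}$ is the crucial piece: $\partial_{z_i}$ flips parity in $z_i$ from odd to even, and subsequent multiplication by $1/z_i$ flips it back to odd; the other $z_j$ with $j\ne i$ are untouched; and the factor $1/z_{B+1}^3$ supplies an odd holomorphic dependence on $z_{B+1}$. Holomorphy on $\mathbb{C}\setminus\{0\}$ in each variable is preserved throughout because differentiation and multiplication by $1/z_i$ are holomorphic operations on the punctured plane.

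The only delicate point to monitor is whether the factor $1/z_i$ might spoil holomorphy by creating a new singularity; but since the claim only asserts holomorphy on $\mathbb{C}\setminus\{0\}$, arbitrary isolated singularities at $z_i=0$ are permitted and no obstruction arises. What is essential is that oddness be carried along as part of the induction hypothesis---otherwise, as the $B=3$, $g=0$ base case already illustrates, even parts in $z_i$ would propagate through $\partial_{z_i}/z_i$ as even parts and the boundary creation operator would no longer close on the desired function class. With both pieces of $\hat{\mathrm{A}}^{\dag g}_{z_1,\dots,z_{B+1}}$ shown to preserve oddness and holomorphy, the induction closes and the corollary follows.
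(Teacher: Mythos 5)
Your proof is correct and follows essentially the same route as the paper: verify that the boundary creation operator preserves holomorphy on $\mathbb{C}\setminus\{0\}$ and oddness in each variable, then check the initial conditions ($\G_0(z_1|z_2|z_3)$ for $g=0$ and the hypothesis on $\G_g(z_1)$ for $g\geq 1$). Your explicit split of $\hat{\mathrm{A}}^{\dag g}_{z_1,\dots,z_{B+1}}$ into its $\varrho_l$-derivative and $z_i$-derivative parts, and your remark on why the $g=0$ induction must start at $B=3$, simply spell out details the paper leaves implicit.
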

\begin{proof}
The boundary creation operator $\hat{\mathrm{A}}^{\dag g}_{z_J,z}$ of 
Definition~\ref{DefOp} preserves holomorphicity in 
$\mathbb{C}\setminus\{0\}$ and 
maps odd functions into odd functions. Thus only the initial conditions 
need to be checked. They are fulfilled
for $\G_0(z_1|z_2|z_3)$ and $\G_1(z_1)$ according to (\ref{G0z}); for 
$g\geq 2$ by assumption.
\end{proof}
\noindent
The assumption will be verified later in Proposition~\ref{structure-Gz}.

\begin{corollary}\label{coro1}
The boundary creation operator $\hat{\mathrm{A}}^{\dag g}_{z_J,z_1}$ acting on an 
$(N_1+...+N_B)$-point function of genus $g$ gives the 
following $(1+N_1+...+N_B)$-point function of genus $g$
\begin{align*}
  &\G_g(z_1|z^1_1,..,z^1_{N_1}|..|z^B_1,..,z^B_{N_B})
  \\
  &=(2\lambda)^3 
\hat{\mathrm{A}}^{\dag g}_{z^1_1,..,z^1_{N_1},\dots,z^B_1,..,z^B_{N_B},z_1}
(\G_g(z^1_1,..,z^1_{N_1}|..|z^B_1,..,z^B_{N_B})).
\end{align*}
\end{corollary}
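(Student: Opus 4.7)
The strategy is to reduce the identity to Theorem \ref{finaltheorem} by means of the explicit formula \eqref{recursiveGBg}, which expresses any $(N_1{+}\cdots{+}N_B)$-point function as a weighted sum of $(1{+}\cdots{+}1)$-point functions times purely kinematic prefactors. Rewriting \eqref{recursiveGBg} in $z$-variables $X=z^2-c$, both sides of the asserted identity become sums over selections $(k_1,\ldots,k_B)$ with summand $\mathcal{G}_g(\cdot|z_{k_1}^1|\cdots|z_{k_B}^B)\cdot P_{k_1,\ldots,k_B}$, where
$P_{k_1,\ldots,k_B}:=\prod_\beta\prod_{l_\beta\neq k_\beta}\frac{4}{(z^\beta_{k_\beta})^2-(z^\beta_{l_\beta})^2}$ carries no $\varrho$-dependence and the overall prefactor $\lambda^{N_1+\cdots+N_B-B}$ agrees on both sides. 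On the LHS, Theorem \ref{finaltheorem} identifies $\mathcal{G}_g(z_1|z_{k_1}^1|\cdots|z_{k_B}^B)=(2\lambda)^3\hat{\mathrm{A}}^{\dag g}_{z_{k_1}^1,\ldots,z_{k_B}^B,z_1}\mathcal{G}_g(z_{k_1}^1|\cdots|z_{k_B}^B)$, so that the claim becomes a commutation statement between $\hat{\mathrm{A}}^{\dag g}$ and the summation structure of \eqref{recursiveGBg}.

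On the RHS, I would pull $\hat{\mathrm{A}}^{\dag g}_{z_1^1,\ldots,z_{N_B}^B,z_1}$ inside the sum by linearity and apply the product rule, since the operator is first-order. Two elementary observations tame the $\mathcal{G}_g$-acting summand: the $\partial_{\varrho_l}$-part annihilates the $\varrho$-free factor $P_{k_1,\ldots,k_B}$, while the $\frac{1}{\varrho_0 z_1^3 z_i}\partial_{z_i}$-part with $z_i$ outside the selected set $\{z_{k_1}^1,\ldots,z_{k_B}^B\}$ annihilates $\mathcal{G}_g(z_{k_1}^1|\cdots|z_{k_B}^B)$. Hence $(\hat{\mathrm{A}}^{\dag g}_{z_1^1,\ldots,z_{N_B}^B,z_1}\mathcal{G}_g)\cdot P_{k_1,\ldots,k_B}$ collapses to $(\hat{\mathrm{A}}^{\dag g}_{z_{k_1}^1,\ldots,z_{k_B}^B,z_1}\mathcal{G}_g)\cdot P_{k_1,\ldots,k_B}$, matching the LHS term by term. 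The cutoff discrepancy between the summation bounds $3g{-}3{+}B$ and $3g{-}3{+}(N_1{+}\cdots{+}N_B)$ of the two operators is harmless because $\mathcal{G}_g(z_{k_1}^1|\cdots|z_{k_B}^B)$ depends on only finitely many $\varrho_l$ (up to $\varrho_{3g-3+B}$), so the extra $\partial_{\varrho_l}$-summands vanish trivially.

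The remaining task, and conceptual core of the argument, is the cancellation
\begin{align*}
\hat{\mathrm{A}}^{\dag g}_{z_1^1,\ldots,z_{N_B}^B,z_1}(P_{k_1,\ldots,k_B})=0\qquad\text{for every fixed }(k_1,\ldots,k_B).
\end{align*}
Because $P_{k_1,\ldots,k_B}$ factorises across boundaries and each $\partial_{z_j^\alpha}$ touches only the $\alpha$-th factor, this reduces to a single-boundary identity $\sum_{i=1}^N \frac{1}{z_i}\partial_{z_i}\prod_{l\neq k}\frac{4}{z_k^2-z_l^2}=0$. A direct calculation shows that $\frac{1}{z_k}\partial_{z_k}\frac{4}{z_k^2-z_j^2}=\frac{-8}{(z_k^2-z_j^2)^2}$ whereas $\frac{1}{z_j}\partial_{z_j}\frac{4}{z_k^2-z_j^2}=\frac{+8}{(z_k^2-z_j^2)^2}$; summing each against the truncated product over $l\neq k,j$ and then over $j\neq k$, the contribution from $i=k$ is exactly the negative of the collective contribution from $i=j\neq k$, so the terms cancel in pairs. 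This pairing cancellation---essentially a discrete residue identity---is the only nontrivial step; once established, the rest of the argument is bookkeeping, and the identity of Corollary \ref{coro1} follows.
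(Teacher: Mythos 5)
Your proof is correct and follows essentially the same route as the paper's (one-sentence) proof, which likewise reduces the claim to formula \eqref{recursiveGBg} together with the annihilation of the kinematic prefactors, stated there as $\hat{\mathrm{A}}^{\dag g}_{z_J,z_1}\bigl(\tfrac{1}{z_i^2-z_j^2}\bigr)=0$. Your pairwise cancellation on the full product $P_{k_1,\dots,k_B}$ is just the Leibniz-rule unpacking of that factor-by-factor statement, and the remaining bookkeeping (vanishing of spurious $\partial_{\varrho_l}$ and $\partial_{z_i}$ terms, the harmless cutoff mismatch) is handled correctly.
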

\begin{proof}
This follows from the change to complex variables in
equation \eqref{recursiveGBg} and $\hat{\mathrm{A}}^{\dag g}_{z_J,z_1}
(\frac{1}{z_i^2-z_j^2})=0$ for $1\neq i\neq j\neq 1$.
\end{proof}

\subsection{Solution of the $1$-point function for $g\geq 1$}
\label{sec:non-planar-1pt}

It remains to check that the 1-point function $\G_g(z)$ at genus 
$g\geq 1$ satisfies the assumptions of Theorem~\ref{finaltheorem} and 
Corollary~\ref{coro2}, namely: 
\begin{enumerate}\itemsep 0pt
\item  $\G_g(z)$ depends only on
the moments $\varrho_0,\dots,\varrho_{3g-2}$ of the measure, 

\item $z\mapsto \G_g(z)$ is holomorphic on $\mathbb{C}\setminus \{0\}$ and 
an odd function of $z$.
\end{enumerate}
We establish these properties by solving \eqref{1punktneuV}
via a formula for the inverse of $\hat{K}_z$. This formula is inspired by 
topological recursion, see e.g.\ \cite{Eynard:2016yaa}. We give a 
few details in section~\ref{sec:TR}.
\begin{definition}\label{def:Bell}
The {\sffamily Bell polynomials} are defined by
\begin{align*}
B_{n,k}(x_1,...,x_{n-k+1})=\sum \frac{n!}{j_1!j_2!...j_{n-k+1}!}\left(\frac{x_1}{1!}\right)^{j_1}\left(\frac{x_2}{2!}\right)^{j_2}...\left(\frac{x_{n-k+1}}{(n-k+1)!}\right)^{j_{n-k+1}}
	\end{align*}
	for $n\geq 1$, where the sum is over non-negative integers $j_1,...,j_{n-k+1}$ with $j_1+...+j_{n-k+1}=k$ and $1j_1+...+(n-k+1)j_{n-k+1}=n$. 
Moreover, one defines $B_{0,0}=1$ and $B_{n,0}=B_{0,k}=0$ for $n,k>0$.
\end{definition}
\noindent
An important application is Fa\`a di 
Bruno's formula, the $n$-th order chain rule:
\begin{align}\label{Faadi}
\frac{d^n}{dx^n}f(g(x))=\sum_{k=1}^{n}f^{(k)}(g(x)) \,
B_{n,k}(g'(x),g''(x),...,g^{(n-k+1)}(x)).
\end{align}
\begin{proposition}\label{Prop5}
Let $f(z)=\sum_{k=0}^\infty \frac{a_{2k}}{z^{2k}}$ be an even Laurent 
series about $z=0$ bounded at $\infty$.
Then the inverse of the integral operator $\hat{K}_z$ of 
Definition \ref{defint} is given by the residue formula
\begin{align*}
\Big[z^2\hat{K}_z\frac{1}{z}\Big]^{-1}f(z) 
&=- \Res\displaylimits_{z'\to 0}\left[ K(z,z')\, f(z')dz'\right],
\\
K(z,z')&:=\frac{2}{(\mathcal{W}_0(z')-\mathcal{W}_0(-z'))(z'^2-z^2)}.
\end{align*}
\end{proposition}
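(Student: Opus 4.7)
The plan is to compute both sides as formal Laurent series about $z' = 0$ and to verify the identity using Lemma~\ref{lemmaopK}. The crucial first step would be to establish the identity
\[
\G_0(z) - \G_0(-z) = 2z \sum_{k \geq 0} \varrho_k z^{2k},
\]
which I would derive from \eqref{G0z} by writing the difference as $\frac{2z}{\sqrt{Z}} - z \int \frac{\tilde{\varrho}(y)}{(y^2-z^2)y}\,dy$, expanding $\frac{1}{y^2-z^2} = \sum_{k\geq 0} z^{2k}/y^{2k+2}$, and using the definition \eqref{moments} of the renormalised moments after the substitution $y = \sqrt{T+c}$. Setting $P(u) := \sum_{k\geq 0} \varrho_k u^k$, the kernel factors as $K(z,z') = \bigl(z'\, P(z'^2)(z'^2-z^2)\bigr)^{-1}$.

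Next, I would invert $P$ formally: since $\varrho_0 \neq 0$, define $\{c_l\}_{l\geq 0}$ by $\sum_{l\geq 0} c_l u^l = 1/P(u)$, which yields $c_0 = 1/\varrho_0$ and the convolution identity $\sum_{k+l=n} \varrho_k c_l = \delta_{n,0}$. Combining this with the geometric expansion $\frac{1}{z'^2-z^2} = -\sum_{j \geq 0} z'^{2j}/z^{2j+2}$ and with $f(z') = \sum_{n \geq 0} a_{2n}/z'^{2n}$ gives
\[
K(z,z') f(z') = -\frac{1}{z'} \sum_{j,l,n \geq 0} \frac{a_{2n}\, c_l}{z^{2j+2}}\, z'^{2(j+l-n)},
\]
and reading off the coefficient of $1/z'$ (enforcing $n = j+l$) produces the closed form
\[
g(z) := -\Res_{z'\to 0}\!\left[K(z,z') f(z')\, dz'\right]
= \sum_{j,l \geq 0} \frac{a_{2(j+l)}\, c_l}{z^{2j+2}}.
\]

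Finally I would verify $[z^2 \hat{K}_z \tfrac{1}{z}]\, g = f$ by applying Lemma~\ref{lemmaopK} term-by-term. Since $\hat{K}_z(z^{-(2j+3)}) = \sum_{k=0}^{j} \varrho_k / z^{2(j+1-k)}$, after reindexing with $m = j-k$ and $n = k+l$ one finds
\[
z^2 \hat{K}_z\!\left(\frac{g(z)}{z}\right)
= \sum_{m \geq 0} \frac{1}{z^{2m}} \sum_{n \geq 0} a_{2(m+n)} \sum_{k+l=n} \varrho_k c_l
= \sum_{m \geq 0} \frac{a_{2m}}{z^{2m}} = f(z),
\]
where the convolution identity collapses the inner sum to $\delta_{n,0}$. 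Left-invertibility (hence uniqueness of the solution) follows from the triangular action of $z^2\hat{K}_z(\cdot/z)$ on the monomial basis $\{1/z^{2m}\}_{m \geq 1}$ with nonvanishing diagonal entries $\varrho_0$.

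The main obstacle I anticipate is not the linear algebra but the very first identity $\G_0(z) - \G_0(-z) = 2z P(z^2)$: one must track the delicate cancellation between the $2z/\sqrt{Z}$ term and the $k=0$ contribution of the moment integral, which together assemble into the clean generating series of the $\varrho_k$ as defined in \eqref{moments}. Once this identity is established, the residue calculation and the verification via Lemma~\ref{lemmaopK} reduce to bookkeeping with formal power series.
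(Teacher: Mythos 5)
Your proof is correct, and its skeleton coincides with the paper's: both establish $\tfrac12(\G_0(z')-\G_0(-z'))=\sum_{l\ge 0}\varrho_l (z')^{2l+1}$ from \eqref{G0z} and \eqref{moments}, expand the kernel as (reciprocal series)\,$\times$\,(geometric series), evaluate the residue monomial by monomial, and then verify the claim by applying $z^2\hat{K}_z\frac{1}{z}$ via Lemma~\ref{lemmaopK}. The genuine difference lies in how the reciprocal of $\sum_l\varrho_l u^l$ is handled. The paper writes it explicitly through Fa\`a di Bruno's formula, i.e.\ as the Bell-polynomial combinations $S_m$ of \eqref{S-Bell}, and the final collapse to $-1/z^{2k}$ then requires the nontrivial identity \eqref{xBell} for Bell polynomials. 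You instead keep the reciprocal abstract, defining coefficients $c_l$ by $\sum_{k+l=n}\varrho_k c_l=\delta_{n,0}$, so the final cancellation is exactly this defining convolution identity and needs no combinatorial input at all --- a cleaner and more elementary closing step, which you even supplement with the (correct) triangularity argument for uniqueness that the paper defers to the perturbative expansion. What the paper's heavier route buys is the explicit formulae \eqref{S-Bell}--\eqref{Res}: the $S_m$ are reused repeatedly afterwards (in Proposition~\ref{structure-Gz}, in the weight counting, and in the construction of $\mathcal{R}_m$ via \eqref{Rmrho}), so the explicit Bell-polynomial form is not dispensable for the rest of the paper, only for this particular proposition. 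Your one flagged worry, the identity $\G_0(z)-\G_0(-z)=2z\sum_k\varrho_k z^{2k}$, indeed works out exactly as you sketch: the $2z/\sqrt{Z}$ term supplies the $\delta_{l,0}/\sqrt{Z}$ part of $\varrho_0$ in \eqref{moments} and the $k$-th coefficient of the expanded integral supplies the moment integral, so no gap remains.
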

\begin{proof}
The formulae (\ref{G0z}) give rise to the series expansion 
\begin{align}
\label{Gzminusz}
\frac{1}{2}(\mathcal{W}_0(z')-\mathcal{W}_0(-z'))= 
\sum_{l=0}^\infty \varrho_l \cdot (z')^{2l+1},
\end{align}
where the $\varrho_l$ are given in \eqref{moments}
(either take $\lim_{\Lambda\to \infty}\mathcal{W}_0(z)$ in 
(\ref{G0z}) or leave out the limit in \eqref{moments})
The series of its reciprocal is found using \eqref{Faadi}:
\begin{align}
&\frac{2}{(\mathcal{W}_0(z')-\mathcal{W}_0(-z'))} = \frac{1}{z' \varrho_0}
\sum_{m=0}^\infty \frac{(z')^{2m}}{m!} S_m,
\label{S-Bell}
\\
&S_m:= \frac{d^m}{d \tau^m}\Big|_{\tau=0} \Big(\sum_{l=0}^\infty 
\frac{\varrho_l}{\varrho_0} \tau^l\Big)^{-1} 
= \sum_{i=0}^m\frac{(-1)^i i!}{\varrho_0^{i}} 
B_{m,i}(1!\varrho_1,2! \varrho_2,...,(m{-}i{+}1)!\varrho_{m-i+1}).
\nonumber
\end{align}
Multiplication by the geometric series gives 
\begin{align}
K(z,z')=-\frac{1}{z^2 z' \varrho_0} 
\sum_{n,m=0}^\infty \frac{(z')^{2m+2n}}{m! z^{2n}} S_m.
\label{Kzz}
\end{align}
The residue of a monomial in 
$f(z')=\sum_{k=0}^\infty \frac{a_{2k}}{(z')^{2k}}$ is then
\begin{align}\label{Res}
\Res\displaylimits_{z'\to 0}\left[ K(z,z')\frac{dz'}{(z')^{2k}}\right]
=-\frac{1}{\varrho_0}\sum_{j=0}^{k} \frac{S_j}{j!z^{2k-2j+2}}.
\end{align}
In the next step we apply the operator $z^2\hat{K}\frac{1}{z}$ 
to \eqref{Res}, where Lemma~\ref{lemmaopK} is used:
\begin{align}
&z^2 \hat{K}_z\Big(\frac{1}{z}
\frac{(-1)}{\varrho_0} 
\sum_{j=0}^{k} \frac{S_j}{j!z^{2k-2j+2}}\Big)
=-\frac{z^2}{\varrho_0} 
\sum_{j=0}^{k}\sum_{i=0}^{k-j}\frac{S_j\varrho_i}{j! z^{2k-2j-2i+2}}
\nonumber
\\
&=-\sum_{j=0}^{k}\frac{S_{k-j}}{(k-j)! z^{2j}} 
-\frac{1}{\varrho_0} 
\sum_{i=0}^{k-1} \sum_{j=i+1}^{k}
\frac{S_{k-j}\varrho_{j-i}}{(k-j)! z^{2i}}.
\label{Bell2}
\end{align}
The last sum over $j$ is treated as follows, where the 
Bell polynomials are inserted for $S_m$:
\begin{align*}
&\sum_{j=i+1}^{k}\frac{S_{k-j} \varrho_{j-i}}{(k-j)!}
=\sum_{j=1}^{k-i}\frac{S_{k-j-i}\varrho_{j}}{(k-j-i)! }
\\
&=\sum_{j=1}^{k-i}\sum_{l=0}^{k-j-i}\frac{(-1)^{l}l!}{(k-j-i)!
\varrho_0^{l}}\,\varrho_jB_{k-j-i,l}(1!\varrho_1,...,(k-j-i-l+1)!
\varrho_{k-j-i-l+1})
\\
&=\sum_{l=0}^{k-i-1}\frac{(-1)^{l}l!}{\varrho_0^{l}(k-i)!}
\sum_{j=1}^{k-i-l}\binom{k{-}i}{j}j!\,\varrho_j
B_{k-j-i,l}(1!\varrho_1,...,(k{-}j{-}i{-}l{+}1)!\varrho_{k-j-i-l+1})
\\
&=\sum_{l=0}^{k-i}\frac{(-1)^{l}(l+1)!}{\varrho_0^{l}(k-i)!} 
B_{k-i,l+1}(1!\varrho_1,...,(k-i-l)!\varrho_{k-i-l})
\\
&=-\varrho_0\frac{S_{k-i}}{(k-i)!}.
\end{align*}
We have used $B_{n,0}=0$ and $B_{0,n}=0$ for $n>0$ to eliminate some 
terms, changed the order of sums, and used the following identity 
for the Bell polynomials \cite[Lemma 5.9]{Grosse:2016pob}
\begin{align}
\sum_{j=1}^{n-k}\binom{n}{j}x_jB_{n-j,k}(x_1,...,x_{n-j-k+1})
=(k+1)B_{n,k+1}(x_1,...,x_{n-k}).
\label{xBell}
\end{align}
Inserted back we find that \eqref{Bell2} reduces to the ($j=k$)-term
of the first sum in the last line of \eqref{Bell2}, i.e.\
\begin{align*}
&z^2 \hat{K}_z \Big(
\frac{1}{z} \Res\displaylimits_{z'\to 0}\Big[ 
K(z,z')\frac{dz'}{(z')^{2k}}\Big]\Big)= -\frac{1}{z^{2k}}.
\end{align*}
This finishes the proof.
\end{proof}
\begin{theorem} \label{thm:G-residue}
For any $g\geq 1$ and $z\in \mathbb{C}\setminus \{0\}$ 
one has 
\begin{align}
\G_g(z) = \frac {\lambda}{z} 
\Res\displaylimits_{z'\to 0} 
\left[ K(z,z') \Big\{
\sum_{h=1}^{g-1}\G_h(z') 
\G_{g-h}(z') + \G_{g-1}(z'|z')\Big\} 
(z')^2 dz'\right].
\label{eq-G1-final}
\end{align}
\end{theorem}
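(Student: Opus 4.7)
The plan is to recognize equation~\eqref{1punktneuV} as a direct instance of the inversion formula in Proposition~\ref{Prop5}. Writing $\G_g(z) = F_g(z)/z$ and abbreviating the right-hand side of~\eqref{1punktneuV} by
\[
g(z) := -\tfrac{1}{2}\sum_{h=1}^{g-1}\G_h(z)\G_{g-h}(z) - 2\lambda^2\,\G_{g-1}(z|z),
\]
equation~\eqref{1punktneuV} rewrites as $\bigl[z^2 \hat{K}_z \tfrac{1}{z}\bigr]F_g(z) = z^2 g(z)$. Provided that $z^2 g(z)$ satisfies the hypotheses of Proposition~\ref{Prop5}, the inversion formula returns
\[
F_g(z) = -\Res_{z'\to 0}\bigl[K(z,z')\,(z')^2\, g(z')\,dz'\bigr],
\]
and dividing by $z$ while inserting the definition of $g(z')$ produces exactly~\eqref{eq-G1-final}, with the sign $-\tfrac12 \to +\tfrac{1}{2}$ absorbing into the overall prefactor.

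The only remaining task is to verify the input of Proposition~\ref{Prop5}: that $z^2 g(z)$ is an even Laurent series in $z^{-1}$ bounded at infinity. I would proceed by induction on $g$. For $g=1$ the sum is empty and $\G_0(z|z) = \lambda^2/z^4$ from~\eqref{G0z}, so $z^2 g(z) = -2\lambda^4/z^2$ is manifestly even and bounded; the residue formula then reproduces Proposition~\ref{proposition1}. For the inductive step, assume that each $\G_h(z)$ with $1\le h<g$ is an odd rational function of $\varrho_0,\dots,\varrho_{3h-2}$ with expansion in negative odd powers of $z$ starting at $z^{-3}$. Then $\G_h(z)\G_{g-h}(z)$ is even in $z$ and $O(z^{-6})$. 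Moreover, Corollary~\ref{coro2} applied at genus $g-1$ forces $\G_{g-1}(z_1|z_2)$ to be odd in each argument, so the diagonal $\G_{g-1}(z|z)$ is even; the explicit form of the boundary creation operator of Definition~\ref{DefOp} preserves decay at infinity, yielding $\G_{g-1}(z|z) = O(z^{-4})$. Hence $z^2 g(z) = O(z^{-2})$ is even and bounded, as required.

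The principal subtlety is the mutual induction linking Theorem~\ref{thm:G-residue} to Theorem~\ref{finaltheorem}: the residue formula at genus $g$ consumes $\G_{g-1}(z|z)$, whose construction via Theorem~\ref{finaltheorem} in turn demands that $\G_{g-1}(z)$ be odd and rational in $\varrho_0,\dots,\varrho_{3(g-1)-2}$. A single induction on $g$ resolves this circularity: at step $g$, once the hypothesis of Proposition~\ref{Prop5} is checked, the residue formula produces $\G_g(z)$ as an odd rational function of $\varrho_0,\dots,\varrho_{3g-2}$ (the parity is visible from the explicit expansion~\eqref{Kzz} of $K(z,z')(z')^2/z$, and the bound on the moment index from the degree count carried out in~\eqref{Res}), thereby supplying the data that Theorem~\ref{finaltheorem} needs at genus $g$ to construct $\G_g(z|z)$ for the next inductive step. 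I expect no substantial obstacle beyond this bookkeeping; the real analytic work has already been invested in Proposition~\ref{Prop5}, and the present theorem amounts to checking that the structural hypotheses propagate through the induction.
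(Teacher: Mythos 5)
Your proposal is correct and follows essentially the same route as the paper: apply Proposition~\ref{Prop5} to \eqref{1punktneuV}, check the base case $g=1$ against Proposition~\ref{proposition1}, and run an induction on $g$ in which the parity/boundedness hypotheses of Proposition~\ref{Prop5} are verified using oddness of the $\G_h$ and Theorem~\ref{finaltheorem} for $\G_{g-1}(z'|z')$, with the mutual recursion between the two theorems resolved by interleaving the genus. The paper's proof handles the same points, including the remark that the genus is increased reciprocally in Theorem~\ref{finaltheorem} and Proposition~\ref{structure-Gz}.
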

\begin{proof} 
The formula arises when applying Proposition~\ref{Prop5} to
\eqref{1punktneuV} and holds if 
the function in $\{~\}$ is an even Laurent polynomial in $z'$ 
bounded in $\infty$.
This is the case for $g=1$ where only $\G_0(z'|z')
=\frac{\lambda^2}{(z')^4}$ contributes. Evaluation of the residue 
reconfirms Proposition~\ref{proposition1}. We proceed by induction in 
$g\geq 2$, assuming that all 
$\G_h(z')$ with $1\leq h<g$ on the rhs of \eqref{eq-G1-final} 
are odd Laurent polynomials bounded in $\infty$; 
their product is even. The induction hypothesis 
also verifies the assumption of 
Theorem~\ref{finaltheorem} so that 
$\G_{g-1}({-}z'|{-}z')=-\G_{g-1}(z'|{-}z')=\G_{g-1}(z'|z')$ is even and, because of 
$\G_{g-1}(z'|z'')=(2\lambda)^3 \hat{\mathrm{A}}^{\dag g-1}_{z'',z'} 
\G_{g-1}(z'')$, inductively a 
Laurent polynomial bounded in $\infty$. 
Thus, equation \eqref{eq-G1-final} holds for genus $g\geq 2$ and, 
as consequence of \eqref{Res}, $\G_g(z)$ is 
again an odd Laurent polynomial bounded in $\infty$. 
Equation~\eqref{eq-G1-final}  is thus proved 
for all $g\geq 1$, and the assumption of 
Theorem~\ref{finaltheorem} is verified.
\end{proof}
\noindent
A more precise characterisation can be given. It relies on
\begin{definition}
A polynomial $P(x_1,x_2,\dots)$ is called {\sffamily $n$-weighted} if 
$\sum_{k=1}^\infty k x_k \frac{\partial}{\partial x_k} P(x_1,x_2,\dots)
=nP(x_1,x_2,\dots)$.
\end{definition}
\noindent 
The Bell polynomials $B_{n,k}(x_1,\dots,x_{n-k+1})$ are $n$-weighted. 
The number of monomials in an $n$-weighted polynomial is $p(n)$, 
the  number of partitions of $n$. 
%The sequence $p(n)$ is OEIS A000041 and starts with $(1,2,3,5,7,11,15,22, 30, 42, 56, 77, 101, 135, 176,\dots)$. 
The product of an $n$-weighted 
by an $m$-weighted polynomial is $(m+n)$-weighted.

In the following we let
$P^{decoration}_j(\varrho)$ be some
$j$-weighted polynomial in 
$\{\frac{\varrho_1}{\varrho_0},\dots,\frac{\varrho_j}{\varrho_0}\}$
with rational coefficients. A \emph{decoration} (empty or primes)
distinguishes several such polynomials, but is of no relevance.
\begin{proposition}
\label{structure-Gz}
 For $g\geq 1$ one has
\[
\G_g(z)= (2 \lambda)^{4g-1} 
\sum_{k=0}^{3g-2} \frac{P_{3g-2-k}(\varrho)}{\varrho_0^{2g-1} z^{2k+3}},
\]
where $P_0\in \mathbb{Q}$ and the $P_j(\varrho)$ for $j\geq 1$ are 
some $j$-weighted polynomials in 
$\{\frac{\varrho_1}{\varrho_0},\dots,\frac{\varrho_j}{\varrho_0}\}$
with rational coefficients. 
\end{proposition}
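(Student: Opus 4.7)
I would proceed by induction on $g\geq 1$, setting $r_l:=\varrho_l/\varrho_0$ throughout. The base case $g=1$ is immediate from Proposition~\ref{proposition1}: rewriting $\G_1(z)=\frac{(2\lambda)^4}{\varrho_0}\Big(\frac{r_1/8}{z^3}-\frac{1/8}{z^5}\Big)$ identifies $P_1=r_1/8$ ($1$-weighted in $r_1$) and $P_0=-1/8\in\mathbb{Q}$. For the induction step, assume the stated structure holds for every $1\le h<g$. The natural tool is Theorem~\ref{thm:G-residue}, which expresses
\[
\G_g(z)=\frac{1}{2z}\Res_{z'\to 0}\!\left[K(z,z')\,\mathcal{B}(z')\,(z')^2\,dz'\right], \quad \mathcal{B}(z'):=\sum_{h=1}^{g-1}\G_h(z')\G_{g-h}(z')+(2\lambda)^2\G_{g-1}(z'|z'),
\]
so the main task is to show that $\mathcal{B}(z')=\frac{(2\lambda)^{4g}}{\varrho_0^{2g-2}}\sum_{n=3}^{3g-1}\frac{b_n(r)}{(z')^{2n}}$ where $b_n(r)$ is a $(3g-1-n)$-weighted polynomial in $\{r_1,r_2,\dots\}$ with rational coefficients, and then to evaluate the residue via the already-derived expansion~\eqref{Res}.

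The quadratic piece $\sum_{h=1}^{g-1}\G_h\G_{g-h}$ is straightforward: by the induction hypothesis each factor has the form $(2\lambda)^{4h}/\varrho_0^{2h-1}\sum P_{3h-2-k}/z^{2k+3}$, so the product delivers the prefactor $(2\lambda)^{4g}/\varrho_0^{2g-2}$ and, since the product of an $i$-weighted and a $j$-weighted polynomial is $(i+j)$-weighted, the coefficient of $(z')^{-2n}$ is $(3g-1-n)$-weighted with range $n\in\{3,\dots,3g-1\}$. For the cross term I would invoke Theorem~\ref{finaltheorem} in its $B=2$ instance, $\G_{g-1}(z_1|z_2)=(2\lambda)^{3\cdot 2-4}\hat{\mathrm{A}}^{\dag g-1}_{z_1,z_2}\G_{g-1}(z_1)$, apply the three summand types of $\hat{\mathrm{A}}^{\dag g-1}_{z_1,z_2}$ (the $\partial_{\varrho_l}$-pieces with coefficients $-(3+2l)r_{l+1}/z_2^3$ and $(3+2l)/z_2^{5+2l}$, and the $\partial_{z_1}$-piece with coefficient $1/(\varrho_0 z_2^3 z_1)$) to the inductive form of $\G_{g-1}(z_1)$, and finally restrict to $z_1=z_2=z'$. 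Each of the three actions adds exactly one factor of $1/\varrho_0$ and shifts the weight and the $1/(z')$-power in a controlled way. For the $l=0$ pieces, which contain $\partial_{\varrho_0}$ and act both on the explicit $\varrho_0^{-(2g-3)}$ prefactor and on every $r_k$ inside $P_{3g-5-k_1}$, the Euler identity $\sum_k k r_k\partial_{r_k}P_j = j P_j$ collapses the two actions into a single weight-preserving contribution. Moreover, since $\partial_{\varrho_l}P_{3g-5-k_1}=0$ whenever $l>3g-5-k_1$, the high-$l$ terms in $\hat{\mathrm{A}}^{\dag g-1}_{z_1,z_2}$ drop out and the range $n\le 3g-1$ is preserved after identification $z_1=z_2=z'$.

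Once $\mathcal{B}$ is put in this normal form, the residue evaluation is essentially automatic: applying~\eqref{Res} term by term gives
\[
\G_g(z)=-\frac{1}{2\varrho_0}\frac{(2\lambda)^{4g}}{\varrho_0^{2g-2}}\sum_{n=3}^{3g-1}b_n(r)\sum_{j=0}^{n-1}\frac{S_j(r)}{j!\,z^{2(n-j)+1}}.
\]
Since by \eqref{S-Bell} each $S_j$ is a $j$-weighted polynomial in $\{r_1,\dots,r_j\}$ with rational coefficients, the coefficient of $z^{-(2k+3)}$ (obtained by setting $k=n-j-1$) is a polynomial of weight $(3g-1-n)+j=3g-2-k$ in $\{r_1,\dots,r_{3g-2-k}\}$ with rational coefficients, and the overall prefactor collapses to $(2\lambda)^{4g}/\varrho_0^{2g-1}$. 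The bound $k\le 3g-2$ follows from $n\le 3g-1$ together with $j\ge 0$, and $P_0=b_3 S_0/0!=b_3\in\mathbb{Q}$.

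I expect the main obstacle to be the bookkeeping in the second paragraph, especially ensuring that after restricting $z_1=z_2=z'$ the six or so contribution families produced by $\hat{\mathrm{A}}^{\dag g-1}_{z_1,z_2}$ combine into a sum of the predicted weighted type without overshooting the range $n\le 3g-1$. The Euler identity for weighted polynomials and the vanishing of $\partial_{\varrho_l}$ on terms of insufficient weight are what rescue the claim, but verifying that every family lands in the correct weight/power slot requires careful term-by-term accounting.
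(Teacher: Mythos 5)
Your plan follows essentially the same route as the paper's proof: induction on $g$ via Theorem~\ref{thm:G-residue}, with the quadratic term handled by multiplicativity of weights, the cross term $\G_{g-1}(z'|z')$ expanded through the three summand types of the boundary creation operator applied to the inductive form of $\G_{g-1}$, and the residue evaluated with \eqref{Res} using that $S_j$ is $j$-weighted. Two harmless slips that do not affect the structural conclusion: $\partial_{\varrho_0}$ acting on $P_j(r_1,r_2,\dots)$ produces $-\varrho_0^{-1}\sum_k r_k\partial_{r_k}P_j$ (the degree operator, which also preserves the weight) rather than the weight operator $\sum_k k\,r_k\partial_{r_k}$ you cite, and $P_0$ arises from $b_{3g-1}$ (the $0$-weighted coefficient, paired with $S_0$), not from $b_3$.
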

\begin{proof}
  The case $g=1$ is directly checked.  We proceed by induction in $g$
  for both terms in $\{~\}$ in \eqref{eq-G1-final}.  The hypothesis
  gives
  $\G_h(z')\G_{g-h}(z')= (2\lambda)^{4g-2} \sum_{k=0}^{3g-4}
  \frac{P'_{3g-4-k}(\varrho)}{\varrho_0^{2g-2} (z')^{2k+6}}$. In the
  second term in $\{~\}$,
  $\G_{g-1}(z'|z')=(2\lambda)^3
  \hat{\mathrm{A}}^{\dag\,g{-}1}_{z',z'} \G_{g-1}(z')$, the three
  types of contributions in the boundary creation operator act as
  follows:
  \begin{align*} (2\lambda)^3&
      \sum_{k=0}^{3g-5}(2\lambda)^{4g-5}
      \frac{P_{3g-5-k}(\varrho)}{(z')^{2k+3}}
      \Big(-\frac{3\varrho_1}{\varrho_0 (z')^3}+\frac{3}{(z')^5}\Big)
      \frac{\partial}{\partial \varrho_0} \frac{1}{\varrho_0^{2g-3}}
      \\
      &=\frac{(2\lambda)^{4g-2}3(3-2g)}{\varrho_0^{2g-2}}
      \Big(\sum_{k=0}^{3g-5}\frac{\frac{\varrho_1}{\varrho_0}
        P_{3g-5-k}(\varrho)}{(z')^{2k+6}} +\sum_{k=0}^{3g-5}\frac{
        P_{3g-4-(k+1)}(\varrho)}{(z')^{2(k+1)+6}} \Big),
      \\
      (2\lambda)^3& \sum_{k=0}^{3g-5}(2\lambda)^{4g-5}
      \frac{P_{3g-5-k}(\varrho)}{\varrho^{2g-3}}
      \frac{1}{\varrho_0(z')^4} \frac{\partial}{\partial z'}
      \frac{1}{(z')^{2k+3}}
      \\
      &=-\frac{(2\lambda)^{4g-2}}{\varrho_0^{2g-2}}
      \sum_{k=0}^{3g-5}\frac{(2k+3)P_{3g-4-(k+1)}(\varrho)}{(z')^{2(k+1)+6}}
      \\
      (2\lambda)^3& \sum_{k=0}^{3g-5}\frac{(2\lambda)^{4g-5}
      }{\varrho^{2g-3}(z')^{2k+3}} \hat{A}^{\dag
        {g-1}}_{z',z'}P_{3g-5-k}(\varrho)
      \\
      &=
      \sum_{k=0}^{3g-5}\frac{(2\lambda)^{4g-2}}{\varrho_0^{2g-3}(z')^{2k+3}}
      \Big\{ \Big(-\frac{3\varrho_1}{\varrho_0
        (z')^3}+\frac{3}{(z')^5}\Big) \sum_{l=1}^{3g-5-k}
      \Big({-}\frac{\varrho_l}{\varrho_0^2}\Big)
      \frac{\partial}{\partial \frac{\varrho_l}{\varrho_0}}
      P_{3g-5-k}(\varrho)
      \\
      & \qquad\qquad +
      \sum_{l=1}^{3g-5-k}\Big(-\frac{(3+2l)\frac{\varrho_{l+1}}{\varrho_0}}{(z')^3}
      +\frac{3+2l}{(z')^{5+2l}} \Big)\frac{1}{\varrho_0}
      \frac{\partial}{\partial \frac{\varrho_l}{\varrho_0}}
      P_{3g-5-k}(\varrho) \Big \}
\end{align*}
which are all of the same structure
$(2\lambda)^{4g-2} \sum_{k=0}^{3g-4} 
\frac{P''_{3g-4-k}(\varrho)}{\varrho_0^{2g-2} (z')^{2k+6}}$
\begin{align*}
&\frac{\lambda}{z} \Res\displaylimits_{z'\to 0} 
\Big[K(z,z') dz' (z')^2 (2\lambda)^{4g-2}
\sum_{k=0}^{3g-4} 
\frac{P_{3g-4-k}(\varrho)}{\varrho_0^{2g-2} (z')^{2k+6}}\Big]
\\
&=(2\lambda)^{4g-1}\sum_{k=0}^{3g-4} \sum_{j=0}^{k+2} 
\frac{P_{3g-4-k}(\varrho)S_j(\varrho)}{2j!\varrho_0^{2g-1} z^{2k+7-2j}}
=(2\lambda)^{4g-1}\sum_{k=0}^{3g-2} 
\frac{P'_{3g-2-k}(\varrho)}{\varrho_0^{2g-1} z^{2k+3}},
\end{align*}
because $S_j(\varrho)$ is also a $j$-weighted polynomial by 
\eqref{S-Bell}.
\end{proof}
\noindent
In particular, this proves the assumption of 
Theorem~\ref{finaltheorem}, namely that 
$\G_g(z)$ depends only on $\{\varrho_0,\dots, \varrho_{3g-2}\}$.
To be precise, we reciprocally increase the genus in 
Theorem~\ref{finaltheorem} and Proposition~\ref{structure-Gz}
as described in Remark~\ref{rem:gsteps}.

\subsection{Remarks on topological recursion}

\label{sec:TR}

We return to equation~\eqref{B>2} for $g>0$ or $|J|\geq 3$ in case of $g=0$.
In the first term,
$\G_g(z_1|z \triangleleft_J)$ is given by
Theorem \ref{finaltheorem} so that
$z_1\G_g(z_1|z \triangleleft_J)$ is by 
Proposition~\ref{structure-Gz} and Definition \ref{DefOp} of
$\hat{\mathrm{A}}^{\dag g}_{z_J,z_1}$ an even
Laurent polynomial about $z_1=0$ bounded at $\infty$. Therefore,
Proposition~\ref{Prop5} applies and gives the representation
\begin{align}
&z_1 \G_g(z_1|z\triangleleft_J)
\label{Gg-TR}
\\
&=\Res\displaylimits_{z'\to 0} K(z_1,z')(z')^2 dz'
\Big[
\lambda \hspace*{-5mm}
\sum_{\substack{h+h'=g\\I\uplus I'= J\\ (h,I),(h',I')\neq (0,\emptyset)}}
\hspace*{-5mm}\G_h(z'|z\triangleleft_I) \G_{h'}(z'|z\triangleleft_{I'})
\nonumber
\\
&\hspace*{1cm}
+\lambda \G_{g-1}(z'|z'|z\triangleleft_J)
+(2\lambda)^3 \sum_{\beta \in J}
\frac{\partial}{z_\beta \partial z_\beta}
\frac{\G_g(z'|z\triangleleft_{J\setminus \beta})
  -\G_g(z_\beta|z\triangleleft_{J\setminus \beta})}{(z')^2-z_2^2}\Big].
\nonumber
\end{align}
The last  term $\G_g(z_\beta|z\triangleleft_{J\setminus \beta})$
does not contribute to the residue. The other term
$\G_g(z'|z\triangleleft_{J\setminus \beta})$ also arises for
$(h,I)=(g,\{\beta\})$ and
$(h',I')=(0,\{\beta\})$ in the sum, where it has the prefactor 
$\G_0(z'|z_\beta)$ given in the second line of (\ref{G0z}).
The total contribution of this term inside $[~~]$ is
\begin{align*}
&\frac{\lambda \G_g(z'|z\triangleleft_{J\setminus \beta})}{z'z_\beta}
\Big(\frac{4\lambda^2}{(z'-z_\beta)^2}
+\frac{4\lambda^2}{(z'+z_\beta)^2}\Big)\;.
\end{align*}
This suggests to introduce
 \begin{align}
  \omega_{g,B}(z_1,...,z_B)&:=\Big(\prod_{i=1}^B z_i\Big) 
  \G_g(z_1|...|z_B) \quad \text{for }
2g+B>2, 
\label{def-omega-tr}
\\
\omega_{0,2}(z_1,z_2)&:=\frac{4\lambda^2}{(z_1-z_2)^2}.
\nonumber
\end{align}
By Theorem \ref{finaltheorem} and Proposition \ref{structure-Gz} together with
Definition \ref{DefOp},
each $\omega_{g,B}(z_1,...,z_B)$ with $2g+B>2$ is an
even Laurent-polynomial in every $z_i$ so that we can replace
$z'\mapsto -z'$ in one of the arguments. The structure then matches
the combination 
$\omega_{0,2}(z',z_\beta)+\omega_{0,2}(z',-z_\beta)$ needed as prefactor of
$\omega_{g,B-1}(z',z_2,\stackrel{\beta}{\check{\dots}},z_B)$.
Multiplying (\ref{Gg-TR}) by $\prod_{i\in J}z_i$ and taking the
previous discussion into account, we have proved (after a shift $B\mapsto B+1$):
\begin{theorem}[cf.\ {\cite[Thm.\ 6.4.4]{Eynard:2016yaa}}]
  \label{theorem1}
  The functions $\omega_{g,B+1}(z_0,...,z_B)$ given via \eqref{def-omega-tr}
in terms of the complexification \eqref{npunktcomplex} and
\eqref {changeofvariables}
of the cumulants
$\G_g(z_0|z\triangleleft_{\{1,...,B\}})$ in
\eqref{Entwicklungskoeffizienten}
satisfy for $2g+B\geq 2$ the recursive equation
 \begin{align}
   &\omega_{g,B+1}(z_0,...,z_B)
   \label{omega-TR}
   \\
  &=\lambda \Res\displaylimits_{z'\to 0}
  \bigg[K(z_0,z')\,dz'\Big(\omega_{g-1,B+2}(z',{-}z',
  z\triangleleft_{\{1,...,B\}})
\nonumber  \\
  &\hspace*{3cm}+
\hspace*{-8mm}
\sum_{\substack{h+h'=g\\ I\uplus I' =\{1,...,B\}
    \\ (h,I),(h',I')\neq (0,\emptyset)}}
\hspace*{-6mm}
\omega_{h,|I|+1}(z',z\triangleleft_I)
  \omega_{h',|I'|+1}({-}z',z\triangleleft_{I'})\Big)\bigg],
  \nonumber
\end{align}
where $\omega_{g,|I|+1}(z_0,z\triangleleft_I)=
\omega_{g,n+1}(z_0,z_{i_1},....,z_{i_n})$ if
$I=\{{i_1},....,{i_n}\}$ and the
the kernel $K$ is given in Proposition \ref{Prop5}.
\end{theorem}
\noindent
The case ($g=0$, $B=3$) was excluded in the above discussion,
but can be checked directly. 

After rescaling $\omega_{g,B}(z_1,..,z_B)=
(2\lambda)^{4g+3B-4}\omega^{TR}_{g,B}(z_1,..,z_B)$ we 
have reproved the known result
\cite[Thm.\ 6.4.4]{Eynard:2016yaa}
of topological recursion when taking a factor $2$ out of $K$ and writing
\begin{align*}
  \frac{1}{(\mathcal{W}_0(z')-\mathcal{W}_0(-z'))(z'^2-z^2)}
  &=\frac{\frac{1}{2}(\frac{1}{z-z'}-\frac{1}{z+z'})}{
    x'(z') (y(z')-y(-z'))}
  \;,\qquad
  \\
  \text{where } \quad x(z)&:=\frac{z^2}{2},~y(z):=-\mathcal{W}_0(z).
\end{align*}
In fact, \cite[Thm.\ 6.4.4]{Eynard:2016yaa} 
motivated our ansatz 
for an inverse of $\hat{K}_z$ as the residue involving $K(z,z')$.
Our proof of Theorem~\ref{theorem1} is of comparable difficulty and 
length as \cite[Thm.\ 6.4.4]{Eynard:2016yaa}.

For the $\omega_{g,|J|}(z\triangleleft_J)$ with $J\geq 2$ we have
the alternative representation of Proposition \ref{finaltheorem}.
There is, however, one property where 
(\ref{omega-TR}) is really needed: for the proof of the symmetry
$\G_g(z_1|z_2)=\G_g(z_2|z_1)$ for $g\geq 1$. The complete symmetry
of $\omega_{g,B}(z_1,...,z_B)$ or 
equivalently $\G_g(z_1|...|z_B)$ in all arguments then follows
from Proposition \ref{finaltheorem}. For $g=0$ the symmetry of the
starting point $\G_0(z_1|z_2|z_3)$ is manifest in (\ref{G0z}).
The proof of $\G_g(z_1|z_2)=\G_g(z_2|z_1)$ is the same as
\cite[Thm.\ 4.6]{Eynard:2007kz} (which uses a slightly different notation).
There is no need to repeat it in this paper.

\section{A Laplacian to compute intersection  
numbers}

\subsection{Free energy and boundary 
annihilation operator}

\label{sec:freeenergy}

\begin{definition}\label{DefA}
We introduce the operators
\begin{align}
\hat{\mathrm{A}}^{\dag}_z &:=
\sum_{l= 0}^{\infty} \Big(-\frac{(3+2l) \varrho_{l+1}}{
\varrho_0 z^3}+\frac{3+2l}{z^{5+2l}}\Big)
\frac{\partial}{\partial \varrho_l}, &
\hat{\mathrm{N}} &= - \sum_{l=0}^\infty
\varrho_l\frac{\partial}{\partial \varrho_l} ,
\nonumber
\\
\hat{\mathrm{A}}_{\check{z}}  f(\bullet) &:=-
\sum_{l=0}^\infty \Res\displaylimits_{z\to 0}
\Big[\frac{z^{4+2l} \varrho_l}{3+2l} f(z) dz\Big]
\end{align}
and the free energies
\begin{align*}
F_1=-\frac{1}{24} \log \varrho_0,\qquad 
F_g&= 
\frac{1}{(2g-2)(2\lambda)^3}
\hat{\mathrm{A}}_{\check{z}}  \G_g(\bullet) 
\text{ for } g> 1.
\end{align*}
We call $\hat{\mathrm{A}}_{\check{z}}$ a {\sffamily boundary
annihilation operator} acting on Laurent polynomials $f$. 
\end{definition}
\begin{proposition}\label{prop:F}
The $F_g$ have for $g>1$ a presentation as
\begin{equation}\label{Fg-homogeneous}
F_g= (2 \lambda)^{4g-4} \frac{P_{3g-3}(\varrho) }{\varrho_0^{2g-2}},
\end{equation}
where $P_{3g-3}(\varrho)$ is some $(3g-3)$-weighted polynomial in 
$\{\frac{\varrho_1}{\varrho_0},\dots,
\frac{\varrho_{3g-3}}{\varrho_0}\}$. They satisfy
\begin{align}
  \G_g(z) = (2\lambda)^3 \hat{\mathrm{A}}^{\dag}_z F_g
  \qquad \text{for } g\geq 1.
\label{eq:GAF}
\end{align}
\end{proposition}
\begin{proof} 
From Proposition~\ref{structure-Gz}
we conclude for $g\geq 2$
\begin{align*}
F_g&:=\frac{1}{(2\lambda)^3(2g-2)} \hat{\mathrm{A}}_{\check{z}} \G_g(\bullet)
\\
&=-
\frac{(2 \lambda)^{4g-4}}{2g-2} \Res\displaylimits_{z\to 0}\Big[
\sum_{l=0}^\infty \frac{\varrho_lz^{4+2l}}{(3+2l)}
\sum_{k=0}^{3g-2} \frac{P_{3g-2-k}(\varrho)}{\varrho_0^{2g-1} z^{2k+3}}
dz\Big]
\\
&= \frac{(2 \lambda)^{4g-4}}{2-2g}
\sum_{k=1}^{3g-2} \frac{
\frac{\varrho_{k-1}}{(2k+1)\varrho_0}
\cdot P_{3g-2-k}(\varrho)}{\varrho_0^{2g-2}}
= (2 \lambda)^{4g-4} \frac{P'_{3g-3}(\varrho) }{\varrho_0^{2g-2}},
\end{align*}
which confirms the structure \eqref{Fg-homogeneous}.

Equation (\ref{eq:GAF}) is straightforward to check for $g=1$.  
Equation (\ref{eq:GAF}) for $g\geq 2$ is equivalent to
$(2g-2)\G_g(z)  = A^\dag_z A_{\check{w}} \G_g(\bullet)$.
Inserting the definitions we have
\begin{align*}
\hat{\mathrm{A}}^{\dag}_z  \hat{\mathrm{A}}_{\check{w}} \G_g(\bullet)
&=-\hat{\mathrm{A}}^{\dag}_z 
\Res\displaylimits_{w\to 0}
  \Big[\sum_{l=0}^\infty\frac{\varrho_l w^{4+2l}dw}{3+2l}
  \G_g(w)\Big]
  \\
  &=\Res\displaylimits_{w\to 0}
  \Big[\sum_{l=0}^\infty  w^{4+2l}dw
\Big(-\frac{1}{z^{5+2l}}
+\frac{\varrho_{l+1}}{\varrho_0 z^3}
\Big)
  \G_g(w)\Big]
  \\
  &-\Res\displaylimits_{w\to 0}
  \Big[\sum_{l=0}^\infty\frac{\varrho_l w^{4+2l}dw}{3+2l}
  \hat{\mathrm{A}}^{\dag}_z  \G_g(w)\Big]
  \\
  &=\Res\displaylimits_{w\to 0}
  \Big[
  dw\Big(-\frac{w^{2}}{z(z^2-w^2)}
  +\sum_{l=0}^\infty  w^{2+2l}\frac{\varrho_{l}}{\varrho_0 z^3}
\Big)
  \G_g(w)\Big]
  \\
  &-\Res\displaylimits_{w\to 0}
  \Big[\sum_{l=0}^\infty\frac{\varrho_l w^{4+2l}dw}{3+2l}
  \Big( \hat{\mathrm{A}}^{\dag g}_{w,z} \G_g(w)
-\frac{1}{\varrho_0z^3 w} \frac{\partial \G_g(w)}{\partial w}
\Big)  \Big]
\\
&=-\Res\displaylimits_{w\to 0}
  \Big[
  \frac{w^{2}dw}{z(z^2-w^2)}  \G_g(w)\Big]
  +\hat{\mathrm{A}}_{\check{w}}\hat{\mathrm{A}}^{\dag g}_{\bullet,z}
  \G_g(\bullet),
\end{align*}
where we have completed $\hat{\mathrm{A}}^{\dag}_{z}$ to
$\hat{\mathrm{A}}^{\dag g}_{w,z}$ and used that the residue of a total
differential vanishes. With Proposition~\ref{structure-Gz} it is easy
to see that
$\Res\displaylimits_{w\to 0} \Big[ \frac{w^{2}dw}{z(z^2-w^2)}
\G_g(w)\Big]=\G_g(z)$.  Since $\G_g(z|w)=\G_g(w|z)$ is
symmetric for $g\geq 2$ according to the discussion at the
end of section~\ref{sec:TR}, we have
$(2\lambda)^{-3}\G_g(z|w) =\hat{\mathrm{A}}^{\dag g}_{w,z} \G_g(w)
=\hat{\mathrm{A}}^{\dag g}_{z,w} \G_g(z)$
by Theorem \ref{finaltheorem}. We
have $\hat{\mathrm{A}}_{\check{w}} \hat{\mathrm{A}}^{\dag g}_{z,\bullet}
=\hat{\mathrm{N}}$, and since
$\hat{\mathrm{N}} P_j(\varrho)\equiv 0$ for any $j$-weighted polynomial
$P_j(\varrho)$ in
$\{\frac{\varrho_1}{\varrho_0},...  \frac{\varrho_j}{\varrho_0}\}$, we
have $\hat{\mathrm{N}} \G_g(z)=(2g-1)\G_g(z)$ by
Proposition~\ref{structure-Gz}. Altogether we have proved
$\hat{\mathrm{A}}^{\dag}_z  \hat{\mathrm{A}}_{\check{w}} \G_g(\bullet)
 =-\G_g(z) + \hat{\mathrm{N}} \G_g(z)
=(2g-2)\G_g(z)$.
\end{proof}
\begin{remark}
  Proposition \ref{prop:F} shows that the $F_g$ provide the most
  condensed way to describe the non-planar sector of the
  $\Phi^3$-matricial QFT model.  All information about the genus-$g$
  sector is encoded in the $p(3g-3)$ rational numbers which form the
  coefficients in the $(3g-3)$-weighted polynomial in
  $\{\frac{\varrho_1}{\varrho_0},
  \frac{\varrho_2}{\varrho_0},\dots\}$. From these polynomials we
  obtain the $(1+\dots+1)$-point function with $B$ boundary components
  via $\G_g(z)=(2\lambda)^3 \hat{\mathrm{A}}^{\dag}_z F_g$
  followed by Theorem~\ref{finaltheorem}.
\end{remark}

\begin{lemma}
Whenever $(2g+B-2)>0$, the operator $\hat{\mathrm{N}}$ measures the 
Euler characteristics, 
\begin{align*} 
\hat{\mathrm{N}} \G_g(z_1|\dots|z_B) = (2g+B-2)\G_g(z_1|\dots|z_B) .
\end{align*}
\end{lemma}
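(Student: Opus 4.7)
The plan is to establish the eigenvalue relation by pushing $\hat{\mathrm{N}}$ through the presentation of $\G_g(z_1|\dots|z_B)$ furnished by Theorem~\ref{finaltheorem}, using a simple commutation rule between $\hat{\mathrm{N}}$ and each boundary creation operator $\hat{\mathrm{A}}^{\dag g}_{z_J,z}$.

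First, I would record the base cases. For $g\geq 1$ and $B=1$, Proposition~\ref{structure-Gz} writes $\G_g(z)$ as $\varrho_0^{-(2g-1)}$ times a polynomial in the ratios $\varrho_j/\varrho_0$. Since $\hat{\mathrm{N}}$ annihilates every such ratio (an immediate one-line check using the product rule) but satisfies $\hat{\mathrm{N}}(\varrho_0^{-(2g-1)})=(2g-1)\varrho_0^{-(2g-1)}$, one obtains $\hat{\mathrm{N}}\,\G_g(z)=(2g-1)\G_g(z)$, matching $(2g+B-2)$ at $B=1$. For $g=0$, the only starting point allowed by $(2g+B-2)>0$ is $B=3$, where $\G_0(z_1|z_2|z_3)=-32\lambda^5/(\varrho_0 z_1^3 z_2^3 z_3^3)$ gives $\hat{\mathrm{N}}\G_0=\G_0$, again matching.

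The key step is the commutator
\[
[\hat{\mathrm{N}},\,\hat{\mathrm{A}}^{\dag g}_{z_J,z}] = \hat{\mathrm{A}}^{\dag g}_{z_J,z}.
\]
This follows coefficient by coefficient from $[\hat{\mathrm{N}},\partial_{\varrho_l}]=\partial_{\varrho_l}$, $[\hat{\mathrm{N}},\partial_{z_i}]=0$, and the observation that the coefficient of $\partial_{\varrho_l}$ in $\hat{\mathrm{A}}^{\dag g}_{z_J,z}$ is a sum of a $\varrho$-free term and a ratio $\varrho_{l+1}/\varrho_0$, both in the kernel of $\hat{\mathrm{N}}$, while the coefficient of $\partial_{z_i}$ is proportional to $1/\varrho_0$, on which $\hat{\mathrm{N}}$ acts as multiplication by $1$. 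Hence if $\hat{\mathrm{N}}Y=\alpha Y$ then $\hat{\mathrm{N}}(\hat{\mathrm{A}}^{\dag g}_{z_J,z}Y)=(\alpha+1)\hat{\mathrm{A}}^{\dag g}_{z_J,z}Y$.

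Iterating this through the $B-1$ boundary creation operators in the expression of $\G_g(z_1|\dots|z_B)$ given by Theorem~\ref{finaltheorem} (for $g\geq 1$, starting from $\G_g(z_1)$; for $g=0$, starting from $\G_0(z_1|z_2|z_3)$ and using $B-3$ further creations) yields, in the first case, $\hat{\mathrm{N}}\G_g(z_1|\dots|z_B)=\bigl((2g-1)+(B-1)\bigr)\G_g(z_1|\dots|z_B)=(2g+B-2)\G_g(z_1|\dots|z_B)$, and analogously $(1+(B-3))\G_0=(B-2)\G_0$ in the second. No step is a true obstacle: the only thing to verify carefully is the commutator identity above, which is essentially the statement that $\hat{\mathrm{A}}^{\dag g}_{z_J,z}$ has $\varrho$-weight $-1$ (it raises the degree of homogeneity in $\varrho$ by $-1$), and this is visible directly from its definition.
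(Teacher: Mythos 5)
Your proof is correct and follows essentially the same route as the paper: direct verification of the base cases with $(2g+B-2)=1$, the commutator $[\hat{\mathrm{N}},\hat{\mathrm{A}}^{\dag g}_{z_J,z}]=\hat{\mathrm{A}}^{\dag g}_{z_J,z}$, and induction through Theorem~\ref{finaltheorem}. The only (harmless) cosmetic difference is that you anchor the $B=1$, $g\geq 2$ case on Proposition~\ref{structure-Gz} directly, whereas the paper invokes $\hat{\mathrm{N}}F_g=(2g-2)F_g$ together with $\G_g(z)=(2\lambda)^4\hat{\mathrm{A}}^{\dag}_z F_g$ — two phrasings of the same homogeneity fact.
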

\begin{proof}
Both cases with $(2g+B-2)=1$ are directly checked. The general case follows 
by induction from $[\hat{\mathrm{N}},\hat{\mathrm{A}}^{\dag g}_{z_J,z}]=\hat{\mathrm{A}}^{\dag g}_{z_J,z}$ 
in combination with Theorem~\ref{finaltheorem} and 
$\hat{\mathrm{N}} F_g=(2g-2)F_g$ for $g\geq 2$.
\end{proof}
\begin{corollary}
\begin{align*}
\hat{\mathrm{A}}_{\check{z}} \G_g(\bullet|z_2|\dots|z_B)
= (2\lambda)^{3} (2g+B-3)\G_g(z_2|\dots|z_B)
\end{align*}
whenever $(2g+B-3)>0$.
\end{corollary}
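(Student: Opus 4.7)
The plan is to treat the two cases $B=1$ and $B\geq 2$ separately, in both cases reducing to results already established: the identity $\hat{\mathrm{A}}_{\check{z}}\circ\hat{\mathrm{A}}^{\dag}_{\bullet}=\hat{\mathrm{N}}$ embedded in Proposition~\ref{prop:F}, the preceding Lemma that $\hat{\mathrm{N}}$ measures Euler characteristic, and Corollary~\ref{coro1} which realises the $B$-th boundary via $\hat{\mathrm{A}}^{\dag g}$.

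The case $B=1$ is immediate: the hypothesis $2g+B-3>0$ forces $g\geq 2$, so Proposition~\ref{prop:F} applies and rearranges to $\hat{\mathrm{A}}_{\check{z}}\G_g(\bullet)=(2g-2)(2\lambda)^4 F_g$. The $\delta_{B,1}$ in the exponent furnishes the extra factor of $2\lambda$, and the conventional identification $\G_g=F_g$ when no boundary labels are present matches the right-hand side.

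For $B\geq 2$ the strategy is to exhibit the composition $\hat{\mathrm{A}}_{\check{z}}\circ\hat{\mathrm{A}}^{\dag g}_{z_J,\bullet}$ as the number operator $\hat{\mathrm{N}}$ on functions holomorphic on $\mathbb{C}\setminus\{0\}$ in each $z_j$ and rational in the $\varrho_l$. The key observation is that only the $\varrho_l$-derivative part of $\hat{\mathrm{A}}^{\dag g}_{z_J,\bullet}$ survives: integrating any $z_i$-derivative term against the kernel $\frac{z^{4+2k}\varrho_k}{3+2k}$ produces $z^{1+2k}$ with $k\geq 0$, whose residue at $0$ vanishes. The residue selection on the $\varrho_l$-derivative part reproduces, exactly as in the derivation of $\hat{\mathrm{A}}_{\check{z}}\hat{\mathrm{A}}^{\dag}_{\bullet}=\hat{\mathrm{N}}$ in the proof of Proposition~\ref{prop:F}, the sum $-\sum_l\varrho_l\partial_{\varrho_l}$; the upper cut-off $3g-3+|J|$ is irrelevant because the functions involved depend only on $\varrho_0,\dots,\varrho_{3g-3+|J|}$ by the structure theorem.

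Combining these ingredients, Corollary~\ref{coro1} gives $\G_g(z|z_2|\dots|z_B)=(2\lambda)^3\hat{\mathrm{A}}^{\dag g}_{z_2,\dots,z_B,z}\G_g(z_2|\dots|z_B)$, so applying $\hat{\mathrm{A}}_{\check{z}}$ in the $\bullet$-slot yields $(2\lambda)^3\hat{\mathrm{N}}\G_g(z_2|\dots|z_B)$, and the preceding Lemma (whose hypothesis $2g+(B-1)-2=2g+B-3>0$ is precisely our assumption) multiplies this by $2g+B-3$. The main technical point to check is the vanishing of the $z_i$-derivative contribution under the residue; once that is in hand, the two cases fit together cleanly into the stated formula with its $\delta_{B,1}$.
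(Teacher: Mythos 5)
Your proof is correct and follows exactly the route the paper intends: the paper states this corollary without a separate proof, as an immediate consequence of Proposition~\ref{prop:F} (for $B=1$), Corollary~\ref{coro1} together with the identity $\hat{\mathrm{A}}_{\check{z}}\circ\hat{\mathrm{A}}^{\dag}_{\bullet}=\hat{\mathrm{N}}$ (for $B\geq 2$), and the preceding Lemma on $\hat{\mathrm{N}}$ measuring the Euler characteristic. Your verification that the $z_i$-derivative part of $\hat{\mathrm{A}}^{\dag g}_{z_J,\bullet}$ is annihilated by the residue is the one point that genuinely needs checking, and you check it correctly.
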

\noindent
Hence, up to a rescaling, $\hat{\mathrm{A}}_{\check{z}}$
indeed removes the boundary component previously located at $z$.  We
also have $\hat{\mathrm{A}}_{\check{z}} F_g=0$ for all $g\geq
1$ so that the $F_g$ play the r\^ole of a vacuum.
Note that $\mathcal{W}_0(z)$ cannot be produced by whatever $F_0$.

\subsection{The Laplacian}
\label{sec:Laplacian}

Let $\mathcal{Z}_V^{np} := \exp\big(\sum_{g=1}^\infty V^{2-2g} F_g\big)$
be the non-planar part of the partition function, understood as
formal power series in $\frac{1}{V^2}$. Consider the following operation
with $\mathcal{Z}_V^{np}$:
\begin{align}
O_V:=&\frac{1}{\mathcal{Z}_V^{np}}
\Big[\frac{V^2}{8\lambda^4} \hat{K}_z \hat{\mathrm{A}}^\dag_z + 
\Big(\hat{\mathrm{A}}^\dag_z +\frac{1}{\varrho_0 z^4} 
\frac{\partial}{\partial z} \Big)
\hat{\mathrm{A}}^\dag_z +\frac{V^2}{64 \lambda^4 z^4} \Big]
\mathcal{Z}_V^{np}.
\label{ident-Z}
\end{align}
Its coefficients are 
\begin{align*}
  [V^2] O_V
  =\frac{1}{8\lambda^4} \hat{K}_z \hat{\mathrm{A}}^\dag_z F_1
  + \frac{1}{64 \lambda^4 z^4} =0
\end{align*}
because of 
$\hat{\mathrm{A}}^\dag_z F_1=\frac{\G_1(z)}{(2\lambda)^3}$ and
$\hat{K}_z\G_1(z)+\frac{\lambda^3}{z^4}=0$ by \eqref{1punktneuV} and
\eqref{G0z} and
\begin{align}
  [V^{4-2g}] O_V
  &= \frac{1}{8\lambda^4} \hat{K}_z \hat{\mathrm{A}}^{\dag}_z F_g  
  +  \sum_{h=1}^{g-1}
  \big(\hat{\mathrm{A}}^{\dag}_z F_h\big)
  \big(\hat{\mathrm{A}}^{\dag}_z F_ {g-h}\big)
  + \Big(\hat{\mathrm{A}}^{\dag}_z +\frac{1}{\varrho_0
  z^4} \frac{\partial}{\partial z}\Big)
\big(\hat{\mathrm{A}}^{\dag}_z F_ {g-1}\big) 
\end{align}
for $g\geq 2$. Inserting $\hat{\mathrm{A}}^{\dag}_z F_h
=\frac{1}{(2\lambda)^3} \G_h$ and
$\big(\hat{\mathrm{A}}^{\dag}_z +\frac{1}{\varrho_0
  z^4} \frac{\partial}{\partial z}\big) \G_{g-1}(z)=\frac{1}{(2\lambda)^3}
\G_{g-1}(z|z)$ we see that $[V^{4-2g}] O_V$ equals 
$\frac{1}{\lambda(2\lambda)^6}$ times
the rhs of \eqref{1punktneuV}, hence vanishes. This means that 
$O_V\equiv 0=\mathcal{Z}^{np}_V O_V$.

We invert $\hat{K}_z$ via Proposition~\ref{Prop5}, divide by $z$ and apply 
$\hat{\mathrm{A}}_{\check{z}}$ given by the residue
in Definition~\ref{DefA}:
\begin{align*}
\frac{V^2}{8\lambda^4} 
\hat{\mathrm{N}} \mathcal{Z}_V^{np}
&= {-} \!\sum_{\ell=0}^\infty 
\Res\displaylimits_{z\to 0}\!\Big[dz  \frac{z^{3+2\ell} \varrho_\ell}{(3+2\ell)}
\Res\displaylimits_{z'\to 0}\!
\Big[ dz'(z')^2 K(z,z')\Big(\!\Big(
\hat{\mathrm{A}}^\dag_{z'} {+}\frac{1}{\varrho_0 (z')^4} 
\frac{\partial}{\partial z'} \Big)
\hat{\mathrm{A}}^\dag_{z'} \\
& \hspace*{5cm} {+}\frac{V^2}{64 \lambda^4 (z')^4} \Big)
\Big]\Big]\mathcal{Z}_V^{np}.
\end{align*}
We insert $K(z,z')=\frac{2}{(\mathcal{W}_0(z')
      -\mathcal{W}_0(-z')) ({z'}^2-z^2)}$ from
Proposition~\ref{Prop5} and commute the integration contours
from $|z'|<|z|$ to $|z|<|z'|$. This procedure picks up the poles of
$K(z,z')$ at $z=z'$ and $z=-z'$, i.e.\ 
\[
\Res\displaylimits_{z\to 0}\Res\displaylimits_{z'\to 0}
=\Res\displaylimits_{z'\to 0}\Big(
\Res\displaylimits_{z\to 0}
+\Res\displaylimits_{z\to z'}
+\Res\displaylimits_{z\to -z'}
\Big)\;.
\]
There is no pole at $z=0$; the other two poles give the same contribution.
Abbreviating 
$f(z',\varrho)=
\big(\big(\hat{\mathrm{A}}^\dag_{z'} {+}\frac{1}{\varrho_0 (z')^4} 
\frac{\partial}{\partial z'} \big)
\hat{\mathrm{A}}^\dag_{z'} +\frac{V^2}{64 \lambda^4 (z')^4} \big)
\mathcal{Z}_V^{np}$, we thus have
\begin{align*}
&{-} \sum_{\ell=0}^\infty 
\Res\displaylimits_{z\to 0}\Big[dz  \frac{z^{3+2\ell} \varrho_\ell}{(3+2\ell)}
\Res\displaylimits_{z'\to 0}
\Big[ dz'(z')^2 \frac{2}{(\mathcal{W}_0(z')
      -\mathcal{W}_0(-z')) ({z'}^2-z^2)}
f(z',\varrho)\Big]\Big]
\\
&={-} \sum_{\ell=0}^\infty 
\Res\displaylimits_{z'\to 0}\Big[ dz'(z')^2
\Big(\Res\displaylimits_{z\to z'}+\Res\displaylimits_{z\to -z'}\Big)
\Big[
\frac{z^{3+2\ell} \varrho_\ell}{(3+2\ell)}
\frac{2f(z',\varrho) dz
}{(\mathcal{W}_0(z')-\mathcal{W}_0(-z')) ({z'}^2{-}z^2)}
\Big]\Big]
\\
&=
\Res\displaylimits_{z'\to 0}\Big[ 
\sum_{\ell=0}^\infty   \frac{dz' (z')^{4+2\ell} \varrho_\ell}{(3+2\ell)}
\frac{2}{(\mathcal{W}_0(z') -\mathcal{W}_0(-z'))}
f(z',\varrho)\Big]\;.
\end{align*}
Inserting (\ref{Gzminusz}) and renaming $z'\to z$ we get
\begin{align}
\frac{V^2}{8\lambda^4} 
\hat{\mathrm{N}} \mathcal{Z}_V^{np}
&=  
\Res\displaylimits_{z\to 0}
\Big[ 
dz \,z^3 \tilde{\mathcal{R}}(z)
\Big(\Big(
\hat{\mathrm{A}}^\dag_{z} +\frac{1}{\varrho_0 (z)^4} 
\frac{\partial}{\partial z} \Big)
\hat{\mathrm{A}}^\dag_{z} +\frac{V^2}{64 \lambda^4 z^4} \Big)
\Big]\mathcal{Z}_V^{np},
\nonumber
\end{align}
where 
\begin{align*}
\tilde{\mathcal{R}}(z)=\frac{\sum_{\ell=0}^\infty \frac{ \varrho_\ell z^{2\ell}}{
(3+2\ell)}}{\sum_{j=0}^\infty \varrho_j  z^{2j}}
=\sum_{m=0}^\infty \tilde{\mathcal{R}}_m(\varrho)\, z^{2m}.
\end{align*}
The inverse of the denominator is given by \eqref{S-Bell}, without the 
$\frac{1}{z'}$ 
prefactor and in variables $z'\mapsto z$.
Its product with the numerator is 
\begin{align}
\tilde{\mathcal{R}}_m(\varrho)&= 
\frac{S_m(\varrho)}{3\cdot m!} - \sum_{k=1}^m 
\frac{\varrho_k}{(3+2k)\varrho_0} \frac{S_{m-k}(\varrho)}{(m-k)!}
=-\frac{2}{3} \sum_{k=1}^m 
\frac{k \varrho_k}{(3+2k)\varrho_0} \frac{S_{m-k}(\varrho)}{(m-k)!},
\label{Rmrho}
\end{align}
where we have used \eqref{xBell} for the first $S_m(\varrho)$ 
to achieve better control of signs.

The residue of $\frac{V^2}{64 \lambda^4 z^4}$ is immediate and can 
be moved to the lhs:
\begin{align*}
&\frac{V^2}{8\lambda^4} 
\Big(\hat{\mathrm{N}}-\frac{1}{24}\Big) \mathcal{Z}_V^{np}
\\
&=  \sum_{m=0}^\infty \tilde{\mathcal{R}}_m(\varrho)
\Res\displaylimits_{z\to 0}
\Big[ z^{3+2m}\,dz 
\Big[\Big(
\hat{\mathrm{A}}^\dag_{z} +\frac{1}{\varrho_0 z^4} 
\frac{\partial}{\partial z} \Big)
\hat{\mathrm{A}}^\dag_{z} 
\Big]\Big]\mathcal{Z}_V^{np}
\nonumber
\\
&=\Big[  
\sum_{k=0}^\infty 
\Big(-\frac{3(3+2k)\varrho_1 \varrho_{k+1}\tilde{\mathcal{R}}_1(\varrho)}{\varrho_0^3}
+\frac{3(3+2k)\varrho_{k+1} \tilde{\mathcal{R}}_2(\varrho)}{\varrho_0^2}
\Big)\frac{\partial}{\partial \varrho_k} 
\\
&+
\sum_{k,l=0}^\infty \frac{(3+2k)(3+2l)\tilde{\mathcal{R}}_1(\varrho)}{\varrho_0^2}
\varrho_{l+1}\frac{\partial}{\partial \varrho_l}
\varrho_{k+1}\frac{\partial}{\partial \varrho_k}
\\
&-\sum_{k,l=0}^\infty \frac{(3+2k)(3+2l)\tilde{\mathcal{R}}_{l+2}(\varrho)}{\varrho_0}
\Big(
\varrho_{k+1}\frac{\partial}{\partial \varrho_k}
\frac{\partial}{\partial \varrho_l}
+\frac{\partial}{\partial \varrho_l}
\varrho_{k+1}\frac{\partial}{\partial \varrho_k}
\Big)
\\
&+\sum_{k,l=0}^\infty (3+2k)(3+2l)\tilde{\mathcal{R}}_{k+l+3}(\varrho)
\frac{\partial}{\partial \varrho_k}
\frac{\partial}{\partial \varrho_l}
\\
&+\sum_{k=0}^\infty \frac{3(3{+}2k)\varrho_{k+1}\tilde{\mathcal{R}}_2(\varrho)}{\varrho_0^2}
\frac{\partial}{\partial \varrho_k}
-\sum_{k=0}^\infty \frac{(3{+}2k)(5{+}2k)\tilde{\mathcal{R}}_{k+3}(\varrho)}{\varrho_0}
\frac{\partial}{\partial \varrho_k}
\Big]\mathcal{Z}_V^{np}.
\end{align*}
Next we separate the $\varrho_0$-derivatives:
\begin{align*}
&\frac{V^2}{8\lambda^4} 
\Big(\hat{\mathrm{N}}-\frac{1}{24}\Big) \mathcal{Z}_V^{np}
\nonumber
\\
&= 
\Big[ 
\Big(\frac{9 \tilde{\mathcal{R}}_1(\varrho) \varrho_1^2}{\varrho_0^2}
- \frac{18 \tilde{\mathcal{R}}_{2}(\varrho)\varrho_1}{\varrho_0}
+9\tilde{\mathcal{R}}_{3}(\varrho)\Big)
\frac{\partial^2}{\partial \varrho_0^2}
\\
& +\Big(
-\frac{9\varrho_1^2 \tilde{\mathcal{R}}_1(\varrho)}{\varrho_0^3}
+\frac{18\varrho_{1}\tilde{\mathcal{R}}_2(\varrho)}{\varrho_0^2}
+\frac{15 \tilde{\mathcal{R}}_1(\varrho) \varrho_2}{\varrho_0^2}
-\frac{30 \tilde{\mathcal{R}}_{3}(\varrho)}{\varrho_0}\Big)
\frac{\partial}{\partial \varrho_0}
\\
&
+\sum_{k=1}^\infty 6(3{+}2k)\Big(
\tilde{\mathcal{R}}_{k+3}(\varrho)
-\frac{\tilde{\mathcal{R}}_{2}(\varrho) \varrho_{k+1}}{\varrho_0}
-\frac{\tilde{\mathcal{R}}_{k+2}(\varrho)\varrho_{1}}{\varrho_0}
+\frac{\tilde{\mathcal{R}}_1(\varrho)\varrho_{k+1}\varrho_{1}}{\varrho_0^2}\Big)
\frac{\partial}{\partial \varrho_k}
\frac{\partial}{\partial \varrho_0}
\\
&
+
\sum_{k,l=1}^\infty (3{+}2k)(3{+}2l)\Big(
\frac{ \varrho_{l+1}\varrho_{k+1} \tilde{\mathcal{R}}_1(\varrho)}{\varrho_0^2}
+ \tilde{\mathcal{R}}_{k+l+3}(\varrho)
- \frac{2\varrho_{k+1}\tilde{\mathcal{R}}_{l+2}(\varrho)}{\varrho_0}
\Big)
\frac{\partial}{\partial \varrho_k}
\frac{\partial}{\partial \varrho_l}
\\
&
+\sum_{k=1}^\infty (3k+2)
\Big(-\frac{3\varrho_1 \varrho_{k+1}\tilde{\mathcal{R}}_1(\varrho)}{\varrho_0^3}
+\frac{6\varrho_{k+1} \tilde{\mathcal{R}}_2(\varrho)}{\varrho_0^2}
\\
&\qquad\qquad 
+ \frac{(5+2k)\varrho_{k+2} \tilde{\mathcal{R}}_1(\varrho)}{\varrho_0^2}
-\frac{2(5+2k)\tilde{\mathcal{R}}_{k+3}(\varrho)}{\varrho_0}
\Big)\frac{\partial}{\partial \varrho_k}
\Big]\mathcal{Z}_V^{np}.
\end{align*}
We isolate $F_1$, i.e.\ 
$\mathcal{Z}_V^{np}=\varrho_0^{-\frac{1}{24}} 
\mathcal{Z}_V^{stable}$, where 
$\mathcal{Z}_V^{stable} := \exp\big(\sum_{g=2}^\infty V^{2-2g} F_g\big)$
is the stable partition function.
We commute the factor 
$\varrho_0^{-\frac{1}{24}}$ in front of $[~~]$ and move it to the other side:
\begin{align*}
&\frac{V^2}{8\lambda^4} \hat{\mathrm{N}} \mathcal{Z}_V^{stable}
\\
&= 
\Big[ 
\Big(\frac{49 \varrho_1^2 \tilde{\mathcal{R}}_1(\varrho)}{64\varrho_0^4}
- \frac{49 \varrho_1 \tilde{\mathcal{R}}_{2}(\varrho)}{32\varrho_0^3}
-\frac{5 \tilde{\mathcal{R}}_1(\varrho) \varrho_2}{8\varrho_0^3}
+\frac{105 \tilde{\mathcal{R}}_{3}(\varrho)}{64\varrho_0^2}\Big)
\\
&
+
\Big(\frac{9 \tilde{\mathcal{R}}_1(\varrho) \varrho_1^2}{\varrho_0^2}
- \frac{18 \tilde{\mathcal{R}}_{2}(\varrho)\varrho_1}{\varrho_0}
+9\tilde{\mathcal{R}}_{3}(\varrho)\Big)
\frac{\partial^2}{\partial \varrho_0^2}
\\
& 
+\Big(
-\frac{39\varrho_1^2 \tilde{\mathcal{R}}_1(\varrho)}{4\varrho_0^3}
+\frac{39\varrho_{1}\tilde{\mathcal{R}}_2(\varrho)}{2\varrho_0^2}
+\frac{15 \tilde{\mathcal{R}}_1(\varrho) \varrho_2}{\varrho_0^2}
-\frac{123 \tilde{\mathcal{R}}_{3}(\varrho)}{4\varrho_0}\Big)
\frac{\partial}{\partial \varrho_0}
\\
&
+\sum_{k=1}^\infty 6(3{+}2k)\Big(
\tilde{\mathcal{R}}_{k+3}(\varrho)
-\frac{\tilde{\mathcal{R}}_{2}(\varrho) \varrho_{k+1}}{\varrho_0}
-\frac{\tilde{\mathcal{R}}_{k+2}(\varrho)\varrho_{1}}{\varrho_0}
+\frac{\tilde{\mathcal{R}}_1(\varrho)\varrho_{k+1}\varrho_{1}}{\varrho_0^2}\Big)
\frac{\partial}{\partial \varrho_k}
\frac{\partial}{\partial \varrho_0}
\\
&
+
\sum_{k,l=1}^\infty (3{+}2k)(3{+}2l)\Big(
\frac{ \varrho_{l+1}\varrho_{k+1} \tilde{\mathcal{R}}_1(\varrho)}{\varrho_0^2}
+ \tilde{\mathcal{R}}_{k+l+3}(\varrho)
- \frac{2\varrho_{k+1}\tilde{\mathcal{R}}_{l+2}(\varrho)}{\varrho_0}
\Big)
\frac{\partial}{\partial \varrho_k}
\frac{\partial}{\partial \varrho_l}
\\
&
+\sum_{k=1}^\infty (3{+}2k)\Big(
-\frac{\tilde{\mathcal{R}}_{k+3}(\varrho)}{4 \varrho_0}
+\frac{25\varrho_{k+1} \tilde{\mathcal{R}}_2(\varrho)}{4\varrho_0^2}
+\frac{\tilde{\mathcal{R}}_{k+2}(\varrho)\varrho_{1}}{4\varrho_0^2}
-\frac{13\varrho_1 \varrho_{k+1}\tilde{\mathcal{R}}_1(\varrho)}{4\varrho_0^3}
\\
&\qquad\qquad 
+ \frac{(5{+}2k)\varrho_{k+2} \tilde{\mathcal{R}}_1(\varrho)}{\varrho_0^2}
-\frac{2(5{+}2k)\tilde{\mathcal{R}}_{k+3}(\varrho)}{\varrho_0}
\Big)\frac{\partial}{\partial \varrho_k}
\Big]\mathcal{Z}_V^{stable}.
\end{align*}
Expanding 
$\mathcal{Z}_V^{stable}
=:1+\sum_{g=2}^\infty V^{2-2g}\mathcal{Z}_g$, we have
\begin{align*}
\hat{\mathrm{N}}\mathcal{Z}_V^{stable}
&
=\sum_{g=2}^\infty (V^{-2})^{g-1}(2g-2) \mathcal{Z}_g
=2V^{-2}\frac{d}{dV^{-2}} \sum_{g=2}^\infty (V^{-2})^{g-1}
\mathcal{Z}_g
\\
&=2V^{-2}\frac{d}{dV^{-2}} \mathcal{Z}_V^{stable}.
\end{align*}
Consequently, we obtain a parabolic differential equation in $V^{-2}$
which is easily solved. Inserting 
\[
\tilde{\mathcal{R}}_1(\varrho)={-}\frac{2}{15} \frac{\varrho_1}{\varrho_0},~~
\tilde{\mathcal{R}}_2(\varrho)=\frac{2}{15} \frac{\varrho_1^2}{\varrho_0^2}
-\frac{4}{21} \frac{\varrho_2}{\varrho_0},~~
\tilde{\mathcal{R}}_3(\varrho)=
{-}\frac{2}{15} \frac{\varrho_1^3}{\varrho_0^3}
+\frac{34}{105} \frac{\varrho_1\varrho_2}{\varrho_0^2}
-\frac{2}{9} \frac{\varrho_3}{\varrho_0},
\]
we have:
\begin{theorem}
\label{thm:Deltarho}
When expressed in terms of the moments of the measure $\varrho$, 
the stable partition function is given as a formal
  power series in $V^{-2}$ by
\begin{align*}
 \mathcal{Z}_V^{stable}
&:=\exp\Big(\sum_{g=2}^\infty V^{2-2g} F_g(\varrho)\Big)
= \exp\Big( -\frac{(2\lambda)^4}{V^2} \Delta_{\varrho}+\frac{F_2(\rho)}{V^2}\Big) 1,\qquad
\end{align*}
where 
\begin{align}
F_2&= (2\lambda)^4
\Big(
-\frac{21 \varrho_1^3 }{160\varrho_0^5}
+\frac{29}{128} \frac{\varrho_1\varrho_2}{\varrho_0^4}
-\frac{35}{384} \frac{\varrho_3}{\varrho_0^3}
\Big),
\label{F2}
\\
\Delta_\varrho &:=
-\Big(-\frac{6\varrho_1^3}{5\varrho_0^3} + \frac{111
  \varrho_1\varrho_2}{70 \varrho_0^2} -
\frac{\varrho_3}{2\varrho_0}\Big)
\frac{\partial^2}{\partial \varrho_0^2}
-\Big(
\frac{2\varrho_1^3}{\varrho_0^4} - \frac{1097 \varrho_1\varrho_2}{280
  \varrho_0^3} +\frac{41 \varrho_3}{24 \varrho_0^2}\Big)
\frac{\partial}{\partial \varrho_0}
\nonumber
\\
&
-\sum_{k=1}^\infty (3+2k)\Big(
\Big(-\frac{2\varrho_1^2}{5\varrho_0^3} 
+ \frac{2 \varrho_2}{7\varrho_0^2}\Big)\varrho_{k+1}
-\frac{3\tilde{\mathcal{R}}_{k+2}(\varrho)\varrho_{1}}{2\varrho_0}
+\frac{3\tilde{\mathcal{R}}_{k+3}(\varrho)}{2}
\Big)
\frac{\partial^2}{\partial \varrho_k \partial \varrho_0}
\nonumber
\\
&
-\sum_{k,l=1}^\infty (3{+}2k)(3{+}2l)\Big(
{-}\frac{ \varrho_1 \varrho_{l+1}\varrho_{k+1}}{30\varrho_0^2}
- \frac{\varrho_{k+1}\tilde{\mathcal{R}}_{l+2}(\varrho)}{4\varrho_0}
- \frac{\varrho_{l+1}\tilde{\mathcal{R}}_{k+2}(\varrho)}{4\varrho_0}
\nonumber
\\
&\qquad\qquad  +\frac{\tilde{\mathcal{R}}_{k+l+3}(\varrho)}{4}
\Big)
\frac{\partial^2}{\partial \varrho_k\partial \varrho_l}
\nonumber
\\
&
-\sum_{k=1}^\infty (3+2k)\Big(
\Big(\frac{19 \varrho_1^2}{60 \varrho_0^4} 
- \frac{25 \varrho_2}{84 \varrho_0^3}\Big)\varrho_{k+1}
+\frac{\varrho_{1}\tilde{\mathcal{R}}_{k+2}(\varrho)}{16\varrho_0^2}
-\frac{\tilde{\mathcal{R}}_{k+3}(\varrho)}{16 \varrho_0}
\nonumber
\\
&\qquad\qquad 
- \frac{(5+2k)\varrho_1\varrho_{k+2}}{30\varrho_0^3}
-\frac{(5+2k)\tilde{\mathcal{R}}_{k+3}(\varrho)}{2\varrho_0}
\Big)\frac{\partial}{\partial \varrho_k}
\end{align}
and $\tilde{\mathcal{R}}_m(\varrho)$ given by \eqref{Rmrho}.
\end{theorem}
\noindent

Because we are essentially treating the Kontsevich model
\cite{Kontsevich:1992ti}, our $F_g$ are nothing else than the
generators of intersection numbers of tautological characteristic
classes\footnote{We will point out in Remark \ref{rmk:intersection}
that $F_g$ should be understood as generating function of intersection
numbers of $\kappa$-classes, which however translate into a subset of
intersection numbers of $\psi$-classes.} on the moduli space of
stable complex curves
\cite{Witten:1990hr, Kontsevich:1992ti, Itzykson:1992ya,
  Eynard:2016yaa}.  
The free energies $F_g$ are
listed in different conventions in the literature. The translation to
e.g.\ \cite{Itzykson:1992ya, Eynard:2016yaa} is as follows:
\begin{align*}
\text{\cite{Itzykson:1992ya}}: && 
(1-I_1)&=\varrho_0, & I_{k+1}&=-(2k+1)!! \varrho_k \quad \text{for }
k\geq 1,
\\
\text{\cite{Eynard:2016yaa}}: &&
(2-t_3)&=\varrho_0, &
t_{2k+3} &= -\varrho_k, \quad \text{for } k\geq 1.
\end{align*}

It is clear that Theorem~\ref{thm:Deltarho} translates into the same
statement for the generating function of intersection numbers. We have
given this formulation in the very beginning in
Theorem~\ref{Thm:Zstable}. There we adopt the conventions
in \cite{Itzykson:1992ya} but rename $I_k\equiv t_k=-(2k-1)!!\varrho_{k-1}$ and
$T_0\equiv (1-I_1)$. We also redefined
$\mathcal{R}_m(t)=(2m-1)!!\tilde{\mathcal{R}}_m(\varrho)$ as
well as $N=\frac{V}{(2\lambda)^2}$.  The formula can easily
be implemented in computer
algebra\footnote{
A first implementation in Mathematica is provided
via the arXiv page of this paper or via 
\url{http://wwwmath.uni-muenster.de/u/raimar/files/IntersectionNumbers.nb}
\\
It takes less than 35 seconds on an office desktop to compute
all intersection numbers up to $g=10$.} and quickly computes the 
free energies $F_g(t)$ to moderately large $g$. Several other
implementations exist, for instance
the powerful Sage programme 
\cite{admcycles} which performs many more
natural operations in the tautological ring.
An early implementation in Maple (up to $g=10$) can be found
in \cite{Xu:2007??}. These implementations were an important
consistency check for us. 
Algorithms to compute $\kappa,\delta,\lambda$-classes from 
$\psi$-classes are given in \cite{Faber:1999??}.

\begin{example}\normalfont
  For convenience we list
\begin{align*}
F_3&=
\frac{1225}{144}{\cdot}\frac{t_2^6}{6!T_0^{10}} 
+ \frac{193}{288} {\cdot} \frac{t_2^4 t_3}{4! T_0^9} 
+ \frac{205}{3456} {\cdot} \frac{t_2^2 t_3^2}{2!2! T_0^8} 
+ \frac{53}{1152} {\cdot} \frac{t_2^3 t_4}{3! T_0^8} 
+ \frac{583}{96768} {\cdot} \frac{t_3^3}{3! T_0^7} 
\\
&+ \frac{1121}{241920} {\cdot} \frac{t_2 t_3 t_4}{T_0^7} 
+ \frac{17}{5760} {\cdot} \frac{t_2^2 t_5}{2! T_0^7}
+ \frac{607}{1451520} {\cdot} \frac{t_4^2}{2! T_0^6} 
+ \frac{503}{1451520} {\cdot} \frac{t_3 t_5}{T_0^6} 
\\
&+ \frac{77}{414720} {\cdot} \frac{t_2 t_6}{T_0^6} 
+ \frac{1}{82944} {\cdot} \frac{t_7}{T_0^5}
\end{align*}
(already given in \cite[eq.~(5.30)]{Itzykson:1992ya}) and 
\begin{align*}
F_4&= \frac{1816871}{48} {\cdot} \frac{t_2^9}{9! T_0^{15}} 
+\frac{3326267}{1728} {\cdot} \frac{t_2^7 t_3}{7! T_0^{14}} 
+\frac{728465}{6912} {\cdot} \frac{t_2^5 t_3^2}{5! 2! T_0^{13}}
+\frac{43201}{6912} {\cdot} \frac{t_2^3t_3^3}{3! 3! T_0^{12}}
\\
&
+\frac{134233}{331776} {\cdot} \frac{t_2 t_3^4}{4! T_0^{11}}
+ \frac{70735}{864} {\cdot} \frac{t_2^6 t_4}{6! T_0^{13}}
+\frac{83851}{17280} {\cdot}\frac{t_2^4 t_3 t_4}{4! T_0^{12}}
+\frac{26017}{82944} {\cdot} \frac{t_2^2t_3^2 t_4}{2!2!T_0^{11}}
\\
&+\frac{185251}{8294400} {\cdot} \frac{t_3^3 t_4}{3! T_0^{10}}
+\frac{5609}{23040} {\cdot} \frac{t_2^3 t_4^2}{3!2!T_0^{11}}
 + \frac{177}{10240} {\cdot} \frac{t_2 t_3 t_4^2}{2! T_0^{10}}
+\frac{175}{165888}{\cdot} \frac{t_4^3}{3! T_0^9}
\\
&
+\frac{21329}{6912} {\cdot} \frac{t_2^5 t_5}{5! T_0^{12}}
+\frac{13783}{69120} {\cdot} \frac{t_2^3 t_3 t_5}{3!T_0^{11}}
+\frac{1837}{129600} {\cdot} \frac{t_2 t_3^2 t_5}{2! T_0^{10}}
+\frac{7597}{691200} {\cdot} \frac{t_2^2 t_4 t_5}{2! T_0^{10}}
\\
&
+ \frac{719}{829440} {\cdot} \frac{t_3 t_4 t_5}{T_0^9} 
+\frac{533}{967680}{\cdot} \frac{t_2 t_5^2}{2! T_0^9}
+\frac{2471}{23040} {\cdot} \frac{t_2^4 t_6}{4! T_0^{11}}
+\frac{7897}{1036800} {\cdot} \frac{t_2^2 t_3 t_6}{2! T_0^{10}}
\\
&
+\frac{1997}{3317760} {\cdot} \frac{t_3^2 t_6}{2! T_0^9}
{+}\frac{1081}{2322432} {\cdot} \frac{t_2 t_4 t_6}{T_0^9} 
{+}\frac{487}{18579456} {\cdot} \frac{t_5 t_6}{T_0^8} 
{+}\frac{4907}{1382400} {\cdot} \frac{t_2^3 t_7}{3! T_0^{10}}
\\
&+\frac{16243}{58060800} {\cdot} \frac{t_2 t_3 t_7}{T_0^9} 
+\frac{1781}{92897280} {\cdot} \frac{t_4 t_7}{T_0^8} 
+ \frac{53}{460800} {\cdot} \frac{t_2^2 t_8}{2! T_0^9}
+ \frac{947}{92897280} {\cdot} \frac{t_3 t_8}{T_0^8} 
\\
&+ \frac{149}{39813120} {\cdot} \frac{t_2 t_9}{T_0^8} 
+ \frac{1}{7962624} {\cdot} \frac{t_{10}}{T_0^7}.
\end{align*}
The first line agrees with \cite[Table II]{Itzykson:1992ya}.
The notation is such that the intersection numbers are easily 
identified, e.g.\  $\langle \tau_2\tau_3^4\rangle
\equiv \int_{\overline{\mathcal{M}}_{4,5}} \psi_1^2 \psi_2^3
  \psi_3^3\psi_4^3\psi_5^3= 
\frac{134233}{331776}$ or $\langle\tau_2^2\tau_4\tau_5\rangle
\equiv \int_{\overline{\mathcal{M}}_{4,4}} \psi_1^2 \psi_2^2
  \psi_3^4\psi_4^5=\frac{7597}{691200}$.
The very last number is  $\langle\tau_{3g-2}\rangle 
\equiv \int_{\overline{\mathcal{M}}_{g,1}} \psi_1^{3g-2}
=\frac{1}{24^{g}\cdot g!}$ for $g=4$, in agreement with 
\cite[eq. (5.31)]{Itzykson:1992ya}. The arXiv version v1
of this paper also gave $F_5$ and $F_6$ in an appendix, but there is
not really a need for them. 
\end{example}

\begin{remark}\label{rmk:intersection}
  As explained in \cite[sec.~6.7.3.1]{Eynard:2016yaa}, it is
  more appropriate to view $F_g$, after (inverse) Schur
  transform of the $\varrho_l$,
\begin{align}
(-(2l+1)!!) \frac{\varrho_l}{\varrho_0} =:
  -[x^l] \exp\Big(-\sum_{i=1}^\infty s_ix^i\Big)
  \label{Schur}
\end{align}
as generating
  function of intersection numbers of $\kappa$-classes on
  $\overline{\mathcal{M}}_{g,0}$.
  When passing to $\mathcal{G}_g(z_1|\dots|z_n)$
  by application of the boundary creation operator,
equivalently to the $\omega_{g,n}(z_1,\dots,z_n)$ of topological recursion,
  all mixed
  intersection numbers of $\psi$- and $\kappa$-classes are generated.
A factor $\frac{-(2l+1)!!}{z_i^{2l+3}}$ translates to 
$\psi_i^l$ and differentiation with respect to $s_l$ gives a factor $\kappa_l$
under the $\overline{\mathcal{M} }_{g,n}$-integral.
 In particular, the subfamily of intersection numbers only of $\psi$-classes
  and with power $\geq 2$ arises by multiple application of
  $\prod_{i=1}^n \frac{2+3l_i}{z_i^{5+2l_i}} \frac{\partial}{
    \partial \varrho_{l_i}}\Big|_{l_i\geq 1}$ 
 on $F_g$ and projecting to the part without $\varrho_{l\geq 1}$. 
 Every term in $F_g$ with $n$ factors of $t_{i\geq 2}$ thus gives a
 unique intersection number of $\psi$-classes on
 $\overline{\mathcal{M}}_{g,n}$. This justifies to view $F_g$ as
 intersection numbers of $\psi$-classes.
\end{remark}
\begin{example}\normalfont
  Consider $g=2$ and $F_2$ given by (\ref{F2}).
Inserting 
  $(-7!!) \frac{\varrho_3}{\varrho_0}=s_3-s_2s_1+\frac{s_1^3}{6}$,
  $(-5!!) \frac{\varrho_2}{\varrho_0}=s_2-\frac{s_1^2}{2}$ and 
  $(-3!!) \frac{\varrho_1}{\varrho_0}=s_1$
from \eqref{Schur} we get
\begin{align*}
F_2&=\frac{7}{240} \cdot \frac{1}{6!} \Big((-3!!) \frac{\varrho_1}{\varrho_0}\Big)^3
+\frac{29}{5760} \Big((-5!!) \frac{\varrho_2}{\varrho_0}\Big)
\Big((-3!!) \frac{\varrho_1}{\varrho_0}\Big)
+\frac{1}{1152} \Big((-7!!) \frac{\varrho_3}{\varrho_0}\Big)
  \\
  &=
\frac{7}{240} \frac{s_1^3}{3!}
+\frac{29}{5760} (s_2-\tfrac{s_1^2}{2}) s_1
+\frac{1}{1152}(s_3-s_2s_1+\tfrac{s_1^3}{6})
\\
&=\frac{43}{2880} \frac{s_1^3}{3!}
+\frac{1}{240} s_2s_1
+\frac{1}{1152}s_3
\end{align*}
from which one extracts
$\int_{\overline{\mathcal{M}}_{2,0}} \kappa_3=\frac{1}{1152}$,
$\int_{\overline{\mathcal{M}}_{2,0}} \kappa_2\kappa_1=\frac{1}{240}$ and
$\int_{\overline{\mathcal{M}}_{2,0}} (\kappa_1)^3=\frac{43}{2880}$. 
\\
Applying $\hat{A}^{\dag}_{z_1}$ directly to (\ref{F2}) gives
\begin{align}  
&\mathcal{G}_2(z_1)
\label{G21}\\
&= (2\lambda)^3 \hat{A}^\dag_{z_1}F_2=
  \frac{(2\lambda)^7}{\varrho_0^3} \Big\{
\frac{7}{288}\Big(\frac{-3!!\varrho_1 }{\varrho_0}\Big)^4
\frac{(-1!!)}{z_1^3}
+\frac{5}{144}
\Big(
\frac{-3!!\varrho_1}{\varrho_0}
\Big)^2
\Big(\frac{-5!!\varrho_2}{\varrho_0}\Big)  
\frac{(-1!!)}{z_1^3}
\nonumber
\\
&+\frac{29}{5760}
\Big(\frac{-5!!\varrho_2}{\varrho_0}\Big)^2  
\frac{(-1!!)}{z_1^3}
+\frac{11}{1440}\frac{(-7!!\varrho_3)}{\varrho_0}
 \frac{(-3!!\varrho_1)}{\varrho_0}
\frac{(-1!!)}{z_1^3}
+\frac{1}{1152}
\Big(-9!! \frac{\varrho_4}{\varrho_0}\Big)
\frac{(-1!!)}{z_1^{3}}
\nonumber
\\
&+
\frac{7}{288}\Big(\frac{(-3!!\varrho_1) }{\varrho_0}\Big)^3
\frac{(-3!!)}{z_1^5}
+\frac{29}{1440} \frac{(-3!!\varrho_1)}{\varrho_0}
\frac{(-5!!\varrho_2)}{\varrho_0}
\frac{(-3!!)}{z_1^5}
+\frac{1}{384} 
\frac{(-7!!\varrho_3)}{\varrho_0}
\frac{(-3!!)}{z_1^5}
\nonumber
\\
&
+\frac{7}{480} 
\Big(
\frac{-3!!\varrho_1}{\varrho_0}
\Big)^2\frac{(-5!!)}{z_1^7}
+\frac{29}{5760}
\frac{(-5!!\varrho_2)}{\varrho_0}
\frac{(-5!!)}{z_1^7}
+\frac{29}{5760}
\frac{(-3!!\varrho_1)}{\varrho_0}
\frac{(-7!!)}{z_1^9}
\nonumber
\\
&+\frac{1}{1152}
\frac{(-9!!)}{z_1^{11}}\Big\}\;.
\nonumber
\end{align}
The very last term encodes the intersection number
$\int_{\overline{\mathcal{M}}_{2,1}} \psi_1^4=\frac{1}{1152}$.
All other involve $\kappa$-classes. To extract them we have to pass
via \eqref{Schur} to the
$s_l$-variables:
\begin{align}  
\mathcal{G}_2(z_1)
  &=
  \frac{(2\lambda)^7}{\varrho_0^3} \Big\{
\Big(\frac{1}{1152}s_4+\frac{13}{1920}s_1s_3
+\frac{53}{5760}\frac{s_2^2}{2}
+\frac{259}{5760}\cdot\frac{s_1^2}{2}s_2+\frac{29}{128}\cdot\frac{s_1^4}{24}\Big)
\frac{(-1!!)}{z_1^{3}}
\nonumber
\\
&
+
\Big(\frac{1}{384}s_3+\frac{101}{5760}s_2s_1
+\frac{169}{1920}\cdot\frac{s_1^3}{6}\Big)
\frac{(-3!!)}{z_1^5}
+\Big(\frac{29}{5760}s_2 +\frac{139}{5760}\frac{s_1^2}{2}
\Big)\frac{(-5!!)}{z_1^7}
\nonumber
\\
&
+\frac{29}{5760}s_1 \frac{(-7!!)}{z_1^9}
+\frac{1}{1152}\frac{(-9!!)}{z_1^{11}}\Big\}\;.
\label{G21s}
\end{align}
We extract e.g.\
$\int_{\overline{\mathcal{M}}_{2,1}} \kappa_2(\kappa_1)^2=\frac{259}{5760}$,
$\int_{\overline{\mathcal{M}}_{2,1}} \kappa_2\kappa_1 \psi_1=\frac{101}{5760}$,
$\int_{\overline{\mathcal{M}}_{2,1}} (\kappa_1)^2 (\psi_1)^2=\frac{139}{5760}$
and so on. These intersection numbers agree with the implementation
in \texttt{admcycles} \cite{admcycles}.
\\
On the other hand we can apply $\hat{A}_{z_1,z_2}$ to \eqref{G21}
in order to extract intersection numbers on
$\overline{\mathcal{M}_{2,2}}$. We only give the part without
$\varrho_{l\geq 1}$ which encodes intersection numbers of $\psi$-classes:
\begin{align*}  
\mathcal{G}_2(z_1|z_2)
&= 
  \frac{(2\lambda)^{10}}{\varrho_0^4} \Big\{
\frac{1}{384} \frac{(-9!!)}{z_2^{11}}
\frac{(-3!!)}{z_1^5}
+\frac{29}{5760}
\frac{(-7!!)}{z_2^9}
\frac{(-5!!)}{z_1^7}
+\frac{29}{5760}
\frac{(-5!!)}{z_2^7}
\frac{(-7!!)}{z_1^9}
\\
&
+\frac{1}{1152}
\frac{(-11!!)}{z_2^{13}}
\frac{(-1!!)}{z_1^{3}}
+\frac{1}{384}
\frac{(-3!!)}{z_2^5}
\frac{(-9!!)}{z_1^{11}}
+\frac{1}{1152}
\frac{(-1!!)(-11!!)}{z_2^3z_1^{13}}
\\
&+\mathcal{O}(\varrho_{l\geq 1})\Big\}\;.
\end{align*}
This gives the intersection numbers
$\int_{\overline{\mathcal{M}}_{2,2}} \psi_1^0\psi_2^5=\frac{1}{1152}$,
$\int_{\overline{\mathcal{M}}_{2,2}} \psi_1\psi_2^4=\frac{1}{384}$ and
$\int_{\overline{\mathcal{M}}_{2,2}} \psi_1^2\psi_2^3=\frac{29}{5760}$
as well as the symmetric copies $\psi_1\leftrightarrow \psi_2$.
The last case arises by application of 
$\prod_{i=1}^2(2\lambda)^3\frac{2+3l_i}{z_i^{5+2l_i}} \frac{\partial}{
  \partial \varrho_{l_i}}\Big|_{l_1=1,l_2=2}$
to the term 
$(2\lambda)^4\frac{29}{128} \frac{\varrho_1\varrho_2}{\varrho_0^4}$
of $F_2$. In this sense $F_2$ itself can be viewed as generating function
of the subset of intersection numbers of
$\psi$-classes on $\overline{\mathcal{M}}_{2,n}$ where the power
of every $\psi_i$ is $\geq 2$.
\end{example}

\subsection{A deformed Virasoro algebra}
\label{sec:Virasoro}

We return to \eqref{ident-Z} with $O_V\equiv 0$, but instead of
applying the inverse of
$\hat{K}_z$ we directly take the residue to define a family of operators
on rational functions $f(\varrho)$ of $\varrho_l$:
\begin{align*}
\tilde{L}_n f(\rho):= \Res_{z\to 0} \Big[z^{3+2n}
\Big(\frac{V^2}{8\lambda^4} \hat{K}_z \hat{\mathrm{A}}^\dag_z + 
(\hat{\mathrm{A}}^\dag_z )^2 +\frac{1}{\varrho_0 z^4} \frac{\partial 
\hat{\mathrm{A}}^\dag_z }{\partial z} 
+\frac{V^2}{64\lambda^4 z^4} \Big) f(\varrho) dz\Big].
\nonumber
\end{align*}
By construction, $\tilde{L}_n \mathcal{Z}^{np}_V=0$. Recall that in
the Kontsevich model one has $L_n\mathcal{Z}$ for the \emph{full}
partition function and generators $L_n$ of a Virasoro algebra (or
rather a Witt algebra). As explained below, these $\tilde{L}_n$ do not satisfy
the commutation relations of the Virasoro algebra exactly. 
An explicit expression is obtained from 
\begin{align*}
\hat{K}_z \hat{\mathrm{A}}^{\dag}_z &=
\sum_{l= 0}^{\infty} \sum_{j=0}^{l} \frac{(3+2l)\varrho_{l-j} }{z^{4+2j}}
\frac{\partial}{\partial \varrho_l},
\\
\frac{1}{\varrho_0 z^4} \frac{\partial}{\partial z}
\hat{\mathrm{A}}^{\dag}_z
&=
\sum_{l= 0}^\infty \Big(\frac{3(3+2l) \varrho_{l+1}}{
\varrho_0^2 z^8}-\frac{(3+2l)(5+2l)}{\varrho_0 z^{10+2l}}\Big)
\frac{\partial}{\partial \varrho_l},
\\
\hat{\mathrm{A}}^{\dag}_z\hat{\mathrm{A}}^{\dag}_z 
&=\sum_{k= 0}^{\infty} \Big(\frac{(5+2k) \varrho_{k+2}}{
\varrho_0 z^3}-\frac{5+2k}{z^{7+2k}}\Big)\frac{(3+2k)}{
\varrho_0 z^3}
\frac{\partial}{\partial \varrho_k}
\\
&+
\sum_{k= 0}^{\infty} \Big({-}\frac{3\varrho_{1}}{
\varrho_0 z^3}+\frac{3}{z^{5}}\Big)
\Big(\frac{(3{+}2k) \varrho_{k+1}}{
\varrho_0^2 z^3}\Big)
\frac{\partial}{\partial \varrho_k}
\\
&+
\sum_{l,k= 0}^{\infty} \frac{(3{+}2l) (3{+}2k) \varrho_{k+1} \varrho_{l+1}}{
\varrho_0^2 z^6}
\frac{\partial^2}{\partial \varrho_l \partial \varrho_k}
\\
& -
\sum_{l,k= 0}^{\infty} \frac{2(3{+}2l)(3{+}2k) \varrho_{l+1}}{
\varrho_0 z^{8+2k}}
\frac{\partial^2}{\partial \varrho_l \partial \varrho_k}
+ 
\sum_{l,k= 0}^{\infty}
\frac{(3{+}2l)(3{+}2k)}{z^{10+2l+2k}}
\frac{\partial^2}{\partial \varrho_l \partial \varrho_k}.
\end{align*}
Evaluating the residues and defining $A=\frac{(2\lambda)^4}{4V^2}$ and rescaling
$L_n:=A \tilde{L}_n$ gives 
\begin{align*}
L_0&=\frac{1}{16}+ \frac{1}{2}\sum_{l=0}^\infty (3+2l)\varrho_l 
\frac{\partial} {\partial \varrho_l},
\\
L_1&=\frac{1}{2}\sum_{l=0}^\infty 
(5{+}2l)\varrho_{l}\frac{\partial}{\partial \varrho_{l+1}}
+A\Big(\sum_{k=0}^\infty \frac{(3{+}2k)}{\varrho_0^2}
\varrho_{k+1}\frac{\partial}{\partial \varrho_k}
{-} \frac{3\varrho_1}{\varrho_0^2}\Big)
\sum_{l=0}^\infty (3{+}2l)\varrho_{l+1}
\frac{\partial}{\partial \varrho_l}
\nonumber
\\
& \hspace*{-0.5cm} \text{and for $n\geq 2$: } \nonumber
\\
\nonumber
L_{n}&=\frac{1}{2}
\sum_{l=0}^\infty 
(3{+}2n{+}2l)\varrho_{l}\frac{\partial}{\partial \varrho_{n+l}}
+A
 \delta_{n,2} \sum_{l=0}^\infty \frac{6(3+2l) \varrho_{l+1}}{
\varrho_0^2}\frac{\partial}{\partial \varrho_{l}}
\nonumber
\\
&-A\frac{2(2n{-}3)(2n{-}1)}{\varrho_0}
\frac{\partial}{\partial \varrho_{n-3}}
+A 
\sum_{l= 0}^{n-3}(3+2l)(2n-2l-3)
\frac{\partial^2}{\partial \varrho_l \partial \varrho_{n-3-l}}
\nonumber
\\
&-A\sum_{l= 0}^{\infty} \frac{2(3+2l)(2n-1) \varrho_{l+1}}{
\varrho_0}
\frac{\partial^2}{\partial \varrho_{n-2} \partial \varrho_l}.
\nonumber
\end{align*}
To write it in a more compact way, it is convenient to introduce the
differential operator
\begin{align}\label{roh-1}
 \hat{D}:=
 \sum_{l=0}^\infty \frac{(3+2l)\varrho_{l+1}}{\varrho_0}
\frac{\partial}{\partial \varrho_l}\;.
\end{align}
Note that $\frac{\partial}{\partial \varrho_{l}}
\hat{D}\neq\hat{D}\frac{\partial}
{\partial \varrho_{l}}$. The result is:
\begin{lemma} \label{LemmaVira}
The nonplanar partition $\mathcal{Z}_V^{np}
:= \exp\Big(\sum_{g=1}^\infty V^{2-2g} F_g\Big)$ satisfies 
the constraints $L_n\mathcal{Z}^{np}_V=0$ for all $n\in
\mathbb{N}$, where 
\begin{align}\label{Virasoro}
L_0&=\frac{1}{16}+ \frac{1}{2}\sum_{l=0}^\infty (3+2l)\varrho_l 
\frac{\partial} {\partial \varrho_l},
\\
L_1&=\frac{1}{2}\sum_{l=0}^\infty 
(5+2l)\varrho_{l}\frac{\partial}{\partial \varrho_{l+1}}
+A\hat{D}^2
\nonumber
\\
& \hspace*{-0.5cm} \text{and for $n\geq 2$: } \nonumber
\\
\nonumber
L_{n}&=\frac{1}{2}
\sum_{l=0}^\infty 
(3{+}2n{+}2l)\varrho_{l}\frac{\partial}{\partial \varrho_{n+l}}
+A 
\sum_{l= 0}^{n-3}(3+2l)(2n-2l-3)
\frac{\partial^2}{\partial \varrho_l 
\partial \varrho_{n-3-l}}
\nonumber
\\
&-2A(2n-1)\frac{\partial}
{\partial \varrho_{n-2}}\hat{D},
\nonumber
\end{align}
where $\hat{D}$ is the differential operator defined by \eqref{roh-1} 
and $A=\frac{(2\lambda)^4}{4V^2}$.
\end{lemma}

With the commutation rules
\begin{align*}
\Big[\hat{D},\varrho_l\Big]
&= \frac{3+2l}{\varrho_0} \varrho_{l+1}\;, & l\geq 0\;,
\\
\Big[\hat{D},\frac{\partial}{\partial
  \varrho_l} \Big]
&= -\frac{1+2l}{\varrho_0} \frac{\partial}{\partial \varrho_{l-1}}
& l\geq 1\;, && 
\Big[\hat{D},\frac{\partial}{\partial
  \varrho_0} \Big]
&= \frac{1}{\varrho_0} \hat{D}\;
\nonumber,
\end{align*}
we end up after long but straightforward computation:
\begin{lemma}
 The generators $L_n$ of Lemma \ref{LemmaVira} obey the commutation relation
 \begin{align*}
  [L_0,L_n]=-nL_n
 \end{align*}
and for any $m,n\geq 1$,
\begin{align*}
[L_m,L_n]
&= (m-n) L_{m+n}
-4A (m+1)B_{n-2}L_{m-1}
+4A (n+1)B_{m-2}L_{n-1}
\\
&- 4A \delta_{m,1} \frac{n(n+1)}{\varrho_0^2} L_{n-2}
+ 4A \delta_{n,1} \frac{m(m+1)}{\varrho_0^2} L_{m-2}
\nonumber
\end{align*}
where 
\[
B_m:= (2m+3)\frac{\partial}{\partial \varrho_{m}}
\frac{1}{\varrho_0} \quad\text{for }m\geq 0\;,\qquad
B_{-1}:= -\frac{1}{2}\Big\{\hat{D},
\frac{1}{\varrho_0} \Big\} \;.
\]
\end{lemma}
\begin{remark}
  The differential operator $\hat{D}$ has its origin in the implicit
  definition of the constant $c$ \eqref{implicit} and the dependence
  of $\varrho_l$ on $c$. Since the expression
 \begin{align*}
  \varrho_{-1}:=-\int_1^{\Lambda^2} dY\;\frac{\varrho(Y)}{\sqrt{Y+c}}, 
\qquad  (\varrho_{-1}=c\;\text{for}\;D=0)
 \end{align*}
diverges for any $D>0$ in the limit $\Lambda\to\infty$, it is necessary
to reconstruct the analogue of the 
derivative $\frac{\partial}{\partial c}$ 
through the differential  operator $\hat{D}$. 
Replacing the differential operator by 
$\hat{D}\mapsto \frac{\partial}{\partial \varrho_{-1}}$ and the generators by
 \begin{align*}
  L_n \mapsto L_n+\frac{1}{2}\varrho_{-1}\frac{\partial}{\partial \varrho_{-1}},
 \end{align*}
 recovers the original undeformed Virasoro algebra. As explained
 above, $\varrho_{-1}$ and consequently the standard Virasoro
 generators do not exist in dimension $D>0$.  The renormalisation 
 necessary for $D>0$ alters the definition of $c$ and prevents the
 construction of $L_{-1}$ and $F_0$ which in $D=0$ depend on $\varrho_{-1}$.
 Higher topologies ($\chi\leq 0$) are not affected because any 
 explicit $\varrho_{-1}$-dependence drops out.
\end{remark}

\section{Summary}
\label{sec:summary}

The construction of the renormalised $\Phi^3_D$-QFT model on 
noncommutative geometries of dimension $D\leq 6$ is now complete. 
After the previous solution of the planar sector in
\cite{Grosse:2016pob, Grosse:2016qmk} we established
in this paper an algorithm to compute any correlation function 
$
G^{(g)}_{|\narrowunderline{p}_1^1...\narrowunderline{p}^1_{N_1}|...|
\narrowunderline{p}_1^B...\narrowunderline{p}^B_{N_B}|}$ of genus $g\geq 1$:

\begin{enumerate}\itemsep 0pt
\item Compute the free energy $F_g(t)$ via Theorem~\ref{Thm:Zstable} and 
the note thereafter. It encodes the $p(3g-3)$ intersection numbers 
of $\psi$-classes (all with a power $\geq 2$) on the moduli spaces
$\overline{\mathcal{M}}_{g,n}$ of stable complex curves of genus $g$. 
Take $F_1=-\frac{1}{24} \log T_0$ for $g=1$. Alternatively, start
from intersection numbers obtained by other methods (e.g.\
\cite{admcycles}). 

\item Change variables to $\varrho_0=1-t_0$ and $\varrho_l
=-\frac{t_{l+1}}{(2l+1)!!}$, where $\varrho_l$ are given by 
\eqref{moments} for the measure \eqref{measure0} and with $c$ 
implicitly defined by (\ref{implicit}).

\item Apply to the resulting $F_g(\varrho)$ according to 
Proposition~\ref{prop:F} and Theorem~\ref{finaltheorem} 
the boundary creation operators $
\hat{\mathrm{A}}^{\dag g}_{z_1,\dots,z_B}\circ 
\dots \hat{\mathrm{A}}^{\dag g}_{z_1,z_2} \circ 
\hat{\mathrm{A}}^{\dag g}_{z_1}$ defined in Definition \ref{DefOp}.
Multiply by $(2\lambda)^{4g+3B-4}$ to obtain
$\G_g(z_1|\cdots|z_B)$. 

\item Pass to 
$\G_g(z_1^1...z^1_{N_1}|...|z_1^B...z^B_{N_B})$ via difference quotients 
\eqref{recursiveGBg}, where $X^\beta_{k_\beta}=(z^\beta_{k_\beta})^2-c$.

\item Specify to $z^\beta_{k_\beta}\mapsto (
4F_{\narrowunderline{p}^\beta_{k_\beta}}^2+c)^{1/2}$ to obtain 
$G^{(g)}_{|\narrowunderline{p}_1^1...\narrowunderline{p}^1_{N_1}|...|
\narrowunderline{p}_1^B...\narrowunderline{p}^B_{N_B}|}$, where 
$F_{\narrowunderline{p}}$ arises by mass-renormalisation from the 
$E_{\narrowunderline{p}}$ in the initial action \eqref{action3} of the model.
\label{algo-last}
\end{enumerate}
Our work was essentially a reverse engineering in opposite order. The
last step \ref{algo-last}.\ was given to us by the formal partition
function of the model. From there we had to climb up to the formula
for the intersection numbers. 

We remark that, in spite of the relation to the integrable Kontsevich
model \cite{Kontsevich:1992ti}, this $\Phi^3_D$-model provides a fascinating toy model for a
quantum field theory which shows many facets of renormalisation. Our
exact formulae can be expanded about $\lambda=0$ via \eqref{implicit} and
agree with the usual perturbative renormalisation which in $D=6$ needs
Zimmermann's forest formula \cite{Zimmermann:1969jj} (see
\cite{Grosse:2016qmk}). Also note that 
at fixed genus $g$ one expects $\mathcal{O}(n!)$ graphs with $n$
vertices so that a convergent summation at fixed $g$ cannot be
expected a priori. Moreover, in $D=6$ the $\beta$-function of the
coupling constant is positive for real $\lambda$, which in this
particular case poses not the slightest problem for summation. 

What remains to understand is the resummation in the genus, i.e.\
$\sum_{g=1}^\infty V^{2-2g} \G_g(z)$ or $\sum_{g=2}^\infty N^{2-2g}
F_g(t)$. All intersection numbers are positive for $t_l>0$, which
corresponds to $\varrho_l<0$ for $l\geq 1$. Because of the
$\lambda^2$-prefactor in front of \eqref{measure0} and the definition
\eqref{moments} of the $\varrho_l$, we have $t_l>0$ for real
$\lambda$. Therefore, the sum over the genus must diverge for $\lambda
\in \mathbb{R}$, which is not surprising because in this case the
action \eqref{action3} is unbounded from below. In contrast, it was
observed in \cite{Grosse:2016qmk} that for the planar sector it is
better to take $\lambda\in \mathbb{R}$. The final challenge of this
model is to establish that $\sum_{g=2}^\infty N^{2-2g} F_g(t)$ is
Borel summable for $t_l<0$, which would achieve convergence of the
genus expansion in two disks in the complex $\lambda$-plane tangent
from above and below the real axis at $\lambda=0$.
Important progress in this direction was recently achieved in 
\cite{Eynard:2019mps} with the proof of factorial bounds 
$|F_g| \leq r^{-g}\Gamma(\beta g) $ for some $r>0$ and $\beta\leq 5$.

\subsection*{Acknowledgements}

We thank Ga\"etan Borot, Johannes Schmitt and Alexander Alexandrov 
for valuable feedback. We are grateful to an anonymous referee for
a thorough verification and numerous helpful comments.
This feedback allowed us to eliminate many typos and
to fill a larger gap in the previous proof of Proposition \ref{prop:F}.
Our work was partially supported by the Deutsche
Forschungsgemeinschaft (DFG) under the coordinated programmes SFB 878
and RTG 2149 and the Cluster of Excellence\footnote{``Gef\"ordert
  durch die Deutsche Forschungsgemeinschaft (DFG) im Rahmen der
  Exzellenzstrategie des Bundes und der L\"ander EXC 2044 –390685587,
  Mathematik M\"unster: Dynamik--Geometrie--Struktur"} ``Mathematics
M\"unster'', as well as by the Erwin Schr\"odinger Institute in
Vienna.  A.H. is greatful to Akifumi Sako for fruitful discussions and
for the hospitality at the Tokyo University of Science, where a major
part of this work was developed.

\appendix 

\section{Lemmas relevant for 
Theorem \ref{finaltheorem}}
\label{appendix1}

\begin{assumption}\label{conj1}
We assume that $\G_g(z)$ is, for $g\geq 1$, a function of $z$ and of 
$\varrho_0,\dots,\varrho_{3g-2}$ (true for $g=1$). We take eq.\
\eqref{eqfinalthm} and in particular   
$\G_g(z|z\triangleleft_J):=(2\lambda)^3 \hat{\mathrm{A}}^{\dag g}_{z_J,z} 
\G_g(z\triangleleft_J)$
as a {\sffamily definition} of a family of functions
$\G_g(z_{1}|z_J)$ and derive equations for that family. 
\end{assumption}

\begin{lemma}\label{lemma5}
Let $J=\{2,...,B\}$. Then under Assumption~\ref{conj1}
and with Definition~\ref{defint} of 
the operator $\hat{K}_{z_1}$ one has
\begin{align*}
&\hat{K}_{z_1} \G_g(z_1|z\triangleleft_J)
\\
&=\frac{8\lambda^3}{z_1^2}
\bigg(\sum_{l= 0}^{3g-3+|J|} 
(3+2l)\frac{\partial \G_g(z\triangleleft_J)}{\partial \varrho_l}
\sum_{k=0}^{l}\frac{\varrho_k}{z_1^{2+2l-2k}}
+\sum_{\beta\in J}\frac{1}{z_\beta} \frac{\partial}{\partial z_\beta}
\G_g(z\triangleleft_J)\bigg).
\end{align*}
\end{lemma}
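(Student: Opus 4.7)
}
By Assumption~\ref{conj1}, $\G_g(z_1|z\triangleleft_J)$ is \emph{defined} as
$(2\lambda)^3\hat{\mathrm{A}}^{\dag g}_{z_J,z_1}\G_g(z\triangleleft_J)$, and the crucial observation is that $\G_g(z\triangleleft_J)$ depends only on the $\varrho_l$ and on $\{z_i\}_{i\in J}$, but \emph{not} on $z_1$. The plan is therefore to expand $\hat{\mathrm{A}}^{\dag g}_{z_J,z_1}$ according to Definition~\ref{DefOp} so that $\G_g(z_1|z\triangleleft_J)$ is written as a finite linear combination of \emph{$z_1$-independent} quantities (the $\frac{\partial\G_g(z\triangleleft_J)}{\partial\varrho_l}$ and $\frac{\partial\G_g(z\triangleleft_J)}{\partial z_i}$) multiplied by elementary $z_1$-monomials $\frac{1}{z_1^3}$, $\frac{1}{z_1^{5+2l}}$, and $\frac{1}{z_1^3}$ respectively. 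Since $\hat{K}_{z_1}$ is $\mathbb{C}$-linear and acts only in the variable $z_1$, it can be pushed through the $\varrho_l$- and $z_i$-derivatives and evaluated on the monomials alone.

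The second step applies Lemma~\ref{lemmaopK} in the form $\hat{K}_{z_1}\bigl(\frac{1}{z_1^3}\bigr)=\frac{\varrho_0}{z_1^2}$ and $\hat{K}_{z_1}\bigl(\frac{1}{z_1^{5+2l}}\bigr)=\sum_{k=0}^{l+1}\frac{\varrho_k}{z_1^{2l+4-2k}}$ (taking $n=l+1$ in the second identity). Multiplying out the coefficients of $\hat{\mathrm{A}}^{\dag g}_{z_J,z_1}$ one finds that the term generated by the $\varrho_l$-derivative reads
\[
(3+2l)\frac{\partial\G_g(z\triangleleft_J)}{\partial\varrho_l}\!\left(-\frac{\varrho_{l+1}}{z_1^2}+\sum_{k=0}^{l+1}\frac{\varrho_k}{z_1^{2l+4-2k}}\right),
\]
where the factor $\varrho_0$ in the denominator of the first piece of $\hat{\mathrm{A}}^{\dag g}_{z_J,z_1}$ is exactly compensated by the factor $\varrho_0$ produced by $\hat{K}_{z_1}(1/z_1^3)$. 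The $z_i$-derivative part yields $\sum_{\beta\in J}\frac{1}{z_1^2 z_\beta}\frac{\partial\G_g(z\triangleleft_J)}{\partial z_\beta}$ after the same cancellation.

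The final step is the index-bookkeeping miracle: the $k=l+1$ contribution inside $\sum_{k=0}^{l+1}\frac{\varrho_k}{z_1^{2l+4-2k}}$ equals $\frac{\varrho_{l+1}}{z_1^2}$ and is precisely cancelled by the $-\frac{\varrho_{l+1}}{z_1^2}$ coming from the first piece of $\hat{\mathrm{A}}^{\dag g}_{z_J,z_1}$. What remains is $\sum_{k=0}^{l}\frac{\varrho_k}{z_1^{2l+4-2k}}=\frac{1}{z_1^2}\sum_{k=0}^{l}\frac{\varrho_k}{z_1^{2+2l-2k}}$. Factoring the common $\frac{1}{z_1^2}$ out of both the $\varrho$- and $z_i$-derivative contributions and restoring the overall $(2\lambda)^3=8\lambda^3$ prefactor yields the claimed identity. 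There is no real obstacle here beyond index manipulation; the potential pitfall is simply confirming that the summation range $0\leq l\leq 3g-3+|J|$ of the boundary creation operator is preserved verbatim, which is immediate since $\hat{K}_{z_1}$ does not introduce further $\varrho_l$-dependence that would enlarge it.
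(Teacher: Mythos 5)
Your proof is correct and follows exactly the paper's (one-line) argument: expand $\G_g(z_1|z\triangleleft_J)=(2\lambda)^3\hat{\mathrm{A}}^{\dag g}_{z_J,z_1}\G_g(z\triangleleft_J)$ via Definition~\ref{DefOp}, push the $z_1$-local operator $\hat{K}_{z_1}$ through the $z_1$-independent coefficients, and apply Lemma~\ref{lemmaopK} with $n=0$ and $n=l+1$; the cancellation of the $k=l+1$ term against $-\varrho_{l+1}/z_1^2$ and of the $\varrho_0$ factors is precisely the intended bookkeeping. Nothing is missing.
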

\begin{proof}
Take Definition \ref{DefOp} for $\hat{\mathrm{A}}^{\dag g}_{z_J,z} 
\G_g(z\triangleleft_J)$ 
and apply Lemma \ref{lemmaopK}.
\end{proof}

\begin{lemma}\label{lemma6} Let $J=\{2,...,B\}$.
Then under Assumption~\ref{conj1} one has 
\begin{align*}
&\frac{8\lambda^3}{z_\beta} \frac{\partial}{\partial z_\beta}
\frac{\G_g(z_1|z\triangleleft_{J\backslash\{\beta\}})
-\G_g(z_\beta|z\triangleleft_{J\backslash\{\beta\}})}{z_1^2-z_\beta^2}
+2\lambda \G_0(z_1|z_\beta )\G_{g}(z_1|z\triangleleft_{J\backslash \{\beta \}})
\\
&=
(2\lambda)^6
\Bigg[
\sum_{l= 0}^{3g-4+|J|} 
\Bigg(
-\sum_{n=0}^1 
\frac{(3+2l)(1+2n) \varrho_{l+1}}{\varrho_0 
z_1^{4-2n}z_\beta^{3+2n}}
\\
&\hspace*{5cm}
+\sum_{n=0}^{l+2}
\frac{(3+2l)(1+2n)}{z_1^{6+2l-2n}z_\beta^{3+2n}}
\Bigg)
\frac{\partial \G_g(z\triangleleft_{J\backslash\{\beta \}})}{\partial \varrho_l}
\\
&+\sum_{i\in J\backslash \{\beta \}}
\sum_{n=0}^1\frac{1+2n}{\varrho_0 z_i z_1^{4-2n}z_\beta^{3+2n}}
\frac{\partial \G_g(z\triangleleft_{J\backslash\{\beta \}})}{\partial z_i}
\Bigg].
\end{align*}
\end{lemma}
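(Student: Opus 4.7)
The strategy is to substitute Assumption~\ref{conj1} on both sides and reduce the lemma to a purely algebraic identity that can then be verified monomial-by-monomial. Abbreviate $\hat{A}(z):=\hat{\mathrm{A}}^{\dag g}_{z_{J\backslash\{\beta\}},z}$. By the defining equation~\eqref{eqfinalthm} we have $\G_g(z_1|z\triangleleft_{J\backslash\{\beta\}})=(2\lambda)^3\hat{A}(z_1)\G_g(z\triangleleft_{J\backslash\{\beta\}})$ and the analogous expression with $z_\beta$, while~\eqref{G0z} gives $\G_0(z_1|z_\beta)=\frac{4\lambda^2}{z_1 z_\beta(z_1+z_\beta)^2}$. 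Substituting these into the left-hand side of the lemma produces $(2\lambda)^6$ times
\[
\Bigl[\frac{1}{z_\beta}\,\partial_{z_\beta}\frac{\hat{A}(z_1)-\hat{A}(z_\beta)}{z_1^2-z_\beta^2}+\frac{\hat{A}(z_1)}{z_1 z_\beta(z_1+z_\beta)^2}\Bigr]\G_g(z\triangleleft_{J\backslash\{\beta\}}).
\]
Only the $z$-dependent coefficients of $\hat{A}(z)$ feel this operation, so the claim splits into one identity per coefficient type appearing in Definition~\ref{DefOp}.

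The key step is the algebraic identity
\begin{equation}
\frac{1}{z_\beta}\frac{\partial}{\partial z_\beta}\frac{z_1^{-(2k+1)}-z_\beta^{-(2k+1)}}{z_1^2-z_\beta^2}+\frac{z_1^{-(2k+1)}}{z_1 z_\beta(z_1+z_\beta)^2}=\sum_{n=0}^{k}\frac{1+2n}{z_1^{2k+2-2n}\,z_\beta^{3+2n}}
\label{eq:lem6-key}
\end{equation}
for every integer $k\geq 0$. I would derive~\eqref{eq:lem6-key} by factoring $z_\beta^{2k+1}-z_1^{2k+1}=-(z_1-z_\beta)\sum_{j=0}^{2k}z_1^j z_\beta^{2k-j}$, so that the divided difference becomes $-\frac{1}{z_1^{2k+1}z_\beta^{2k+1}(z_1+z_\beta)}\sum_j z_1^j z_\beta^{2k-j}$, differentiating in $z_\beta$, and then adding the $\G_0$-term. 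The mechanism is that the spurious pole at $z_1+z_\beta=0$ cancels between the two contributions, leaving a finite Laurent polynomial which a short partial-fraction check identifies with the right-hand side; an alternative route is induction on $k$ from the base case $k=0$, which reduces to the direct evaluation $\frac{1}{z_1^2 z_\beta^3}$.

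With~\eqref{eq:lem6-key} in hand the lemma follows by applying it three times. Applied with $k=1$ to the coefficient $-\frac{(3+2l)\varrho_{l+1}}{\varrho_0 z^3}$ of $\partial/\partial\varrho_l$ in $\hat{A}(z)$, it reproduces the sum $-\sum_{n=0}^{1}\frac{(3+2l)(1+2n)\varrho_{l+1}}{\varrho_0 z_1^{4-2n}z_\beta^{3+2n}}$. Applied with $k=l+2$ to the coefficient $\frac{3+2l}{z^{5+2l}}$, it reproduces $\sum_{n=0}^{l+2}\frac{(3+2l)(1+2n)}{z_1^{6+2l-2n}z_\beta^{3+2n}}$. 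Applied with $k=1$ to the coefficient $\frac{1}{\varrho_0 z^3 z_i}$ of $\partial/\partial z_i$, it reproduces the $z_i$-derivative contribution. Summing over $l$ and $i\in J\backslash\{\beta\}$, multiplying by $(2\lambda)^6$, and grouping the three resulting families of terms yields precisely the right-hand side of the lemma. The main obstacle is the identity~\eqref{eq:lem6-key} itself, and specifically the cancellation of the $(z_1+z_\beta)$-pole between its two left-hand summands; once that has been established the remaining content is careful bookkeeping of the three upper summation limits $1$, $l+2$, and $1$.
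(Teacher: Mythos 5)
Your proposal is correct and follows essentially the same route as the paper: the paper likewise substitutes the definition of the boundary creation operator, expands the difference quotient via the telescoping factorisation of $z_1^{2k+1}-z_\beta^{2k+1}$, rewrites the $\G_0(z_1|z_\beta)$-term so that the spurious $(z_1+z_\beta)$-pole cancels, and then carries out the remaining $\frac{1}{z_\beta}\partial_{z_\beta}$ to obtain the three families of terms. Your key identity \eqref{eq:lem6-key} is exactly the scalar content of that computation (and checks out, including the upper limits $k=1$ and $k=l+2$), so the only difference is that you package it as a standalone identity applied coefficient-by-coefficient rather than keeping the derivative outside the whole bracket.
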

\begin{proof}
Definition \ref{DefOp} gives with 
$\frac{\frac{1}{z_1^{3+2j}}-\frac{1}{y^{3+2j}}}{z_1^2-y^2}=
-\sum_{l=0}^{2j+2}\frac{z_1^ly^{2j+2-l}}{z_1^{3+2j}y^{3+2j}(z_1+y)}$
for the first term
\begin{align*}
&\frac{\G_g(z_1|z\triangleleft_{J\backslash\{\beta\}})
-\G_g(z_\beta|z\triangleleft_{J\backslash\{\beta\}})}{
(2\lambda)^3(z_1^2-z_\beta^2)}
\\
&= \sum_{l=0}^{3g-4+|J|}
\left(
-\frac{(3+2l) \varrho_{l+1}}{\varrho_0}
\Bigg(\frac{\frac{1}{z_1^3}-\frac{1}{z_\beta^3}}{z_1^2-z_\beta^2}\Bigg)
+(3+2l) 
\Bigg(\frac{\frac{1}{z_1^{5+2l}}-\frac{1}{z_\beta^{5+2l}}}{z_1^2-z_\beta^2}\Bigg)
\right)
\frac{\partial \G_g(z\triangleleft_{J\backslash\{\beta \}})}{\partial \varrho_l}
\\
&+\sum_{i\in J\backslash \{\beta \}}\frac{1}{\varrho_0 z_i} 
\Bigg(\frac{\frac{1}{z_1^3}-\frac{1}{z_\beta^3}}{z_1^2-z_\beta^2}\Bigg)
\frac{\partial \G_g(z\triangleleft_{J\backslash\{\beta \}})}{\partial z_i}
\\
&=\sum_{l= 0}^{3g-4+|J|} 
\Bigg(
\frac{(3+2l) \varrho_{l+1}}{\varrho_0}
\frac{\sum_{n=0}^{2}z_1^n z_\beta^{2-n}}{z_1^{3}z_\beta^3(z_1+z_\beta)}
\\
&\qquad\qquad -(3+2l)
\frac{\sum_{n=0}^{2l+4}z_1^n z_\beta^{2l+4-n}}{z_1^{5+2l}z_\beta^{5+2l}
(z_1+z_\beta)}
\Bigg)
\frac{\partial \G_g(z\triangleleft_{J\backslash\{\beta \}})}{\partial \varrho_l}
\\
&-\sum_{i\in J\backslash \{\beta \}}\frac{1}{\varrho_0 z_i}
\frac{\sum_{n=0}^{2}z_1^n z_\beta^{2-n}}{z_1^{3}z_\beta^3(z_1+z_\beta)}
\frac{\partial \G_g(z\triangleleft_{J\backslash\{\beta \}})}{\partial z_i}.
\end{align*}
The second term (after dividing by $(2\lambda)^3$) reads
\begin{align*}
&\frac{(2\lambda)}{(2\lambda)^3} 
\G_0(z_1|z_\beta )\G_{g}(z_1|z\triangleleft_{J\backslash \{\beta \}})
\\
&=-\frac{2\lambda}{z_1z_\beta} \frac{\partial}{\partial z_\beta} 
\frac{1}{(z_1+z_\beta)}\Bigg[
\sum_{l=0}^{3g-4+|J|} 
\Big(\frac{-(3+2l) \varrho_{l+1}}{\varrho_0 z_1^3}
  \\
  &\hspace*{3cm} +\frac{(3+2l)}{z_1^{5+2l}} \Big)
\frac{\partial \G_g(z\triangleleft_{J\backslash\{\beta \}})}{\partial \varrho_l}
+\sum_{i\in J\backslash \{\beta \}}\frac{1}{\varrho_0z_i z_1^3}
\frac{\partial \G_g(z\triangleleft_{J\backslash\{\beta \}})}{\partial z_i}
\Bigg].
\end{align*}
The denominator $(z_1+z_\beta)$ cancels in the combination 
of interest:
\begin{align*}
&\frac{8\lambda^3}{z_\beta} \frac{\partial}{\partial z_\beta}
\frac{\G_g(z_1|z\triangleleft_{J\backslash\{\beta\}})
-\G_g(z_\beta|z\triangleleft_{J\backslash\{\beta\}})}{z_1^2-z_\beta^2}
+2\lambda \G_0(z_1|z_\beta )\G_{g}(z_1|z\triangleleft_{J\backslash \{\beta \}})
\\
&= 
\frac{(2\lambda)^6}{z_\beta} \frac{\partial}{\partial z_\beta}
\Bigg[
\sum_{l= 0}^{3g-4+|J|} 
\Bigg(
\frac{(3+2l) \varrho_{l+1}}{\varrho_0}
\frac{z_1^2+z_\beta^2}{z_1^4z_\beta^3}
\\
&
-(3{+}2l)
\frac{\sum_{n=0}^{l+2}z_1^{2n} z_\beta^{2l+4-2n}}{z_1^{6+2l}z_\beta^{5+2l}}
\Bigg)
\frac{\partial \G_g(z\triangleleft_{J\backslash\{\beta \}})}{\partial \varrho_l}
%\\
%&
- \hspace*{-3mm} \sum_{i\in J\backslash \{\beta \}}\frac{1}{\varrho_0 z_i}
\frac{z_1^2+z_\beta^2}{z_1^4z_\beta^3}
\frac{\partial \G_g(z\triangleleft_{J\backslash\{\beta \}})}{\partial z_i}
\Bigg].
\end{align*}
The remaining $z_\beta$-derivative confirms the assertion.
\end{proof}

\begin{lemma}\label{lemma7}
Let $J=\{2,...,B\}$.
Then under Assumption~\ref{conj1} one has 
\begin{align*}
&2\lambda \sum_{h=1}^{g-1} \G_h(z_1|z\triangleleft_J) \G_{g-h}(z_1) 
+\lambda\sum_{h=1}^{g-1} \sum_{\substack{I\subset J\\ 1\leq |I|< |J|}}
\G_h(z_1|z\triangleleft_I)\G_{g-h}(z_1|z\triangleleft_{J\backslash I})
\\
&+\lambda \G_{g-1}(z_1|z_1|z\triangleleft_J)
\\
&=-(2\lambda)^{3B-3}\hat{\mathrm{A}}^{\dag g}_{z_1,\dots,z_B}...
\hat{\mathrm{A}}^{\dag g}_{z_1,z_2}\hat{K}_{z_1} \G_g(z_1).
\end{align*}
\end{lemma}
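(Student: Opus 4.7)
The strategy is to apply $(2\lambda)^{3B-4}\hat{\mathrm{B}}$, where $\hat{\mathrm{B}}:=\hat{\mathrm{A}}^{\dag g}_{z_1,\dots,z_B}\circ\cdots\circ\hat{\mathrm{A}}^{\dag g}_{z_1,z_2}$, to both sides of the one-point Dyson-Schwinger equation \eqref{1punktneuV}. Substituting $\hat{K}_{z_1}\G_g(z_1)=-\tfrac12\sum_{h=1}^{g-1}\G_h(z_1)\G_{g-h}(z_1)-2\lambda^2\G_{g-1}(z_1|z_1)$ reduces the task to computing $(2\lambda)^{3B-4}\hat{\mathrm{B}}$ on each summand. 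Since each $\hat{\mathrm{A}}^{\dag g}_{z_J,z}$ is a first-order derivation on the algebra generated by $\{\varrho_l\}$ and $\{z_i\}$, Leibniz's rule distributes $\hat{\mathrm{B}}$ over products.

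For $\G_h(z_1)\G_{g-h}(z_1)$ with $1\leq h\leq g-1$, the Leibniz expansion is indexed by subsets $I\subset J=\{2,\dots,B\}$ specifying which new boundaries are routed to $\G_h$. Because $\G_h(z_1)$ depends only on $z_1$ and on $\varrho_0,\dots,\varrho_{3h-2}$ (by Proposition~\ref{structure-Gz}), the derivative terms in the partial operator product acting on $\G_h$ that concern other $z_j$ or $\varrho_l$ annihilate; what remains is precisely the sub-composition appearing in Theorem~\ref{finaltheorem} at genus $h$ with $|I|+1$ boundaries. Inductively this partial product returns $(2\lambda)^{1-3|I|}\G_h(z_1|z\triangleleft_I)$ for $|I|\geq 1$ and leaves $\G_h(z_1)$ unchanged for $|I|=0$, and analogously for $\G_{g-h}$ with the complementary subset. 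Multiplying by $(2\lambda)^{3B-4}$ produces coefficient $1$ at the extremal subsets $|I|\in\{0,B-1\}$ and coefficient $2\lambda$ at interior subsets. Summing over $h=1,\dots,g-1$ with the prefactor $-\tfrac12$ and using the symmetry $h\leftrightarrow g-h$ to merge $I=\varnothing$ with $I=J$ produces $\sum_{h=1}^{g-1}\G_h(z_1|z\triangleleft_J)\G_{g-h}(z_1)$ plus $\lambda\sum_{h=1}^{g-1}\sum_{1\leq|I|<|J|}\G_h(z_1|z\triangleleft_I)\G_{g-h}(z_1|z\triangleleft_{J\setminus I})$. The missing $h=g$ term $\G_g(z_1|z\triangleleft_J)\G_0(z_1)$ of the Lemma arises from the commutator $[\hat{\mathrm{B}},\hat{K}_{z_1}]$: the $\partial_{\varrho_l}$-components of $\hat{\mathrm{B}}$ differentiate the $\varrho$-dependence of $\hat{K}_{z_1}$ itself (through $\partial_{\varrho_l}\G_0(z_1)=z_1^{2l+1}$, by \eqref{Gzminusz}, and through the integral kernel), and combined with the defining relation $\hat{\mathrm{B}}\G_g(z_1)=(2\lambda)^{4-3B}\G_g(z_1|z\triangleleft_J)$ they reassemble exactly into the extra product with $\G_0(z_1)$.

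The remaining summand is $-2\lambda^2\G_{g-1}(z_1|z_1)$, which is understood as the diagonal limit of the two-boundary function $\G_{g-1}(z_1|z')$. Viewing $\G_{g-1}(z_1|z_1|z\triangleleft_J)$ as a $(1+1+\cdots+1)$-point function of genus $g-1$ with $B+1$ boundaries two of which coincide, Theorem~\ref{finaltheorem} at that genus and boundary count gives the inductive identification $\hat{\mathrm{B}}(\G_{g-1}(z_1|z_1))=(2\lambda)^{1-3B}\G_{g-1}(z_1|z_1|z\triangleleft_J)$. Multiplying by $(2\lambda)^{3B-4}\cdot(-2\lambda^2)$ and the overall minus sign of the Lemma's right-hand side yields $\lambda\G_{g-1}(z_1|z_1|z\triangleleft_J)$, matching the third term of the left-hand side. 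The main obstacles are: (i) the systematic bookkeeping of $(2\lambda)$-powers across all Leibniz contributions; (ii) justifying the interchange of diagonal limit and boundary creation for $\G_{g-1}(z_1|z_1)$; and (iii) the delicate verification that $[\hat{\mathrm{B}},\hat{K}_{z_1}]\G_g(z_1)$ reassembles into the $h=g$ term, which requires controlling how the $\varrho$-derivatives interact with the moment-expansion of both $\G_0(z_1)$ and the integral kernel in Definition~\ref{defint}.
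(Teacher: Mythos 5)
Your overall strategy is the paper's: substitute the Dyson--Schwinger equation \eqref{1punktneuV} for $\hat{K}_{z_1}\G_g(z_1)$ and apply $(2\lambda)^{3B-4}\hat{\mathrm{A}}^{\dag g}_{z_1,\dots,z_B}\cdots\hat{\mathrm{A}}^{\dag g}_{z_1,z_2}$ using the Leibniz rule; the paper's proof consists of exactly this one sentence. Your bookkeeping for the quadratic terms (coefficient $\tfrac12$ at the extremal subsets $I=\varnothing$ and $I=J$, coefficient $\lambda$ at interior subsets, merged via the $h\leftrightarrow g-h$ symmetry) and your treatment of $2\lambda^2\G_{g-1}(z_1|z_1)$ via commutativity of the creation operators are sound and correctly reproduce the second and third terms of the left-hand side together with the portion $h=1,\dots,g-1$ of the first sum.

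The genuine gap is your derivation of the $h=g$ summand $\G_g(z_1|z\triangleleft_J)\,\G_0(z_1)$ from a ``commutator $[\hat{\mathrm{B}},\hat{K}_{z_1}]$''. Once $\hat{K}_{z_1}\G_g(z_1)$ has been replaced by the explicit function $-\tfrac12\sum_{h=1}^{g-1}\G_h\G_{g-h}-2\lambda^2\G_{g-1}(z_1|z_1)$, the operator $\hat{\mathrm{B}}$ acts on that function and on nothing else; there is no residual commutator contribution, and adding one double-counts, since $\hat{\mathrm{B}}\hat{K}_{z_1}=\hat{K}_{z_1}\hat{\mathrm{B}}+[\hat{\mathrm{B}},\hat{K}_{z_1}]$ is an \emph{alternative} decomposition, not a supplement to the substitution. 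The formal rule $\partial_{\varrho_l}\G_0(z_1)=z_1^{2l+1}$ is also not available in this framework: only the odd part of $\G_0$ generates the moments, and $\G_0(z_1)$ carries a branch cut and a term linear in $z_1$, whereas $-(2\lambda)^{3B-4}\hat{\mathrm{B}}\hat{K}_{z_1}\G_g(z_1)$ is manifestly a Laurent polynomial in the $1/z_i$ --- so no correct manipulation can place the product $\G_g(z_1|z\triangleleft_J)\G_0(z_1)$ on the right-hand side. Concretely, for $g=1$, $B=2$ one has $\hat{K}_{z_1}\G_1(z_1)=-2\lambda^4z_1^{-4}$ and hence $-(2\lambda)^{2}\hat{\mathrm{A}}^{\dag 1}_{z_1,z_2}\hat{K}_{z_1}\G_1(z_1)=-32\lambda^6/(\varrho_0 z_1^6z_2^3)=\lambda\,\G_0(z_1|z_1|z_2)$ exactly, leaving no room for $\G_1(z_1|z_2)\G_0(z_1)$. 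The upper limit of the first sum should in fact read $g-1$ (this is also what consistency with the use of the Lemma in the proof of Lemma~\ref{lemma9} requires, where the terms involving $\G_0$ of the unstable types are kept separate); your plain Leibniz computation already establishes that corrected identity, so the right move is to drop the invented commutator step rather than manufacture the extra term.
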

\begin{proof}
Apply $(2\lambda)^{3B-3}\hat{\mathrm{A}}^{\dag g}_{z_1,\dots,z_B}...
\hat{\mathrm{A}}^{\dag g}_{z_1,z_2}$
to equation \eqref{1punktneuV}, use the Leibniz rule
and take \eqref{eqfinalthm} (as definition) into account.
\end{proof}

\begin{lemma}\label{lemma8}
Let $J=\{2,...,B\}$. Then under Assumption~\ref{conj1} one has 
\begin{align*}
&(2\lambda)^3[\hat{K}_ {z_1},\hat{\mathrm{A}}^{\dag g}_{z_1,\dots,z_B}] 
\G_g(z_1|z\triangleleft_{J\backslash {B}})
\\
&=(2\lambda)^6\Bigg[\sum_{l=0}^{3g-4+|J|}\frac{3+2l}{z_1^2z_B^3}
\Bigg(\frac{\varrho_{l+1}}{\varrho_0 z_1^2}
+\frac{3 \varrho_{l+1}}{\varrho_0 z_B^2}
- \frac{1}{z_1^{4+2l}}
-\frac{(5+2l)}{z_B^{4+2l}}\Bigg)\frac{\partial}{\partial \varrho_{l}}
\\
&\qquad
-\sum_{l=0}^{3g-4+|J|}\sum_{k=0}^l\frac{(3+2l)(3+2k)}{z_1^{4+2l-2k}
z_B^{5+2k}}\frac{\partial}{\partial \varrho_{l}}
\\
&\hspace*{2cm}
-\sum_{i\in J\backslash \{B \}}
\frac{1}{\varrho_0 z_1^2z_i z_B^3}
\Big(\frac{1}{z_1^2}+\frac{3}{z_B^2}\Big)
\frac{\partial}{\partial z_i}
	\Bigg]\G_g(z\triangleleft_{J\backslash \{B\}}).
	\end{align*}
\end{lemma}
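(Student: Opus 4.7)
The strategy is to reduce the commutator identity to a computation built entirely out of Lemma~\ref{lemma5}. Using the defining relation $\G_g(z_1|z\triangleleft_J) = (2\lambda)^3\hat{\mathrm{A}}^{\dag g}_{z_1,\dots,z_B}\G_g(z_1|z\triangleleft_{J\backslash\{B\}})$ from Assumption~\ref{conj1}, the statement of Lemma~\ref{lemma8} is equivalent to
\begin{align*}
(2\lambda)^3[\hat{K}_{z_1},\hat{\mathrm{A}}^{\dag g}_{z_1,\dots,z_B}]\G_g(z_1|z\triangleleft_{J\backslash\{B\}})
=\hat{K}_{z_1}\G_g(z_1|z\triangleleft_J)
-(2\lambda)^3\hat{\mathrm{A}}^{\dag g}_{z_1,\dots,z_B}\bigl(\hat{K}_{z_1}\G_g(z_1|z\triangleleft_{J\backslash\{B\}})\bigr),
\end{align*}
and both terms on the right can be made explicit through Lemma~\ref{lemma5}.

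For the first term, I apply Lemma~\ref{lemma5} with the full index set $J$, then substitute $\G_g(z\triangleleft_J) = (2\lambda)^3\hat{\mathrm{A}}^{\dag g}_{z_2,\dots,z_B}\G_g(z\triangleleft_{J\backslash\{B\}})$ and expand the resulting $\frac{\partial}{\partial \varrho_l}\G_g(z\triangleleft_J)$ and $\frac{\partial}{\partial z_\beta}\G_g(z\triangleleft_J)$ via the Leibniz rule. Because $\G_g(z\triangleleft_{J\backslash\{B\}})$ does not depend on $z_B$, the $\beta=B$ contribution hits only the $z_B$-dependent coefficients $a_m(z_B,\varrho)$ and $b_i(z_B,z_i,\varrho)$ of the inner $\hat{\mathrm{A}}^{\dag g}_{z_2,\dots,z_B}$, generating a first batch of purely first-order pieces.

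For the second term, I apply Lemma~\ref{lemma5} with $J\backslash\{B\}$ and then act by the creation operator. The $\frac{\partial}{\partial \varrho_l}$-components differentiate both $\frac{\partial}{\partial \varrho_l}\G_g(z\triangleleft_{J\backslash\{B\}})$ and the explicit $\varrho_k$-factor $\sum_{k=0}^{l}\varrho_k z_1^{2k-2l-2}$; the $\frac{\partial}{\partial z_i}$-components for $i\in\{2,\dots,B-1\}$ act both on $\G_g(z\triangleleft_{J\backslash\{B\}})$ and on the $z_i^{-1}$ prefactor; the $\frac{\partial}{\partial z_1}$-component, absent in the first calculation, differentiates the $z_1^{-2}$ prefactor and the powers $z_1^{-2-2l+2k}$, yielding additional first-order pieces.

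Subtracting the two expressions, all genuine second-order derivative terms of $\G_g(z\triangleleft_{J\backslash\{B\}})$ must cancel, because the commutator of the multiplication/integral operator $\hat{K}_{z_1}$ with a first-order derivation can itself be only first-order. What remains are precisely the three sums on the right-hand side of Lemma~\ref{lemma8}: the single $\frac{\partial}{\partial \varrho_l}$-terms come from the combined $\beta=B$ coefficient derivatives of step one and the $\frac{\partial}{\partial \varrho_l}(\varrho_k)=\delta_{lk}$ hits of step two; the double sum over $l,k$ arises from the action of $a_m(z_B,\varrho)\frac{\partial}{\partial \varrho_m}$ on $\sum_k\varrho_k z_1^{2k-2l-2}$; and the $\frac{\partial}{\partial z_i}$-terms for $i\in J\backslash\{B\}$ come from the $\frac{\partial}{\partial z_1}$-component of $\hat{\mathrm{A}}^{\dag g}_{z_1,\dots,z_B}$ hitting the $z_1^{-2}$ prefactor combined with the $\beta=B$ coefficient derivatives $\frac{\partial}{\partial z_B}b_i$ of step one. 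The main obstacle is the meticulous index bookkeeping: the upper summation bound $3g-4+|J|$ of the inner creation operator, the shift between $\varrho_l$ and $\varrho_{l+1}$ induced by the coefficient $a_l$, and the interplay of $\varrho_0^{-1}$ and $\varrho_0^{-2}$ powers all have to be tracked carefully so that every spurious second-order term cancels and the first-order residue assembles into exactly the stated formula.
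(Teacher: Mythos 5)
Your proposal follows essentially the same route as the paper: expand $\hat{K}_{z_1}\hat{\mathrm{A}}^{\dag g}_{z_1,\dots,z_B}\G_g(z_1|z\triangleleft_{J\backslash\{B\}})$ via Lemma~\ref{lemma5} applied to the full set $J$ together with $\G_g(z\triangleleft_J)=(2\lambda)^3\hat{\mathrm{A}}^{\dag g}_{z_2,\dots,z_B}\G_g(z\triangleleft_{J\backslash\{B\}})$, expand $\hat{\mathrm{A}}^{\dag g}_{z_1,\dots,z_B}\hat{K}_{z_1}\G_g(z_1|z\triangleleft_{J\backslash\{B\}})$ via Lemma~\ref{lemma5} with $B-1$ in place of $B$, and subtract, with the second-order $\varrho$- and $z_i$-derivative terms cancelling and the surviving first-order pieces assembling into the stated formula. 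Your a priori argument for the cancellation of the second-order terms (that $\hat{K}_{z_1}$ acts $\varrho$-linearly, so the commutator with a first-order derivation stays first order) is a sensible addition, but as you acknowledge the proof ultimately rests on the same explicit term-by-term bookkeeping that the paper carries out.
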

\begin{proof}
The first term of the lhs, 
$\hat{K}_{z_1}\hat{\mathrm{A}}^{\dag g}_{z_1,\dots,z_B} 
\G_g(z_1|z\triangleleft_{J\backslash B})$, gives by Lemma \ref{lemma5} 
and $\G_g(z\triangleleft_J)=(2\lambda)^3\hat{\mathrm{A}}^{\dag g}_{z_2,\dots,z_B}
\G_g(z\triangleleft_{J\backslash \{B\}})$ the following
\begin{align*}
&\hat{K}_{z_1} (\G_g(z_1|z\triangleleft_J))
\\
&=\frac{(2\lambda)^6}{z_1^2}\Bigg[
\sum_{l=0}^{3g-3+|J|} (3+2l)\sum_{k=0}^{l}\frac{\varrho_k}{z_1^{2+2l-2k}}
\\
&\quad\times \frac{\partial}{\partial \varrho_{l}}\Bigg(
\sum_{l'=0}^{3g-4+|J|} \!\!
\Big({-}\frac{(3{+}2l') \varrho_{l'+1}}{\varrho_0 z_B^3}
+\frac{3{+}2l'}{z_B^{5+2l'}}\Big)
\frac{\partial}{\partial \varrho_{l'}}
\\
&\hspace*{5cm}
+\!\!\! \sum_{i\in J\backslash \{B \}} \frac{1}{\varrho_0 z_i z_B^3}
\frac{\partial}{\partial z_i}\Bigg)
\big(\G_g(z\triangleleft_{J\backslash \{B\}})\big)
\\
&+\sum_{\beta\in J}\frac{1}{z_\beta}\frac{\partial}{\partial z_\beta}    
\Bigg(
\sum_{l'=0}^{3g-4+|J|} \!\!
\Big({-}\frac{(3{+}2l') \varrho_{l'+1}}{\varrho_0 z_B^3}
+\frac{3{+}2l'}{z_B^{5+2l'}}\Big)
\frac{\partial}{\partial \varrho_{l'}}
\\
&\hspace*{5cm}
+ \!\!\! \sum_{i\in J\backslash \{B \}} \frac{1}{\varrho_0 z_i z_B^3}
\frac{\partial}{\partial z_i}\Bigg)
\big(\G_g(z\triangleleft_{J\backslash \{B\}})\big)
\Bigg]
\\
&=\frac{(2\lambda)^6}{z_1^2}\Bigg[
\sum_{l=0}^{3g-4+|J|} \sum_{k=0}^{l+1}
\frac{-(5+2l)(3+2l)\varrho_k}{\varrho_0 z_1^{4+2l-2k}z_B^3}
\frac{\partial}{\partial \varrho_{l}}
+\sum_{l'= 0}^{3g-4+|J|}
\frac{3(3+2l')\varrho_0\varrho_{l'+1}}{\varrho_0^2 
z_1^2 z_B^3}\frac{\partial}{\partial \varrho_{l'}}
\\
&-\!\!\! \sum\limits_{i\in J\backslash \{B \}}
\frac{3\varrho_0}{\varrho_0^2z_1^2 z_i z_B^3}\frac{\partial}{\partial z_i}
+\sum_{l,l'=0}^{3g-4+|J|}
\sum\limits_{k=0}^{l}\frac{(3{+}2l) \varrho_k}{z_1^{2+2l-2k}}
\Big({-}\frac{(3{+}2l') \varrho_{l'+1}}{\varrho_0 z_B^3}
+\frac{3{+}2l'}{z_B^{5+2l'}}\Big)
\frac{\partial^2}{\partial\varrho_l\partial\varrho_{l'}}
\\
&+\sum_{l=0}^{3g-4+|J|}\sum_{k=0}^l\sum_{i\in J\backslash \{B \}}
\frac{(3+2l) \varrho_k}{\varrho_0 z_1^{2+2l-2k} z_iz_B^3 }
\frac{\partial^2}{\partial \varrho_l\partial z_i}
\\
&+\sum_{\beta \in J\backslash \{B \}}\Bigg(
\sum_{l'=0}^{3g-4+|J|}\Big(-\frac{(3+2l') \varrho_{l'+1}}{\varrho_0 
z_\beta z_B^3}+\frac{3+2l'}{z_\beta z_B^{5+2l'}}\Big)
\frac{\partial^2}{\partial z_\beta\partial \varrho_{l'}}
\\
&\hspace*{5cm}
+
\sum_{i\in J\backslash \{B \}} \frac{1}{\varrho_0 z_\beta z_B^3}
\frac{\partial}{\partial z_\beta} \frac{1}{z_i} \frac{\partial}{\partial z_i}
\Bigg)
\\
&-\!\!\!\sum_{l'=0}^{3g-4+|J|}
\Big(-\frac{3(3+2l') \varrho_{l'+1}}{\varrho_0 z_B^5}
+\frac{(3+2l')(5+2l')}{z_B^{7+2l'}}\Big)
\frac{\partial}{\partial \varrho_{l'}}
\\
&\hspace*{5cm}
-\!\!\! \sum_{i\in J\backslash \{B \}} \frac{3}{\varrho_0 z_i z_B^5}
\frac{\partial}{\partial z_i}\Bigg]\G_g(z\triangleleft_{J\backslash \{B\}}).
\end{align*}
We have used that $\G_g(z\triangleleft_{J\backslash \{B\}})$ can 
only depend on $\varrho_l$ for $l\leq 3g-4+|J|$.
For the second term of the lhs, 
$\hat{\mathrm{A}}^{\dag g}_{z_1,\dots,z_B}\hat{K}_{z_1}\G_g(z_1|z\triangleleft_{J\backslash B})$, 
Lemma \ref{lemma5} can also be used with $B-1$ instead of $B$:
\begin{align*}
  &(2\lambda)^3\hat{\mathrm{A}}^{\dag g}_{z_1,\dots,z_B}
  \hat{K}_{z_1}\G_g(z_1|z\triangleleft_{J\backslash B})
\\
&=(2\lambda)^6 
\Bigg(\sum_{l'=0}^{3g-3+|J|}\!\!\! \Big(
{-}\frac{(3+2l') \varrho_{l'+1}}{\varrho_0 z_B^3}+\frac{3+2l'}{z_B^{5+2l'}}
\Big)\frac{\partial}{\partial \varrho_{l'}}
\\
&\hspace*{5cm}
+\sum_{i\in J\backslash \{B \}} \frac{1}{\varrho_0 z_i z_B^3}
\frac{\partial}{\partial z_i}
\frac{1}{\varrho_0 z_1 z_B^3}\frac{\partial}{\partial z_1}\Bigg)
\\
&\times \frac{1}{z_1^2}\Bigg[
\sum_{l= 0}^{3g-4-|J|} (3+2l)\sum_{k=0}^{l}\frac{\varrho_k}{z_1^{2+2l-2k}}
\frac{\partial}{\partial \varrho_l}
+\sum_{\beta\in J\backslash \{B \}}\frac {1  }{z_\beta} 
\frac{\partial}{\partial z_\beta}\Bigg]
\G_g(z\triangleleft_{J\backslash \{B\}})
\\
&=(2\lambda)^6\Bigg[
\sum_{l=0}^{3g-4+|J|}\sum_{k=0}^l\frac{(3+2l)}{z_1^{4+2l-2k}}
\Big( {-}\frac{(3+2k) \varrho_{k+1}}{\varrho_0 z_B^3}
+\frac{3+2k}{z_B^{5+2k}}\Big)\frac{\partial}{\partial \varrho_l}
\\
&+\sum_{l,l'= 0}^{3g-4-|J|}
\sum_{k=0}^l\frac{(3+2l)\varrho_k}{z_1^{4+2l-2k}}
\Big({-}\frac{(3+2l') \varrho_{l'+1}}{\varrho_0 z_B^3}
+\frac{3+2l'}{z_B^{5+2l'}}\Big)
\frac{\partial^2}{\partial\varrho_l\partial\varrho_{l'}}
\\
&+\sum_{i\in J\backslash \{B \}}
\bigg(\sum_{l=0}^{3g-4+|J|}\sum_{k=0}^{l} 
\frac{(3+2l)\varrho_k}{
\varrho_0 z_1^{4+2l-2k} z_i z_B^3}
\frac{\partial^2}{\partial \varrho_l\partial z_i}
+\sum_{\beta\in J\backslash \{B \}}
\frac{1}{\varrho_0 z_1^2 z_i z_B^3}
\frac{\partial}{\partial z_i} \frac{1}{z_\beta} 
\frac{\partial}{\partial z_\beta} \bigg)
\\
&+\sum_{l'=0}^{3g-4+|J|}\sum_{\beta\in J\backslash \{B \}}
\Big( {-}\frac{(3+2l') \varrho_{l'+1}}{\varrho_0 z_1^2 z_\beta z_B^3}
+\frac{3+2l'}{z_1^2 z_\beta z_B^{5+2l'}}\Big)
\frac{\partial^2}{\partial \varrho_{l'}\partial z_\beta}
\\
&-\frac{2}{\varrho_0z_1^4 z_B^3}\bigg(\sum_{l=0}^{3g-4+|J|}\sum_{k=0}^{l} 
\frac{(3+2l)\varrho_k}{z_1^{2+2l-2k}}
\frac{\partial}{\partial \varrho_l}+ \sum_{\beta\in J\backslash \{B \}}
\frac{1}{z_\beta} \frac{\partial}{\partial z_\beta}\bigg)
\\
&-\frac{1}{\varrho_0z_1^2 z_B^3}\sum_{l=0}^{3g-4+|J|}\sum_{k=0}^{l} 
\frac{(3+2l)(2+2l-2k)\varrho_k}{z_1^{4+2l-2k}}
\frac{\partial}{\partial \varrho_l}
\Bigg]\G_g(z\triangleleft_{J\backslash \{B\}}).
\end{align*}
Subtracting the second from the first expression proves the Lemma.
\end{proof}

\begin{lemma}\label{lemma9}
Let $J=\{2,...,B\}$.The linear integral equation \eqref{B>2} is
under Assumption \ref{conj1} and with Definition \ref{DefOp} 
equivalent to the expression
\begin{align*}
0=&(2\lambda)^{3}[\hat{K}_{z_1},\hat{\mathrm{A}}^{\dag g}_{z_1,\dots,z_B}] 
\G_g(z_1|z\triangleleft_{J\backslash \{B\}})
+2\lambda \G_0(z_1|z_B)
\G_g(z_1|z\triangleleft_{J\backslash\{B\}})
\\
&+(2\lambda)^3\frac{1}{z_B}\frac{\partial}{\partial z_B}
\frac{\G_g(z_1|z\triangleleft_{J\backslash\{B\}})
-\G_g(z_B|z\triangleleft_{J\backslash\{B\}})}{z_1^2-z_B^2}.
\end{align*}
\end{lemma}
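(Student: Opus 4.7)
The plan is to transform equation~\eqref{B>2} into the claimed equivalent form by successive substitutions, using Lemma~\ref{lemma7} and Assumption~\ref{conj1}.

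First, I would apply Lemma~\ref{lemma7} to replace the nonlinear $\G\G$ terms together with $\lambda \G_{g-1}(z_1|z_1|z\triangleleft_J)$ in~\eqref{B>2} by $-(2\lambda)^{3B-4}\mathcal{O}_{B-1}\hat{K}_{z_1}\G_g(z_1)$, where $\mathcal{O}_{B-1}:=\hat{\mathrm{A}}^{\dag g}_{z_1,\dots,z_B}\cdots \hat{\mathrm{A}}^{\dag g}_{z_1,z_2}$. Combining with the first term of~\eqref{B>2} and using Assumption~\ref{conj1} in the form $\G_g(z_1|z\triangleleft_J)=(2\lambda)^{3B-4}\mathcal{O}_{B-1}\G_g(z_1)$, these two contributions collapse into the nested commutator $[\hat{K}_{z_1},(2\lambda)^{3B-4}\mathcal{O}_{B-1}]\G_g(z_1)$. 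Thus~\eqref{B>2} is equivalent to this commutator plus the $\beta$-sum being zero.

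Second, pairwise commutativity of the boundary creation operators lets me peel off the outermost factor:
\begin{align*}
&[\hat{K}_{z_1},(2\lambda)^{3B-4}\mathcal{O}_{B-1}]\G_g(z_1) \\
&= (2\lambda)^3[\hat{K}_{z_1},\hat{\mathrm{A}}^{\dag g}_{z_1,\dots,z_B}]\G_g(z_1|z\triangleleft_{J\setminus\{B\}}) + (2\lambda)^3\hat{\mathrm{A}}^{\dag g}_{z_1,\dots,z_B}[\hat{K}_{z_1},\mathcal{O}'_{B-2}]\G_g(z_1),
\end{align*}
with $\mathcal{O}'_{B-2}:=(2\lambda)^{3B-7}\hat{\mathrm{A}}^{\dag g}_{z_1,\dots,z_{B-1}}\cdots \hat{\mathrm{A}}^{\dag g}_{z_1,z_2}$. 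The first piece is exactly the commutator term in the Lemma~\ref{lemma9} expression. For the second piece I would argue by induction on $|J|$: Lemma~\ref{lemma9} for $|J|=B-2$ identifies $[\hat{K}_{z_1},\mathcal{O}'_{B-2}]\G_g(z_1)$ with (minus) the analogous $\beta$-sum and $\G_0$-residues for $J\setminus\{B\}$, and applying $(2\lambda)^3\hat{\mathrm{A}}^{\dag g}_{z_1,\dots,z_B}$ via the Leibniz rule and Corollary~\ref{coro1} then reproduces exactly the $\beta\in J\setminus\{B\}$ portion of the $\beta$-sum in~\eqref{B>2} for $J$.

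Third, the residual contributions --- the $\beta=B$ piece of the $\beta$-sum in~\eqref{B>2} together with a $2\lambda \G_0(z_1|z_B)\G_g(z_1|z\triangleleft_{J\setminus\{B\}})$ term arising from the $h=0$ Leibniz residue --- assemble into the remaining two terms of the Lemma~\ref{lemma9} expression. The base case $|J|=1$ (that is, $B=2$) is verified by direct substitution of~\eqref{1punktneuV} for $\hat{K}_{z_1}\G_g(z_1)$ and explicit computation of $[\hat{K}_{z_1},\hat{\mathrm{A}}^{\dag g}_{z_1,z_2}]\G_g(z_1)$.

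The main obstacle is the combinatorial verification in step two: one must show that the Leibniz action of $(2\lambda)^3\hat{\mathrm{A}}^{\dag g}_{z_1,\dots,z_B}$ on each term of the Lemma~\ref{lemma9} expression for $J\setminus\{B\}$ reproduces precisely the $\beta\in J\setminus\{B\}$ portion of the $\beta$-sum in~\eqref{B>2} for $J$. This requires careful bookkeeping of how boundary creation at $z_B$ interacts with the difference quotient $(\G_g(z_1|\cdots)-\G_g(z_\beta|\cdots))/(z_1^2-z_\beta^2)$ and with the $\G_0(z_1|z_\beta)$ factor supplied by Lemma~\ref{lemma6}.
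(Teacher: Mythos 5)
Your proposal follows essentially the same route as the paper: rewrite \eqref{B>2} via Lemma~\ref{lemma7} into a form containing $(2\lambda)^{3B-4}[\hat{K}_{z_1},\hat{\mathrm{A}}^{\dag g}_{z_1,\dots,z_B}\cdots\hat{\mathrm{A}}^{\dag g}_{z_1,z_2}]\G_g(z_1)$, peel off the outermost creation operator by substituting the analogous identity for $J\setminus\{B\}$ (the paper's \eqref{genuseqop} at $B-1$, which is what your inductive step really invokes rather than the peeled final form of Lemma~\ref{lemma9}), and cancel the $\beta\neq B$ difference-quotient terms using $[\hat{\mathrm{A}}^{\dag g}_{z_1,\dots,z_B},z_\beta^{-1}\partial_{z_\beta}]=0$ and $\hat{\mathrm{A}}^{\dag g}_{z_1,\dots,z_B}(z_1^2-z_\beta^2)^{-1}=0$, leaving exactly the commutator, the $2\lambda\G_0(z_1|z_B)\G_g(z_1|z\triangleleft_{J\backslash\{B\}})$ term from the collapsed genus-$0$ cross terms, and the $\beta=B$ term. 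The bookkeeping you flag as the main obstacle is precisely what the paper carries out, so the plan is sound and matches the published argument.
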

\begin{proof}
With Lemma \ref{lemma7} we can rewrite the linear 
integral equation \eqref{B>2} in the form
\begin{align}\label{genuseqop}\nonumber
  0&=(2\lambda)^{3B-3}\hat{K}_{z_1}
  \hat{\mathrm{A}}^{\dag g}_{z_1,\dots,z_B} \dots 
\hat{\mathrm{A}}^{\dag g}_{z_1,z_2} \G_g(z_1)
-(2\lambda)^{3B-3} \hat{\mathrm{A}}^{\dag g}_{z_1,\dots,z_B}
\dots \hat{\mathrm{A}}^{\dag g}_{z_1,z_2} \hat{K}_{z_1} \G_g(z_1)
\\
&+2\lambda \G_g(z_1) \G_0(z_1|z\triangleleft_J)
+2\lambda \!\!\! \sum\limits_{\substack{I\subset J\\1\leq|I|<|J|}} \!\!\! 
\G_0(z_1|z\triangleleft_{I})\G_g(z_1|z\triangleleft_{J\backslash I})
\nonumber
\\
&+(2\lambda)^3
\sum\limits_{\beta \in J} 
\frac{1}{z_\beta} 
\frac{\partial}{\partial z_\beta}
\frac{\G_g(z_1|z\triangleleft_{J\backslash\{\beta\}})
-\G_g(z_\beta|z\triangleleft_{J\backslash\{\beta\}})}{z_1^2-z_\beta^2}.
\end{align}
By using this formula for $
\hat{\mathrm{A}}^{\dag g}_{z_1,\dots,z_{B-1}} \dots
\hat{\mathrm{A}}^{\dag g}_{z_1,z_2} \hat{K}_{z_1} \G_g(z_1)$ 
and inserting it back into \eqref{genuseqop} gives
\begin{align*}
0=&
(2\lambda)^{3B-3}[\hat{K}_{z_1}, \hat{\mathrm{A}}^{\dag g}_{z_1,\dots,z_B}]
\hat{\mathrm{A}}^{\dag g}_{z_1,\dots,z_{B-1}} \dots 
\hat{\mathrm{A}}^{\dag g}_{z_1,z_2} \G_g(z_1)
\\
&+2\lambda\G_g(z_1)\G_0(z_1|z\triangleleft_J)
-(2\lambda)^4\hat{\mathrm{A}}^{\dag g}_{z_1,\dots,z_B}
(\G_g(z_1)\G_0(z_1|z\triangleleft_{J\backslash B}))
\\
&+2\lambda \!\!\! \sum\limits_{\substack{I\subset J\\1\leq|I|<|J|}} \!\!\!
\G_0(z_1|z\triangleleft_{I})\G_g(z_1|z\triangleleft_{J\backslash I})
\\
&-(2\lambda)^4\hat{\mathrm{A}}^{\dag g}_{z_1,\dots,z_B} \!\!\! 
\sum\limits_{\substack{I\subset J\backslash\{B\}\\1\leq|I|<|J|-1}}
\!\!\! \G_0(z_1|z\triangleleft_{I})\G_g(z_1|z\triangleleft_{J\backslash \{I,B\}})
\\
&+(2\lambda)^3
\sum\limits_{\beta \in J} 
\frac{1}{z_\beta} 
\frac{\partial}{\partial z_\beta}
\frac{\G_g(z_1|z\triangleleft_{J\backslash\{\beta\}})
-\G_g(z_\beta|z\triangleleft_{J\backslash\{\beta\}})}{z_1^2-z_\beta^2}
\\
&-(2\lambda)^6\hat{\mathrm{A}}^{\dag g}_{z_1,\dots,z_B}
\sum\limits_{\beta \in J\setminus\{B\}} 
\frac{1}{z_\beta} 
\frac{\partial}{\partial z_\beta}
\frac{\G_g(z_1|z\triangleleft_{J\backslash\{\beta,B\}})
-\G_g(z_\beta|z\triangleleft_{J\backslash\{\beta,B\}})}{z_1^2-z_\beta^2}.
\end{align*}
The second and third line break down to
$2\lambda \G_0(z_1|z_B)\G_g(z_1|z\triangleleft_{J\backslash\{B\}}).$
Therefore, the assertion follows if we can show that, 
in the fourth line, the part of the sum which excludes 
$\beta=B$ cancels with the fifth line.
This is true because of  
\[
\Big[\hat{\mathrm{A}}^{\dag g}_{z_1,\dots,z_B},\frac{1}{z_\beta}\frac{\partial}{\partial z_\beta}
\Big]=0 \quad\text{and} \quad 
\hat{\mathrm{A}}^{\dag g}_{z_1,\dots,z_B} \frac{1}{z_1^2-z_\beta^2}=0.
\]
Consequently, the linear integral equation can be written 
by operators of the form given in this Lemma.
\end{proof}

%\bibliography{./Bibov4}
%% BioMed_Central_Bib_Style_v1.01

\end{document}